\theoremstyle{definition}
\newtheorem{observation}{Observation}
\newtheorem{assumption}{Assumption}
\theoremstyle{plain}
\newtheorem{theorem}{Theorem}
\newtheorem{proposition}{Proposition}
\newtheorem{corollary}{Corollary}
\newtheorem{lemma}{Lemma}
\newcommand{\R}{\mathbb{R}}
\renewcommand{\P}{\operatorname{P}}
\newcommand{\E}{\operatorname{E}}
\newcommand{\Var}{\operatorname{Var}}
\newcommand{\Cov}{\operatorname{Cov}}
\newcommand{\mad}{\operatorname{MAD}}
\newcommand{\given}{\mathbin{\mid}}
\renewcommand{\d}{\mathrm{d} \,}
\newcommand{\T}{^{\mathsf{T}}}
\newcommand{\indic}[1]{\mathbbm{1}(#1)}
\newcommand\independent{\protect\mathpalette{\protect\independenT}{\perp}}
\def\independenT#1#2{\mathrel{\rlap{$#1#2$}\mkern2mu{#1#2}}}
\title{A comparison of Kaplan--Meier-based inverse probability of censoring weighted regression methods}
\author{Morten Overgaard}
\date{\footnotesize Department of Public Health, Aarhus University, Denmark \\
E-mail: moov@ph.au.dk}
\begin{document}

\maketitle

\begin{abstract}
Weighting with the inverse probability of censoring is an approach to deal with censoring in regression analyses where the outcome may be missing due to right-censoring. In this paper, three separate approaches involving this idea in a setting where the Kaplan--Meier estimator is used for estimating the censoring probability are compared. 
In more detail, the three approaches involve weighted regression, regression with a weighted outcome, and regression of a jack-knife pseudo-observation based on a weighted estimator. 
Expressions of the asymptotic variances are given in each case and the expressions are compared to each other and to the uncensored case. In terms of low asymptotic variance, a clear winner cannot be found. Which approach will have the lowest asymptotic variance depends on the censoring distribution.
Expressions of the limit of the standard sandwich variance estimator in the three cases are also provided, revealing an overestimation under the implied assumptions.
\end{abstract}

\section{Introduction}

In some settings of survival analysis, the primary interest may concern one or a few time points, where measures of interest could be risk or restricted mean survival time. By focusing on one or a few time points, the analyst can easily communicate results. As pointed out by \cite{Martinussen2023}, a particular time point may have a special role in a specific clinical subject area since disease-free survival to such a time point may be taken to indicate cure of the patient. 
A relevant tool for adjusted comparisons of the measures of interest between groups in this sort of setting is a regression method such as logistic regression.
The issue of right censoring becomes a missing data problem for such regression methods by potentially rendering the outcome unobserved.

This paper is concerned with a regression setting and the handling of outcomes that may be missing due to right censoring. Here, the handling will in some way be by use of inverse probability weighting. The weights are related to the censoring distribution and for this reason such approaches may be considered inverse probability of censoring weighting (IPCW) approaches. 
On the other hand, this may be considered a misnomer since the weights are, much like sampling weights, estimates of the inverse probabilities of observing the outcomes in question rather than probabilities of censoring. 
This paper will consider three approaches where the weights are based on Kaplan--Meier estimates of the censoring distribution, either based on the entire sample or calculated in strata. An important assumption will be independent censoring.

One approach involves a Horvitz--Thompson-type weighting of the entire individual contribution to an estimating equation, which has been considered in a survival setting by e.g.\ \cite{Robins1992}. 
A second approach, which was suggested by \cite{Scheike2008} for assessing the influence of covariates on a cumulative incidence curve, involves weighting only the potentially missing outcome for each individual.
The third approach that is considered here was introduced by \cite{Andersen2003} and involves replacing the potentially missing outcomes by jack-knife pseudo-values, here based on an inverse probability weighting estimator such as the Kaplan--Meier estimator. 

The paper by \cite{blanche2022logistic} studies and compares the first and second approach in a setting where the outcome is having a certain event within a certain time and where a logistic regression is considered. A main conclusion of that paper is that which method is more efficient depends on the censoring distribution. 
Another conclusion is that a naive approach to variance estimation will lead to too large, or conservative, variance estimates. 

The third approach, also known as the pseudo-observation method, is reviewed by \cite{Andersen2010} where some further background can be found.
Much of the theory on the method that is useful for the purposes of this paper can be found in \cite{Overgaard2019}. 
The paper by \cite{Andersen2010} also suggests stratifying calculation on a variable if censoring depends on that variable, which is an approach studied in more detail in this paper.

Although the theory of these three different approaches seems clear, at least in some settings, it is, however, not clear how the three approaches compare, especially how the pseudo-observation method compares with the two other approaches in terms of efficiency. 
Results by \cite{Binder2014} and \cite{parner2023regression} indicate that the pseudo-observation method makes more efficient use of data compared to the outcome weighting approach,
but it remains unclear why this is the case and how generally this holds.

In this paper, a comparison of the three approaches is carried out theoretically, in a theoretical example, and in simulations. The theoretical comparison is in terms of the asymptotics of the three approaches in a general setting which is laid out within a common framework that includes stratification of the weight calculation. 
The topic of naive variance estimation using the standard sandwich variance estimator is considered for the three approaches. Some insights into the biases of the three approaches when the independent censoring assumption is violated are also gained.

In section~\ref{sec:regression_analysis}, the setting and main results on, in particular, the asymptotic variances are presented. In section~\ref{sec:theoretical_example}, a theoretical example offers insights into how the three approaches compare in a specific setting. In section~\ref{sec:simulation}, simulations are used to illustrate, corroborate, and challenge the asymptotic results. 
The paper ends with a discussion of results in section~\ref{sec:discussion} and an appendix with technical results in section~\ref{sec:appendix}.

\section{Regression analysis with a censored outcome}
\label{sec:regression_analysis}

Suppose $Y$ is an outcome of interest and a model is assumed which states that a parameter vector $\beta \in \R^p$ exists such that $\E(Y \given X) = \mu(\beta; X)$ for a certain function $\mu$ where $X$ is a vector of covariates. It is desired to estimate the true $\beta$ based on $n$ observations, i.e.\ $n$ independent replications of that experiment. 
A standard approach would be to solve an estimating equation of the type
\begin{equation} \label{eq:Un_original}
U_n(\beta) := \sum_{i=1}^n A(\beta; X_i) (Y_i - \mu(\beta; X_i)) = 0
\end{equation}
for a suitable $p$-dimensional vector function $A$. 
When $Y$ is not always observed due to censoring, this approach can no longer be taken. 
To be more specific, a competing risks setting is now considered. Suppose $Y$ is a function of a failure time $T > 0$ and a failure type $D \in \{1, \dots, d\}$, but determined by a time point $t>0$. In other words $Y = y(T \wedge t, D \indic{T \leq t})$ for a reasonable function $y$. Outcomes of this type include 
\begin{align}
Y &= \indic{T > t}, \quad & & \text{survival to time } t. \\
Y &= \indic{T \leq t, D = j}, \quad& & \text{failure of a specific type before time } t. \\
Y &= T \wedge t, \quad & & \text{survival time restricted to time } t. \\
Y &= (t - T \wedge t) \indic{D = j}, \quad & & \text{time lost to a specific failure before } t.
\end{align}
More generally, $T$ may be some event time and $D$ an event type. 
Possible models of $\E(Y\given X)$ include examples from generalized linear models: a linear model $\mu(\beta; x) = \beta\T x$; a relative or exponential model, $\mu(\beta; x) = \exp(\beta \T x)$; or a logistic model  $\mu(\beta; x) = 1/(1+\exp(-\beta\T x))$. 
The logistic model is primarily appropriate for the dichotomous outcomes. 
In all examples, the vector $x$ likely includes a constant term and should more generally be considered a result of a vector function applied to an original set of covariates since it should also be able to hold interaction terms, for instance.
In the generalized linear model setting, the function $A$ depends on the choice of link function, which determines the structure of $\mu$, and on the choice of family. The Gaussian family corresponds to the choice of $A(\beta;x) = \frac{\partial}{\partial \beta} \mu(\beta;x)$. A simple choice is $A(\beta;x)  = x$ and this is obtained when the link function used is canonical for the specified family.

Next, suppose $T$ and $D$ are not always observed due to censoring at a censoring time $C > 0$. Instead $\tilde T = T \wedge C$, the observed exit time, and $\tilde D = D \indic{T \leq C}$, the observed exit type with 0 denoting censoring, are observed. An outcome available at time $t$ as above is observed if $C \geq T \wedge t$ and can in this case be written as $Y = y(\tilde T \wedge t, \tilde D \indic{\tilde T \leq t})$.
To handle censoring, different weighting approaches are considered. The weights are allowed to depend on covariates only through a categorization $Z$ of the covariates $X$. 
If $G$ denotes the conditional survival function of the censoring time $C$, that is, $G(s \given z) = \P(C > s \given Z = z)$, a weight to consider is then
\begin{equation}
W = \frac{\indic{C \geq T \wedge t}}{G(T \wedge t- \given Z)} = \frac{\indic{\tilde T \geq t} + \indic{\tilde T < t, \tilde D \neq 0}}{G(\tilde T \wedge t- \given Z)}.
\end{equation}
The exact $G$ may well be unknown, and here an approach where an estimate of the censoring distribution is used instead is considered. 
In the following, estimation based on $n$ independent replications of this type of experiment where information on $(\tilde T, \tilde D, X)$ is available is considered. This setting, where censoring times are also censored by event times, allows for the use of the Kaplan--Meier estimator of $G$ within each stratum of $Z$. In order to handle ties systematically and appropriately according to the described setting where the event time $T$ takes priority over the censoring time $C$, a slight variation of the Kaplan--Meier estimator will in fact be used and is defined precisely in the appendix in equation~\eqref{eq:G_est}. This estimate is called $\hat G$. So, the applied weight will instead be
\begin{equation}
\hat W = \frac{\indic{C \geq T \wedge t}}{\hat G(T \wedge t- \given Z)} = \frac{\indic{\tilde T \geq t} + \indic{\tilde T < t, \tilde D \neq 0}}{\hat G(\tilde T \wedge t- \given Z)}.
\end{equation}

Three regression approaches based on the weights $\hat W$ are now considered: weighting the individual contribution and solving
\begin{equation}
U_{n,\textit{ind}}(\beta) := \sum_{i=1}^n A(\beta; X_i) \hat W_i (Y_i - \mu(\beta; X_i)) = 0,
\end{equation}
weighting only the potentially censored outcome and solving
\begin{equation}
U_{n,\textit{out}}(\beta) := \sum_{i=1}^n  A(\beta; X_i) (\hat W_i Y_i - \mu(\beta; X_i)) = 0,
\end{equation}
and replacing the outcome with a jack-knife pseudo-observation, $\hat \theta_i$, from a weighted estimator and solving 
\begin{equation}
U_{n,\textit{pse}}(\beta) := \sum_{i=1}^n  A(\beta; X_i) (\hat \theta_i - \mu(\beta; X_i)) = 0.
\end{equation}
The pseudo-observation will be defined as
\begin{equation}
\hat \theta_i = n \hat \theta - (n-1) \hat \theta^{(i)}
\end{equation}
where $\hat \theta$ is the overall estimate of $\E(Y)$, 
\begin{equation} \label{eq:theta_estimate}
\hat \theta = \frac{1}{n} \sum_{j=1}^n \hat W_j Y_j = \frac{1}{n} \sum_{j=1}^n \frac{\indic{C_j \geq T_j \wedge t}}{\hat G(\tilde T_j \wedge t- \given Z_j)} Y_j,
\end{equation}
and $\hat \theta^{(i)}$ is the estimate obtained by leaving out observation $i$, which can be written
\begin{equation}
\hat \theta^{(i)} = \frac{1}{n-1} \sum_{j\neq i} \hat W_j^{(i)} Y_j = \frac{1}{n-1} \sum_{j\neq i} \frac{\indic{C_j \geq T_j \wedge t}}{\hat G^{(i)}(\tilde T_j \wedge t- \given Z_j)} Y_j
\end{equation}
if $\hat G^{(i)}$ is the stratified Kaplan--Meier estimate of $G$ where observation $i$ has been left out.
Since $\hat G^{(i)}(s \given z) = \hat G(s \given z)$ when observation $i$ does not belong to stratum $z$, the jack-knife pseudo-observation of the overall estimator of \eqref{eq:theta_estimate} can also be calculated as the jack-knife pseudo-observation of the within-stratum estimate and equals 
\begin{equation}
    \hat \theta_i = \hat W_i Y_i + \sum_{j\neq i: Z_j = Z_i} (\hat W_j - \hat W_j^{(i)}) Y_j.
\end{equation}
In other words, the pseudo-observation includes the weighted outcome from before and an additional term that takes a potential influence of the observation on the estimate of the weight into account. 

It may be noted that the inverse probability weighted estimates of \eqref{eq:theta_estimate} include as examples the Kaplan--Meier estimate for the outcome $Y = \indic{T > t}$, the area under the Kaplan--Meier curve for the outcome $Y = T \wedge t$, and other common estimators of the mentioned examples of outcomes $Y$. \cite{satten2001kaplan} established this type of result for the Kaplan--Meier-based estimate of a failure probability.

Generally, the following assumptions on the censoring mechanism are made.
\begin{assumption} \label{ass:indep_cens}
Within strata of $Z$, the censoring time $C$ is independent of event time $T$, type $D$, and covariates $X$. Symbolically, $C \independent (T, D, X) \given Z$.
\end{assumption}

\begin{assumption} \label{ass:posi}
It is possible to observe the information of interest in all relevant strata of $Z$. That is, $G(t- \given z) > 0$ for (almost) all $z$. 
\end{assumption}

The assumptions ensure that $W$ is well defined and that $\E(W \given T, D, X) = 1$, which make the weights suitable for compensating for the missing information. 
The cumulative censoring hazard can also be defined without issue in the relevant interval as follows. Let $F_0(s \given z) = \P(C \leq s \given Z=z)$. Define the corresponding cumulative censoring hazard by $\Lambda(s \given z) = \int_0^s G(u- \given z)^{-1} F_0(\d u \given z)$ for $s \leq t$. No assumption of continuity of these functions is made. 

Other assumptions are made in the following. These assumptions include some usual requirements for the estimating procedure of the uncensored case in \eqref{eq:Un_original} to work, and a further positivity assumption, say $\P(T > t \given Z) > 0$ almost surely, to ease the handling of the estimation of the censoring distribution.
To be vague, these assumptions are collectively termed \textit{regularity conditions} in what follows.
Some further details on this matter are given in the appendix.
The approach presented in the appendix even puts restrictive assumptions on the outcome $Y$, in particular boundedness. These assumptions are met by the four presented examples of outcomes, but do not seem strictly necessary for the results presented in the following.

Under regularity conditions, the original estimating equation, \eqref{eq:Un_original}, based on uncensored information has, with a high probability for large $n$, as solutions consistent and asymptotically normally distributed estimates of the true $\beta$, which will be denoted $\beta_0$ in the following. This can be seen from results on $Z$-estimators, see for instance Chapter~5 of \cite{Vaart1998}. To be more specific, the estimates will be asymptotically linear with influence function 
\begin{equation}
    \dot \beta(T, D, X) =  -B(X) (Y - \mu(\beta_0; X)),
\end{equation}
where $B(X) = J(\beta_0)^{-1} A(\beta_0; X)$ and $J(\beta) = \E( - A(\beta; X) \frac{\partial}{\partial \beta} \mu(\beta; X))$.
This means that
\begin{equation}
    \sqrt{n}(\hat \beta_n - \beta_0) = \sqrt{n}\frac{1}{n} \sum_{i=1}^n \dot \beta(T_i, D_i, X_i) + o_{\P}(1),
\end{equation}
so that the asymptotic variance is $\Var(\dot \beta(T, D, X))$.

Under similar regularity conditions, and the assumptions on the censoring mechanism mentioned above, the three weighting approaches similarly have as solutions estimates that are asymptotically linear with a specific influence function. The influence functions compare to the influence function of the uncensored problem in a certain way, as laid out in the theorem below.
The notation $M(s \given Z) = \indic{C \leq s} - \int_0^s \indic{C \geq u} \Lambda(\d u \given Z)$ will be used for a martingale related to the censoring.

\begin{theorem} \label{thm:influence}
Under Assumption~\ref{ass:indep_cens}, Assumption~\ref{ass:posi}, and regularity conditions, the three approaches have as solutions consistent and asymptotically normal parameter estimates with influence functions on the form
\begin{equation} \label{eq:influence_function_beta_type}
\begin{aligned}
\dot \beta_\textup{type}(\tilde T, \tilde D, X) &= \dot \beta(T, D, X) \\
& \hspace{-6em} + \int_0^{t-} \big(\phi_\textup{type}(s; T, D, X) - \nu_\textup{type}(s \given Z) \big) \indic{T > s} \frac{1}{G(s \given Z)} M(\d s \given Z),
\end{aligned}
\end{equation}
where 
$\nu_\textup{type}(s \given Z) = \E(\phi_\textup{type}(s; T, D, X) \given T > s, Z)$. The three types have
\begin{align}
\phi_{ind}(s; T, D, X) &= B(X) (Y- \mu(\beta_0; X))    \\
\phi_{out}(s; T, D, X) &= B(X) Y \\
\phi_{pse}(s; T, D, X) &= B(X) (Y- \E(Y \given T > s, Z)).
\end{align}
\end{theorem}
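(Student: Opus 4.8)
The plan is to treat each of the three estimators as the solution of an estimating equation and apply standard $Z$-estimator theory (Chapter~5 of \cite{Vaart1998}), reducing everything to (i) consistency of $\hat\beta_{n,\textit{type}}$ and (ii) an asymptotically linear representation of $\tfrac1{\sqrt n}U_{n,\textit{type}}(\beta_0)$. For (i) I would show that $\tfrac1n U_{n,\textit{type}}(\beta)$ converges locally uniformly to $\E\big(A(\beta;X)(\E(Y\given X)-\mu(\beta;X))\big)$, whose only zero is $\beta_0$ by the regularity conditions; the inputs are that $\hat W_i$ is consistent for $W_i$ with $\E(W\given T,D,X)=1$ under Assumptions~\ref{ass:indep_cens}--\ref{ass:posi}, so the two weighting equations have the right limit, and that the extra jack-knife terms in $\hat\theta_i$ wash out because the censoring martingale is conditionally mean-zero given the outcome data within strata under Assumption~\ref{ass:indep_cens}. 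For (ii) a first-order Taylor expansion of $0=U_{n,\textit{type}}(\hat\beta_{n,\textit{type}})$ around $\beta_0$, together with $\tfrac1n\partial_\beta U_{n,\textit{type}}(\beta)\to J(\beta_0)$ (the same limit in all three cases, since the $\partial_\beta A$ term has mean zero at $\beta_0$ and any weighting in the $\partial_\beta\mu$ term averages to one), yields $\sqrt n(\hat\beta_{n,\textit{type}}-\beta_0)=-J(\beta_0)^{-1}\tfrac1{\sqrt n}U_{n,\textit{type}}(\beta_0)+o_{\P}(1)$, so it remains to expand $\tfrac1{\sqrt n}U_{n,\textit{type}}(\beta_0)$.

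For that expansion I would write each individual contribution as the uncensored (``oracle'') contribution plus a correction due solely to estimating $G$. After multiplying by $-J(\beta_0)^{-1}$, the decomposition reads
\begin{equation*}
-J(\beta_0)^{-1}U_{n,\textit{type}}(\beta_0)=\sum_{i=1}^n\Big[\dot\beta(T_i,D_i,X_i)-(\hat W_i-1)\,\phi_{\textit{type}}(T_i,D_i,X_i)\Big],\qquad \textit{type}\in\{\textit{ind},\textit{out}\},
\end{equation*}
with $\phi_{ind}(T,D,X)=B(X)(Y-\mu(\beta_0;X))$ and $\phi_{out}(T,D,X)=B(X)Y$ (these do not depend on $s$). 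The term $(\hat W_i-1)h_i$, for $h_i$ a bounded function of $(Y_i,X_i)$, I would split as $(W_i-1)h_i+(\hat W_i-W_i)h_i$. The first summand equals $-\int_0^{t-}\indic{T_i>s}\,h_i\,G(s\given Z_i)^{-1}M_i(\d s\given Z_i)$ by an exact inverse-probability identity; it is the contribution the weighting would make if $G$ were known, and it produces the $\phi_{\textit{type}}$ part of \eqref{eq:influence_function_beta_type}. Into the second summand I would insert the product-limit expansion of $\hat G$ in terms of the censoring counting-process martingales, which, written schematically and uniformly in $s$, is $\hat G(s\given z)/G(s\given z)-1=-\tfrac1n\sum_{j:Z_j=z}\int_0^s\indic{T_j>u}\big(\pi(z)\,\rho(u\given z)\big)^{-1}M_j(\d u\given z)+o_{\P}(1)$, with $\rho(u\given z)$ the limiting proportion at risk just before $u$ in stratum $z$; this turns the correction into a double sum over $i$ and $j$ with $Z_i=Z_j$, and a law-of-large-numbers step on the $i$-index collapses it to a single martingale-integral sum whose coefficient, $\E\big(\phi_{\textit{type}}(T,D,X)\indic{T>u}\given Z\big)/\rho(u\given Z)$, equals $\nu_{\textit{type}}(u\given Z)/G(u\given Z)$ — the $-\nu_{\textit{type}}$ term. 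Adding the pieces and identifying the coefficient of $\indic{T>s}G(s\given Z)^{-1}M(\d s\given Z)$ gives $\dot\beta_{ind}$ and $\dot\beta_{out}$.

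For the pseudo-observation estimator the same oracle-plus-correction split applies with $\hat\theta_i$ in place of $\hat W_iY_i$; since $\hat\theta_i$ is the jack-knife pseudo-value, within its own stratum, of the inverse-probability weighted estimator of $\E(Y\given Z)$, I would invoke the asymptotic theory of jack-knife pseudo-observations of \cite{Overgaard2019}. That theory represents the contribution of the $\hat\theta_i$ through the first-order influence function of the underlying weighted estimator — which, by the computation above, reproduces $\dot\beta$ together with the $B(X)\big(Y-\E(Y\given T>s,Z)\big)$ part of $\phi_{pse}$ — plus a second-order von Mises contribution that, in contrast to the case of an unweighted average, does not vanish once it is contracted with the covariate weights $A(\beta_0;X_i)$; this second-order contribution is what supplies the $-\nu_{pse}$ term. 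Assembling these gives $\dot\beta_{pse}$. In all three cases asymptotic normality then follows from the central limit theorem for the resulting sum of i.i.d.\ terms, which are square-integrable by the boundedness of $Y$ and the positivity assumptions, with asymptotic variance $\Var(\dot\beta_{\textit{type}}(\tilde T,\tilde D,X))$.

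The step I expect to be the main obstacle is making the product-limit/martingale expansion rigorous and uniform: one needs an expansion of $\hat G(\,\cdot\,\given z)$ valid uniformly over $[0,t)$ with an $o_{\P}$ remainder, control of the resulting quadratic cross terms as (degenerate) $U$-statistics, and — since no continuity of $G$, $\Lambda$ or $F_0$ is assumed — careful handling of ties and of left- versus right-limits, which is exactly why the modified Kaplan--Meier estimator of \eqref{eq:G_est} and the additional assumption $\P(T>t\given Z)>0$ are imposed; this is the part carried out in detail in the appendix. A second, closely related obstacle, specific to the pseudo-observation case, is controlling the second-order remainder of the jack-knife expansion — showing it is $o_{\P}(\sqrt n)$ after contraction with $A(\beta_0;X_i)$ — which is the substantive ingredient borrowed from \cite{Overgaard2019}.
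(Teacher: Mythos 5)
Your proposal is correct and follows essentially the same route as the paper's appendix proof: the oracle-plus-correction decomposition of the weighted estimating functions, with the Kaplan--Meier expansion turned into a double sum and projected onto single martingale integrals to produce the $-\nu_\textup{type}$ terms, is exactly the content of Lemmas~\ref{lemma:G_approximation}, \ref{lemma:contribution_approximation}, \ref{lemma:asymptotic_equivalence_estimating_equation}, \ref{lemma:estimator_exist} and \ref{lemma:rep_independent}, and your treatment of the pseudo-observations via a first- plus second-order von Mises expansion, with the $-\nu_\textit{pse}$ term arising from contracting the degenerate second-order kernel with $A(\beta_0;X)$, matches Lemma~\ref{lemma:influence_theta}. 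The only substantive difference is how the uniform expansion is made rigorous: the paper uses $p$-variation functional differentiability following \cite{Overgaard2019} and \cite{Overgaard2017} rather than a martingale/Duhamel argument, which is precisely the technical step you deferred.
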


The proof can be found in the appendix.

The last term of \eqref{eq:influence_function_beta_type} is structured with a part depending only on the underlying competing risks data and its distribution and a part depending only on the underlying censoring time and its distribution. 
Under the assumptions, this structure implies a similar structure of the resulting asymptotic variance matrices which makes clear how the variance has been increased by censoring and how the variance depends on the censoring distribution.

\begin{corollary} \label{cor:as_variance_express}
In the setting of Theorem~\ref{thm:influence}, the asymptotic variances $\Sigma_\textup{type} = \Var(\dot \beta_\textup{type}(\tilde T, \tilde D, X))$ can be expressed as
\begin{equation} \label{eq:Sigma_type}
\Sigma_\textup{type} = \Sigma + \E\big( \int_0^{t-} \Phi_\textup{type}(s \given Z) \frac{S(s \given Z)}{G(s \given Z)} \Lambda(\d s \given Z) \big)
\end{equation}
where $\Sigma = \Var(\dot \beta(T, D, X))$ is the variance of the uncensored problem, $\Phi_\textup{type}(s \given Z) = \Var(\phi_\textup{type}(s; T, D, X) \given T > s, Z)$, and $S(s \given z) = \P(T > s \given Z = z)$. Concretely, 
\begin{align}
\Phi_{ind}(s \given Z) &= \Var(B(X) (Y - \mu(\beta_0; X)) \given T > s, Z)   \\
\Phi_{out}(s \given Z) &= \Var(B(X) Y \given T > s, Z) \\
\Phi_{pse}(s \given Z) &= \Var(B(X) (Y - \E(Y \given T > s, Z)) \given T > s, Z).
\end{align}
\end{corollary}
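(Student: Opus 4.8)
The plan is to compute $\Sigma_\textup{type}=\Var(\dot\beta_\textup{type}(\tilde T,\tilde D,X))$ directly from the influence function \eqref{eq:influence_function_beta_type} supplied by Theorem~\ref{thm:influence}, exploiting the martingale structure of its second term. I would write that term as $\int_0^{t-}H_\textup{type}(s)\,M(\d s\given Z)$ with $H_\textup{type}(s)=\big(\phi_\textup{type}(s;T,D,X)-\nu_\textup{type}(s\given Z)\big)\indic{T>s}\,G(s\given Z)^{-1}$, and note that $H_\textup{type}(s)$ is a function of $(T,D,X)$ alone (recall $Z$ is a function of $X$) and does not involve the censoring time $C$. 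Conditionally on $(T,D,X)$ the process $s\mapsto H_\textup{type}(s)$ is then deterministic, and by Assumption~\ref{ass:indep_cens} conditioning on $(T,D,X)$ leaves the conditional law of $C$ given $Z$ unchanged, so $M(\cdot\given Z)$ is still a martingale and $\E\big(\int_0^{t-}H_\textup{type}(s)\,M(\d s\given Z)\given T,D,X\big)=0$. Since $\dot\beta(T,D,X)$ is a function of $(T,D,X)$, the cross term in the variance then vanishes and $\Sigma_\textup{type}=\Sigma+\Var\big(\int_0^{t-}H_\textup{type}(s)\,M(\d s\given Z)\big)$.

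Next I would evaluate this remaining variance by conditioning on $(T,D,X)$ once more. The predictable variation process of the censoring martingale is $\langle M\rangle(s\given Z)=\int_0^s\indic{C\geq u}\big(1-\Delta\Lambda(u\given Z)\big)\Lambda(\d u\given Z)$, so, since $H_\textup{type}(s)$ does not depend on $C$,
\[
\Var\Big(\int_0^{t-}H_\textup{type}(s)\,M(\d s\given Z)\given T,D,X\Big)=\int_0^{t-}H_\textup{type}(s)H_\textup{type}(s)\T\,\E\big(\indic{C\geq s}\given T,D,X\big)\big(1-\Delta\Lambda(s\given Z)\big)\,\Lambda(\d s\given Z).
\]
By Assumption~\ref{ass:indep_cens}, $\E(\indic{C\geq s}\given T,D,X)=G(s-\given Z)$, and from $\Lambda(\d s\given Z)=G(s-\given Z)^{-1}F_0(\d s\given Z)$ together with $F_0=1-G$ one obtains the identity $G(s-\given Z)\big(1-\Delta\Lambda(s\given Z)\big)=G(s\given Z)$. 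Substituting $H_\textup{type}(s)H_\textup{type}(s)\T=\big(\phi_\textup{type}(s;T,D,X)-\nu_\textup{type}(s\given Z)\big)\big(\phi_\textup{type}(s;T,D,X)-\nu_\textup{type}(s\given Z)\big)\T\indic{T>s}\,G(s\given Z)^{-2}$ and cancelling one factor of $G(s\given Z)$ leaves the conditional variance equal to the integral over $s\in[0,t)$ of $\big(\phi_\textup{type}(s;T,D,X)-\nu_\textup{type}(s\given Z)\big)\big(\phi_\textup{type}(s;T,D,X)-\nu_\textup{type}(s\given Z)\big)\T\indic{T>s}\,G(s\given Z)^{-1}$ against $\Lambda(\d s\given Z)$.

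Finally, taking the expectation over $(T,D,X)$ and applying Fubini and the tower property with conditioning on $\{T>s,Z\}$ inside the integral, the matrix
\[
\E\Big(\big(\phi_\textup{type}(s;T,D,X)-\nu_\textup{type}(s\given Z)\big)\big(\phi_\textup{type}(s;T,D,X)-\nu_\textup{type}(s\given Z)\big)\T\indic{T>s}\given Z\Big)=\Phi_\textup{type}(s\given Z)\,S(s\given Z),
\]
precisely because the hypothesis $\nu_\textup{type}(s\given Z)=\E(\phi_\textup{type}(s;T,D,X)\given T>s,Z)$ of Theorem~\ref{thm:influence} identifies $\nu_\textup{type}$ as exactly the conditional mean that is subtracted. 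This yields \eqref{eq:Sigma_type}, and the concrete expressions for $\Phi_\textup{ind}$, $\Phi_\textup{out}$, and $\Phi_\textup{pse}$ follow on inserting the three $\phi_\textup{type}$ listed in Theorem~\ref{thm:influence} into $\Phi_\textup{type}(s\given Z)=\Var(\phi_\textup{type}(s;T,D,X)\given T>s,Z)$.

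I expect the delicate point to be the conditional variance computation of the martingale integral. Since no continuity of $\Lambda$ is assumed, $M(\cdot\given Z)$ may have atoms, which is what forces the $1-\Delta\Lambda$ factor in $\langle M\rangle$, and the cancellation that produces the weight $1/G(s\given Z)$ in \eqref{eq:Sigma_type} — rather than $1/G(s-\given Z)$ — rests on the identity $G(s-\given Z)\big(1-\Delta\Lambda(s\given Z)\big)=G(s\given Z)$. Justifying the interchange of the stochastic integral with conditional expectations, and checking that $H_\textup{type}$ is a genuinely predictable integrand for $M(\cdot\given Z)$ after conditioning on $(T,D,X)$, is essentially the only real technical content; the remaining manipulations are routine bookkeeping with conditional expectations.
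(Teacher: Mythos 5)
Your proposal is correct and follows essentially the same route as the paper's proof: the cross term vanishes by the conditional martingale property of $M(\cdot \given Z)$ given $(T,D,X)$ under Assumption~\ref{ass:indep_cens}, the second moment of the martingale integral is computed with the $(1-\Delta\Lambda)$ correction for possible atoms, and the identities $G(s-\given Z)(1-\Delta\Lambda(s\given Z))=G(s\given Z)$ and $\E\big((\phi_\textup{type}-\nu_\textup{type})^{\otimes 2}\indic{T>s}\given Z\big)=\Phi_\textup{type}(s\given Z)S(s\given Z)$ yield \eqref{eq:Sigma_type}. Your explicit justification of the discrete-case identity and of the conditioning steps only spells out what the paper leaves implicit.
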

\begin{proof}
The two terms of $\dot \beta_{\textup{type}}(\tilde T, \tilde D, X)$ in \eqref{eq:influence_function_beta_type} are uncorrelated owing to the independent censoring assumption: A martingale property applies to $M(s \given Z)$ given the underlying information $(T, D, X)$ and the last term will have mean 0 in the conditional distribution given $(T, D, X)$. The martingale property also implies that the variance of the last term is
\begin{equation}
\begin{aligned}
&\E\big(\int_0^{t-} \big(\phi_\textup{type}(s; T, D, X) - \nu_\textup{type}(s \given Z) \big)^{\otimes 2} \\
& \hspace{4em} \cdot \frac{\indic{T > s} \indic{C \geq s} (1- \Delta \Lambda(s \given Z))}{G(s \given Z)^2} \Lambda(\d s \given Z) \big)
\end{aligned}
\end{equation}
which reduces to the desired expression under the independent censoring assumption since it is the case that $\E((\phi_\textup{type}(s; T, D, X) - \nu_\textup{type}(s \given Z) )^{\otimes 2} \indic{T > s} \given Z) = \Var(\phi_\textup{type}(s; T, D, X) \given Z) S(s \given Z)$ and $\E(\indic{C \geq s} \given Z) (1- \Delta \Lambda(s \given Z)) = G(s- \given Z) (1- \Delta \Lambda(s \given Z)) = G(s \given Z)$.
See for instance Chapter II of \cite{Andersen1993} for implications of the martingale property.
Above, the notation $a^{\otimes 2} = a a \T$ is used for a column vector $a$.
\end{proof}

With the asymptotic variances at hand, it is of interest to consider the question of which of the types has the lower asymptotic variance and can therefore be expected to produce the lowest variance in parameter estimates, at least at larger sample sizes.
The answer clearly depends on many things. In particular it depends on how the three $\Phi_\textup{type}(s \given Z)$ compare at various time points $s$, but also on the censoring hazard at these time points. 
\begin{observation}
Some observations concerning the comparison of the asymptotic variances of the three approaches are the following.
\begin{enumerate}[label=(\alph*)]
    \item By the law of total variance, a lower bound of the asymptotic variances of the three approaches is formed by replacing $\Phi_\textup{type}(s \given Z)$ by $\Phi(s \given Z) = \Var(B(X)(Y- \E(Y \given T > s, X)) \given T > s, Z)$.
    
    \item For the individual weighting approach, $\mu(\beta_0;X)$ should be close to $\E(Y \given T > s, X)$ for $s$ close to 0 if the model holds. As a consequence, $\Phi_\textit{ind}(s \given Z)$ is close to the lower bound $\Phi(s \given Z)$ for $s$ close to 0. If the censoring hazard is high early and low later, the individual weighting approach should produce a comparably low variance.

    \item The outcome weighting approach should similarly have $\Phi_\textit{out}(s \given Z)$ close to the lower bound $\Phi(s \given Z)$ if $\E(Y \given T > s, X)$ is close to 0. This may happen for instance for the outcome examples of failure and time lost before $t$ if $s$ is close to $t$. If the censoring hazard is low early on, but high when approaching the time point of interest, $t$, the outcome weighting approach should produce a comparably low variance for this type of outcome, at least under continuity.

    \item Similarly, the pseudo-observation approach should have $\Phi_\textit{pse}(s \given Z)$ close to the lower bound $\Phi(s \given Z)$ if $\E(Y \given T > s, Z)$ is close to $\E(Y \given T > s, X)$. This can be expected to happen when the stratification, $Z$, is fine or when the outcome does not depend much on the covariates $X$. The examples given for the outcome weighting approach equally apply for the pseudo-observation approach. In fact, under continuity, $\E(Y \given T > s, Z)$ is expected to be close to $\E(Y \given T > s, X)$ for $s$ approaching $t$ no matter the outcome type since $Y$ is determined by time $t$; they are for instance both close to 1 for the survival outcome. 

    \item In the case of categorical covariates $X$, and $Z=X$, all three approaches have the same asymptotic variance. In fact, it seems often even the parameter estimates will all be the same according to a result of section~2.5 of \cite{blanche2022logistic} and a similar property for the pseudo-observations in line with results by \cite{stute1994jackknife}.
    In this light, we may expect similar asymptotic variances when more general covariates are considered and a fine stratification is used.
\end{enumerate}
\end{observation}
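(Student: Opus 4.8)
\emph{Proof proposal.} The plan is to treat (a) and (e) as exact statements, to be proved rigorously, and (b)--(d) as qualitative refinements read off the same computation. The engine throughout is the variance decomposition of Corollary~\ref{cor:as_variance_express} together with the observation that each type's integrand has the unified form $\phi_\textup{type}(s; T, D, X) = B(X)\big(Y - c_\textup{type}(s; X)\big)$ with $c_\textup{type}(s;X)$ measurable with respect to $\sigma(X)$: namely $c_\textit{ind} = \mu(\beta_0; X)$, $c_\textit{out} = 0$, and $c_\textit{pse}(s;X) = \E(Y \given T > s, Z)$, the last being a function of $X$ through $Z$.

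First I would condition on $X$ inside the conditioning event $\{T > s, Z\}$ and apply the law of total variance. Because $B(X)c_\textup{type}(s;X)$ is constant given $X$, this gives
\begin{equation} \label{eq:obs_decomp}
\begin{aligned}
\Phi_\textup{type}(s \given Z) ={}& \E\big(\Var(B(X) Y \given T > s, X) \given T > s, Z\big) \\
&{}+ \Var\big(B(X)\big(\E(Y \given T > s, X) - c_\textup{type}(s;X)\big) \given T > s, Z\big).
\end{aligned}
\end{equation}
The first term does not depend on the type, and applying the law of total variance once more to $\Phi(s \given Z) = \Var\big(B(X)(Y - \E(Y \given T > s, X)) \given T > s, Z\big)$ --- whose between-$X$ part vanishes by the very definition of $\E(Y \given T > s, X)$ --- shows that this first term is exactly $\Phi(s \given Z)$. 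Hence $\Phi_\textup{type}(s \given Z) - \Phi(s \given Z)$ equals the second term, a positive semidefinite matrix, for every $s$ and every type. Plugging $\Phi_\textup{type}(s \given Z) \succeq \Phi(s \given Z)$ into \eqref{eq:Sigma_type} and using that $A \mapsto \E\big(\int_0^{t-} A(s \given Z)\frac{S(s\given Z)}{G(s\given Z)}\Lambda(\d s \given Z)\big)$ is monotone in the Loewner order (since $S/G \ge 0$ and $\Lambda(\d s \given Z)$ is a nonnegative measure) yields (a).

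For (e), take $X$ categorical and $Z = X$: then everything inside the second term of \eqref{eq:obs_decomp} is $\sigma(X)$-measurable --- in particular $c_\textit{pse}(s;X) = \E(Y \given T > s, X)$ --- so that term is zero for all three types. Thus $\Phi_\textit{ind}(s \given X) = \Phi_\textit{out}(s \given X) = \Phi_\textit{pse}(s \given X) = \Phi(s \given X)$ for all $s$, and \eqref{eq:Sigma_type} forces $\Sigma_\textit{ind} = \Sigma_\textit{out} = \Sigma_\textit{pse}$. The stronger assertion that the three sets of estimators coincide is not proved here but attributed to the cited results. Items (b)--(d) are then qualitative consequences of the correction term in \eqref{eq:obs_decomp}: it is small exactly when $c_\textup{type}(s;X) \approx \E(Y \given T > s, X)$ --- for $c_\textit{ind}$ as $s \downarrow 0$ when the mean model holds (since $\E(Y \given T > 0, X) = \mu(\beta_0; X)$), for $c_\textit{out}$ when $\E(Y \given T > s, X) \approx 0$, for $c_\textit{pse}$ when $\E(Y \given T > s, Z) \approx \E(Y \given T > s, X)$ --- and, from the integral in \eqref{eq:Sigma_type}, only those $s$ carrying censoring-hazard mass $\Lambda(\d s \given Z)$ affect the asymptotic variance, which is what the statements about ``comparably low variance'' amount to.

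The exact parts, (a) and (e), are essentially bookkeeping with the law of total variance; the real difficulty lies in giving (b)--(d) a rigorous form. Turning statements like ``$\E(Y \given T > s, X) \approx 0$'' or ``$\E(Y \given T > s, Z) \approx \E(Y \given T > s, X)$ as $s \uparrow t$'' into theorems requires a continuity hypothesis --- as the text already signals with ``at least under continuity'' --- together with a limiting argument: for the survival and failure-type outcomes one must show, for instance, that $\E(Y \given T > s, X) \to \E(Y \given T \ge t, X)$ as $s \uparrow t$ under left-continuity of $S(\cdot \given X)$ at $t$, and then bound the resulting correction term against $\Lambda(\d s \given Z)$ when the censoring hazard concentrates near $t$. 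Supplying such quantitative bounds in place of the informal ``$\approx$'' is where the work would be; the qualitative conclusions all follow from \eqref{eq:obs_decomp}.
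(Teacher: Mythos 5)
Your proposal is correct and follows essentially the same route the paper intends: the paper states this as an unproved Observation whose part (a) appeals directly to the law of total variance, and your decomposition $\Phi_\textup{type}(s \given Z) = \Phi(s \given Z) + \Var\big(B(X)(\E(Y \given T>s, X) - c_\textup{type}(s;X)) \given T>s, Z\big)$, exploiting that $Z$ is a function of $X$ so that $\sigma(Z) \subseteq \sigma(X)$, is exactly the calculation the paper leaves implicit, with (e) following since the between-$X$ term vanishes when $Z = X$ and the coincidence of the estimates themselves deferred to the cited references just as in the paper. Your treatment of (b)--(d) as qualitative consequences of when the correction term is small, weighted by the censoring hazard, matches the paper's own informal reading.
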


Overall, the pseudo-observation approach may seem to have the best chance of producing a low variance, but these are of course rather vague observations.
An example corroborating this observation is given in section~\ref{sec:theoretical_example}.

The outcome of failure is considered in detail by \cite{blanche2022logistic} and their Corollary~2 states that the difference in asymptotic variance matrix between the individual weighting approach and the outcome weighting approach can be negative definite or positive definite depending on the censoring distribution, which is in line with the observations above. It seems that this conclusion cannot immediately be transferred to, to give an example, the outcome of restricted survival time since $\E(Y \given T > s, X)$ for $s$ approaching $t$ would be close to $t$ rather than 0 in this case. As alluded to above, a similar conclusion can be reached in the comparison of the individual weighting approach and the pseudo-observation approach even for an outcome such as the restricted survival time.

A standard approach to estimating the variance of parameter estimates is to use the corresponding Huber--White-type sandwich variance estimator.
For each type, $U_{n,\textup{type}}(\beta)$ is on the form $\sum_{i=1}^n u_{i,n,\textup{type}}(\beta)$. The corresponding standard sandwich estimate of the asymptotic variance is
\begin{equation}
    n(\frac{\partial}{\partial \beta} U_{n,\textup{type}}(\beta))^{-1} \sum_{i=1}^n u_{i,n,\textup{type}}(\beta) u_{i,n,\textup{type}}(\beta)\T (\frac{\partial}{\partial \beta} U_{n,\textup{type}}(\beta)\T)^{-1}
\end{equation}
evaluated at the corresponding $\beta$ estimate.
The next result establishes that the standard sandwich variance estimate will be conservative for large $n$ for all three approaches when the model holds.
\begin{theorem} \label{thm:varest}
In the setting of Theorem~\ref{thm:influence}, for each of the three approaches, the standard sandwich variance estimator converges in probability to
\begin{equation}
\Sigma_\textup{type}' = \Sigma + \E\big( \int_0^{t-} \Phi_\textup{type}'(s \given Z) \frac{S(s \given Z)}{G(s \given Z)} \Lambda(\d s \given Z) \big)
\end{equation}
where $\Phi_\textup{type}'(s \given Z) = \E(\phi_\textup{type}(s; T, D, X) \phi_\textup{type}(s; T, D, X) \T \given T > s, Z)$, and consequently, for any of the three types, $\Sigma_\textup{type}' \geq \Sigma_\textup{type}$ with equality if and only if $\E(\phi_\textup{type}(s; T, D, X) \given T > s, Z = z) = 0$ for $\Lambda(\cdot \given z)$-almost all $s$ for almost all $z$.
\end{theorem}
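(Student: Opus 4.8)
The plan is to analyse the standard sandwich estimator, evaluated at the consistent solution $\hat\beta_n$ of Theorem~\ref{thm:influence}, as a product of a bread factor $\big(\frac{1}{n}\frac{\partial}{\partial\beta}U_{n,\textup{type}}(\hat\beta_n)\big)^{-1}$, a meat factor $\frac{1}{n}\sum_{i=1}^n u_{i,n,\textup{type}}(\hat\beta_n)^{\otimes 2}$, and the transpose of the bread factor, and to find the limit in probability of each. For the bread I would show that $\frac{1}{n}\frac{\partial}{\partial\beta}U_{n,\textup{type}}(\hat\beta_n)$ converges in probability to $J(\beta_0)$ for all three types, using uniform consistency of $\hat G$ for $G$ on $[0,t)$ within each stratum (Kaplan--Meier consistency with Assumption~\ref{ass:posi} and the extra positivity condition), $\hat\beta_n\to\beta_0$, continuity of $A$, $\mu$ and $\frac{\partial}{\partial\beta}\mu$ in $\beta$, and a uniform law of large numbers under the boundedness assumptions. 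The only terms needing care are those where $\frac{\partial}{\partial\beta}A$ multiplies a residual; these vanish in the limit because $\E(W\given T,D,X)=1$ and $\E(Y\given X)=\mu(\beta_0;X)$, so that $\E\big(W(Y-\mu(\beta_0;X))\given X\big)=0$ and $\E\big(WY-\mu(\beta_0;X)\given X\big)=0$, and, for the pseudo-observation type, because the pseudo-observations satisfy $\frac{1}{n}\sum_i g(X_i)\hat\theta_i\to\E(g(X)Y)=\E(g(X)\mu(\beta_0;X))$ in probability for bounded $g$. Hence the bread factor converges in probability to $J(\beta_0)$.

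Next I would compute the limit $M_\textup{type}$ of the meat. By the same arguments, $\frac{1}{n}\sum_i u_{i,n,\textup{ind}}(\hat\beta_n)^{\otimes 2}$ converges in probability to $\E\big(A(\beta_0;X)^{\otimes 2}W^2(Y-\mu(\beta_0;X))^2\big)$ and $\frac{1}{n}\sum_i u_{i,n,\textup{out}}(\hat\beta_n)^{\otimes 2}$ to $\E\big(A(\beta_0;X)^{\otimes 2}(WY-\mu(\beta_0;X))^2\big)$. For the pseudo-observation type I would use the first-order representation of the pseudo-observations from the appendix (cf.\ \cite{Overgaard2019}), $\hat\theta_i=Y_i-R_i+r_{i,n}$ with $R_i=\int_0^{t-}(Y_i-\E(Y\given T>s,Z_i))\indic{T_i>s}G(s\given Z_i)^{-1}M(\d s\given Z_i)$, where $Y_i-R_i$ is $\E(Y)$ plus the value at observation $i$ of the influence function of $\hat\theta$, read off from Theorem~\ref{thm:influence} in the intercept-only case $\mu(\beta;x)=\beta$, $A\equiv 1$ (there $\dot\beta=Y-\E(Y)$, $\phi_\textup{pse}(s)=-(Y-\E(Y\given T>s,Z))$ and $\nu_\textup{pse}(s\given Z)=0$), and with a remainder satisfying $\frac{1}{n}\sum_i r_{i,n}^2=o_{\P}(1)$. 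A Cauchy--Schwarz argument using this remainder bound and boundedness then shows $\frac{1}{n}\sum_i u_{i,n,\textup{pse}}(\hat\beta_n)^{\otimes 2}$ converges in probability to $M_\textup{pse}=\E\big(A(\beta_0;X)^{\otimes 2}((Y-R)-\mu(\beta_0;X))^2\big)$.

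Then I would push each meat limit through the bread. Writing $B(X)=J(\beta_0)^{-1}A(\beta_0;X)$ and conditioning on $(T,D,X)$, Assumption~\ref{ass:indep_cens} gives $\E(W^2\given T,D,X)=G(T\wedge t-\given Z)^{-1}$, $\E(W\given T,D,X)=1$, $\E(R\given T,D,X)=0$, and, from the predictable variation of $M(\cdot\given Z)$ together with $\E(\indic{C\geq s}\given T,D,X)(1-\Delta\Lambda(s\given Z))=G(s-\given Z)(1-\Delta\Lambda(s\given Z))=G(s\given Z)$, the identity $\E(R^2\given T,D,X)=\int_0^{t-}(Y-\E(Y\given T>s,Z))^2\indic{T>s}G(s\given Z)^{-1}\Lambda(\d s\given Z)$. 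Feeding these into $J(\beta_0)^{-1}M_\textup{type}(J(\beta_0)\T)^{-1}$, expanding the squares, using the elementary identity $G(T\wedge t-\given Z)^{-1}=1+\int_0^{t-}\indic{T>s}G(s\given Z)^{-1}\Lambda(\d s\given Z)$, and then Fubini followed by a further conditioning on $\{T>s,Z\}$ (which turns $\E(\,\cdot\,\indic{T>s}\given Z)$ into $\E(\,\cdot\given T>s,Z)\,S(s\given Z)$), each limit collapses to $\Sigma+\E\big(\int_0^{t-}\Phi_\textup{type}'(s\given Z)\frac{S(s\given Z)}{G(s\given Z)}\Lambda(\d s\given Z)\big)=\Sigma_\textup{type}'$, where $\Sigma=\E\big(B(X)^{\otimes 2}(Y-\mu(\beta_0;X))^2\big)$ and one recognises $\E(B(X)^{\otimes 2}(Y-\mu(\beta_0;X))^2\given T>s,Z)$, $\E(B(X)^{\otimes 2}Y^2\given T>s,Z)$ and $\E(B(X)^{\otimes 2}(Y-\E(Y\given T>s,Z))^2\given T>s,Z)$ as $\Phi_\textup{ind}'$, $\Phi_\textup{out}'$ and $\Phi_\textup{pse}'$.

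Finally, the comparison with $\Sigma_\textup{type}$ follows at once from Corollary~\ref{cor:as_variance_express}: conditionally on $\{T>s,Z=z\}$, the difference between the second moment $\Phi_\textup{type}'(s\given z)$ and the variance $\Phi_\textup{type}(s\given z)$ is $\nu_\textup{type}(s\given z)^{\otimes 2}\geq 0$, so $\Sigma_\textup{type}'-\Sigma_\textup{type}=\E\big(\int_0^{t-}\nu_\textup{type}(s\given Z)^{\otimes 2}\frac{S(s\given Z)}{G(s\given Z)}\Lambda(\d s\given Z)\big)\geq 0$; since $S(s\given z)>0$ and $G(s\given z)>0$ on $[0,t)$, this matrix equals $0$ if and only if $v\T\nu_\textup{type}(s\given z)=0$ for $\Lambda(\cdot\given z)$-almost all $s$, almost all $z$ and all $v$, that is, $\E(\phi_\textup{type}(s;T,D,X)\given T>s,Z=z)=0$ for $\Lambda(\cdot\given z)$-almost all $s$ and almost all $z$. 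I expect the main obstacle to be the pseudo-observation meat: establishing $\frac{1}{n}\sum_i r_{i,n}^2=o_{\P}(1)$ (a quadratic-mean refinement of the second-order expansion of the pseudo-observations, where the boundedness of $Y$ and the degeneracy of the second-order kernel enter) and correctly pinning down $Y-R$ as the limiting pseudo-value; the bread computation and the passage from meat to variance for the individual- and outcome-weighting types are routine given Theorem~\ref{thm:influence} and Corollary~\ref{cor:as_variance_express}.
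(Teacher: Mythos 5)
Your proposal is correct and follows essentially the same route as the paper: bread $\to J(\beta_0)$, meat $\to \E\big(u_{\textup{type}}(\bar{\mathcal X};\beta_0)^{\otimes 2}\big)$, the same martingale/conditional-moment computation (as in Lemma~\ref{lemma:rep_independent} and Corollary~\ref{cor:as_variance_express}) turning this into $\Sigma_{\textup{type}}'$, and the second-moment-versus-variance comparison for the equality condition. The only difference is technical packaging: where you control the weight-estimation and pseudo-observation corrections in the meat via a quadratic-mean remainder bound and Cauchy--Schwarz, the paper absorbs them through the expansion of Lemma~\ref{lemma:contribution_approximation} and the law of large numbers for the resulting order-3 $U$/$V$-statistic.
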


The proof of the theorem can be found in the appendix.

This result is in line with and extends Proposition 1 of \cite{blanche2022logistic}.

\begin{observation}
A few observations concerning the asymptotic variance and variance estimation are given in the following.
\begin{enumerate}[label=(\alph*)]
    \item The requirement for equality does not seem particularly reasonable in applications in any of the three cases, except perhaps for $\phi_\textit{pse}$ in cases where  $\E(Y  \given T > s, Z)$ is close to $\E(Y \given T > s, X)$.
    In the simple case where $X$ itself represents strata and can be obtained from $Z$, for instance $X=Z$, equality is generally only obtained for the pseudo-observation approach. 
    \item At least for the individual and outcome weighting approaches, the limit $\Sigma_{\textup{type}}'$ corresponds to what $\Sigma_{\textup{type}}$ would have been if $G$ had been postulated as the censoring distribution rather than estimated  using the Kaplan--Meier estimator. 
    \item It should be perfectly possible to estimate the asymptotic variance by estimating the influence function, plugging in the observed data points and evaluating the empirical variance of what is obtained. The expression of the influence functions in Theorem~\ref{thm:influence} is less helpful here, but the results and approaches of the appendix may be. Such an alternative variance estimator may however have its own problems in small samples where estimating the asymptotic variance is less relevant. 
    This was seen in \cite{Overgaard2018} in an example of the pseudo-observation approach.
    \item Suppose for a moment that the censoring distribution is the same in the various strata, $C \independent Z$, in addition to the independent censoring assumption, Assumption~\ref{ass:indep_cens}. For the individual and outcome weighting approaches, it can be shown that a finer stratification reduces the asymptotic variance $\Sigma_{\textup{type}}$. The limit of the sandwich variance estimator,  $\Sigma_{\textup{type}}'$, is however left unchanged, and in this way the sandwich variance estimator should not be expected to pick up on the actual advantage.
    The situation is more unclear for the pseudo-observation approach, but it seems a finer stratification will tend to reduce the asymptotic variance $\Sigma_{\textup{type}}$ as well as the limit $\Sigma_{\textup{type}}'$ if chosen appropriately to try to match $\E(Y \given T > s, Z)$ with $\E(Y \given T > s, X)$.
\end{enumerate}
\end{observation}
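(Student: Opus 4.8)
Parts (a), (b), and (d) carry verifiable content, whereas (c) is a methodological remark: the influence functions supplied by Theorem~\ref{thm:influence} and, more usefully, the constructions of the appendix are written in quantities estimable from the data, so an empirical-variance plug-in for $\Var(\dot\beta_\textup{type})$ exists in principle, and the small-sample caveat is a pointer to \cite{Overgaard2018} rather than a claim to be proved. For (a) the plan is to evaluate the equality condition of Theorem~\ref{thm:varest}, namely $\nu_\textup{type}(s\given Z)=\E(\phi_\textup{type}(s;T,D,X)\given T>s,Z)=0$, in the case $X=Z$. Then $B(X)=B(Z)$ is $Z$-measurable and factors out of the conditional expectation, giving $\nu_{pse}(s\given Z)=B(Z)\big(\E(Y\given T>s,Z)-\E(Y\given T>s,Z)\big)=0$ identically, while $\nu_{ind}(s\given Z)=B(Z)\big(\E(Y\given T>s,Z)-\mu(\beta_0;Z)\big)$ and $\nu_{out}(s\given Z)=B(Z)\E(Y\given T>s,Z)$ vanish only under the degenerate requirements $\E(Y\given T>s,Z)=\mu(\beta_0;Z)$ and $\E(Y\given T>s,Z)=0$, which fail generally because conditioning on $\{T>s\}$ moves the outcome mean away from its marginal model value. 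This settles the second sentence of (a) and exhibits why equality is unreasonable for the other two types.

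For (b) I would compute directly the variance the individual- and outcome-weighting estimators would have with the true $G$ used in place of $\hat G$, and match it to $\Sigma_\textup{type}'$. With known $G$ there is no correction term from estimating the censoring distribution, so the estimators are ordinary $Z$-estimators with influence functions $-B(X)W(Y-\mu(\beta_0;X))$ and $-B(X)(WY-\mu(\beta_0;X))$. The engine is the elementary identity $G((T\wedge t)-\given Z)^{-1}=1+\int_0^{(T\wedge t)-}G(s\given Z)^{-1}\Lambda(\d s\given Z)$ together with $\E(W\given T,D,X)=1$ and $\E(W^2\given T,D,X)=G((T\wedge t)-\given Z)^{-1}$, both consequences of Assumption~\ref{ass:indep_cens}. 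Taking second moments conditionally on $(T,D,X)$, substituting the identity, and applying Fubini and the tower property over $\{T>s\}$ converts the known-$G$ variance into $\Sigma+\E\big(\int_0^{t-}\Phi_\textup{type}'(s\given Z)\frac{S(s\given Z)}{G(s\given Z)}\Lambda(\d s\given Z)\big)=\Sigma_\textup{type}'$, the uncentred moment $\Phi_\textup{type}'$ appearing precisely because no centring term survives when $G$ is known; in the outcome case the cross term $-2\E(B(X)B(X)\T Y\mu)$ recombines with the leading contribution to reproduce $(Y-\mu)^2$ and hence $\Sigma$.

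For (d) I would assume $C$ independent of $(T,D,X)$ outright, so that $G$ and $\Lambda$ coincide with the marginal censoring quantities in every stratum of every categorization. Writing $\Sigma_\textup{type}$ for the asymptotic variance under a given categorization, note that for the individual- and outcome-weighting types the integrand $\phi_\textup{type}(s;T,D,X)$ does not depend on the stratification, so $\Phi_\textup{type}(s\given Z)S(s\given Z)=\E(\phi_\textup{type}^{\otimes 2}\indic{T>s})-\E\big(\nu_\textup{type}(s\given Z)^{\otimes 2}S(s\given Z)\big)$ splits off a stratification-free first term. Under a refinement with $\sigma(Z)\subseteq\sigma(Z')$, the tower property $\nu_\textup{type}(s\given Z)=\E(\nu_\textup{type}(s\given Z')\given T>s,Z)$ and Jensen's inequality for the Loewner-convex map $a\mapsto a^{\otimes 2}$ give $\E(\nu_\textup{type}(s\given Z')^{\otimes 2}S(s\given Z'))\succeq\E(\nu_\textup{type}(s\given Z)^{\otimes 2}S(s\given Z))$, so the subtracted term grows and $\Phi_\textup{type}(s\given Z')S(s\given Z')\preceq\Phi_\textup{type}(s\given Z)S(s\given Z)$ pointwise in $s$; integrating against the common nonnegative $G(s)^{-1}\Lambda(\d s)$ yields $\Sigma_\textup{type}\preceq$ its coarser counterpart. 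For the sandwich limit the same decomposition with the uncentred moment gives $\Phi_\textup{type}'(s\given Z)S(s\given Z)=\E(\phi_\textup{type}^{\otimes 2}\indic{T>s})$, which is entirely stratification-free, so $\Sigma_\textup{type}'$ is unchanged and cannot register the improvement.

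The main obstacle is the pseudo-observation case of (d): since $\phi_{pse}$ itself contains $\E(Y\given T>s,Z)$, refining $Z$ alters both the conditioning and the integrand, so the decomposition no longer isolates a stratification-free first term and the convexity argument breaks. Here I would argue only heuristically, as the statement does, that choosing the refinement so as to match $\E(Y\given T>s,Z)$ with $\E(Y\given T>s,X)$ pushes $\Phi_{pse}$ and $\Phi_{pse}'$ towards the common lower bound $\Phi(s\given Z)$ identified in the preceding observation, thereby reducing both $\Sigma_{pse}$ and $\Sigma_{pse}'$; a genuine monotonicity theorem would have to control the interaction of the two competing effects and is deliberately not asserted.
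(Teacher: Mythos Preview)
The paper does not supply a proof for this Observation; it is presented as a set of remarks immediately following Theorem~\ref{thm:varest} without further justification. Your proposal therefore goes beyond what the paper itself offers, and your arguments are correct.

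In particular: your treatment of (a) is exactly the computation the paper implicitly has in mind; your derivation for (b) via $\E(W^2\given T,D,X)=G((T\wedge t)-\given Z)^{-1}$ and the identity $G(u-\given Z)^{-1}=1+\int_0^{u-}G(s\given Z)^{-1}\Lambda(\d s\given Z)$ is a clean direct route that the paper does not spell out anywhere; and your argument for (d), using $\E\big(\Phi_\textup{type}(s\given Z)S(s\given Z)\big)=\E(\phi_\textup{type}^{\otimes 2}\indic{T>s})-\E\big(\nu_\textup{type}(s\given Z)^{\otimes 2}S(s\given Z)\big)$ together with the tower property and convexity of $a\mapsto a^{\otimes 2}$, is the natural proof and correctly isolates why the sandwich limit $\Sigma_\textup{type}'$ stays fixed while $\Sigma_\textup{type}$ decreases for the individual and outcome approaches. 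Your acknowledgment that the pseudo-observation case resists this decomposition because $\phi_{pse}$ itself depends on the stratification is precisely the reason the paper hedges with ``the situation is more unclear'' and ``it seems''; no stronger claim is made, so nothing more is required.
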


The stated results assume that the regression model $E(Y \given X) = \mu(\beta; X)$ holds for some $\beta$, also denoted $\beta_0$. 
As can be seen from the results of the appendix, Lemma~\ref{lemma:estimator_exist} and Lemma~\ref{lemma:rep_independent} specifically, the results do not change much under misspecification of the regression model. 
The three approaches will be able to estimate a best fit to the uncensored problem which may be useful in some situations. The best fit will then depend on the covariate distribution.  
Under misspecification of the regression model, the only real differences to the results stated earlier are that $\beta_0$ should now refer to the best fit rather than a true $\beta$ and that $J(\beta_0)$ and thereby $B(X)$ will have a more complicated expression, which can be found in equation~\eqref{eq:J_general} of the appendix.
This does not change the observations made above much except perhaps for one observation concerning the individual weighting approach: The $\E(Y \given T > s, X) $ can no longer be expected to be as close to $\mu(\beta_0; X)$ as before for $s$ close to $0$ and so $\Phi_\textit{ind}(s \given Z)$ can no longer be expected to be as close to the lower bound $\Phi(s \given Z)$ as before.
Concerning variance estimation, statistical program packages may or may not use an appropriate estimate of the more general and complicated expression of $J(\beta_0)$ relevant under misspecification of the regression model when producing the standard sandwich variance estimate. If they do, Theorem~2 will apply with the mentioned modifications. If they do not, there will be another source of bias of the variance estimate, which will not be examined more closely here. 
The discussion concerning the more complicated and the more simple expression of $J(\beta_0)$ relates to the discussion of whether to use the observed information matrix or an estimate of the expected information matrix under the model. 

A source of bias in the regression parameter estimates is violation of the independent censoring assumption. 
An attempt at examining the bias is made in Proposition~\ref{prop:bias} of the appendix where a first order approximation of the bias is found for the three approaches. In principle, this gives some insights into how the bias compares between the three approaches, but the result is approximate and the conclusion is not particularly clear. 
One observation concerns the scenario where $(T,D) \independent C \given X$ but not $(T,D,X) \independent C \given Z$ holds. Here, equation~\eqref{eq:bias} of the appendix reveals how discrepancy between the true and fitted censoring hazard, according to the approximation,  
\begin{enumerate}[label=(\alph*)]
    \item will not contribute to bias at early time points close to $0$ for the individual weighting approach if the regression model holds,
    \item will not contribute to bias at late time points close to $t$ for the outcome weighting approach for the outcome of failure and time lost before $t$,
    \item will not contribute to bias at late time points close to $t$ for the pseudo-observation approach generally.
\end{enumerate}
Overall, it seems the pseudo-observation approach will have the best chance of mitigating bias from this sort of violation of the independent censoring assumption by choosing $Z$ to be predictive of the outcome.

\section{A theoretical example}
\label{sec:theoretical_example}

With an aim of comparing the asymptotic variances precisely, an example in a simple setting is now considered.

Suppose covariates consists of two groups expected to be equal in size, $\P(X=0) = \P(X=1) = 1/2$. The event time is chosen to follow uniform distributions in the two groups,
\begin{align*}
\P(T \leq s \given X = 1) &= ps, \quad s \in [0, \frac{1}{p}], \\
\P(T \leq s \given X = 0) &= qs, \quad s \in [0, \frac{1}{q}],
\end{align*}
for certain choices of $p, q \in (0,1)$. 
The outcome of interest is $Y = \indic{T \leq 1}$, corresponding to a time point of interest $t=1$. A true model is $\mu(\beta; X) = \beta_0 + \beta_1 X$ with $\beta_0 = q$ and $\beta_1 = p-q$.
Censoring occurs with probability $0.5$ at a specific time point $s$ only and this happens independently of other variables. 

The three approaches are used for estimation of the $\beta$ parameters with the choice $A(\beta; X)=(1, X)\T$. 
No stratification will be used.
The asymptotic variances for the three types are
\begin{equation}
    \Sigma_\textup{type} = \Sigma + \Phi_\textup{type}(s) S(s)
\end{equation}
in this case according to Corollary~\ref{cor:as_variance_express}. Differences in asymptotic variances, particularly the signs of differences, are in this way given from differences in $\Phi_\textup{type}(s)$.

The matrix $J(\beta)$ is 
\begin{equation}
    J(\beta) = \E \Big( \begin{pmatrix} 1 & X \\ X & X \end{pmatrix} \Big) = \begin{pmatrix} 1 & \frac{1}{2} \\ \frac{1}{2} & \frac{1}{2} \end{pmatrix}
\end{equation}
such that
\begin{equation}
    J(\beta)^{-1} = \begin{pmatrix} 2 & -2 \\ -2 & 4 \end{pmatrix}.
\end{equation}
Calculations reveal
\begin{align*}
    f_1(s):= \E(Y \given T > s) &= \frac{(p+q)(1-s)}{2 - ps - qs}, \\
    f_2(s):= \Cov(Y, X \given T > s) &= \big(\frac{p(1-s)}{1-ps} - \frac{(p+q)(1-s)}{2 - ps - qs} \big) \frac{1-ps}{2 - ps - qs}, \\
    f_3(s) := \Cov(YX, X \given T > s) &= \frac{p(1-s)}{2 - ps - qs} \frac{1-qs}{2 - ps - qs}, \\
    f_4(s) := \Var(X \given T > s) &= \frac{1-ps}{2 - ps - qs} \frac{1-qs}{2 - ps - qs}.
\end{align*}
The asymptotic variance of the $\beta_1$ component is considered by applying the vector $a = (0, 1)\T$. Calculations reveal, for a general $a$,
\begin{align*}
    a\T \Phi_{\textit{ind}}(s) a &= a\T \Phi_{\textit{out}}(s) a + \Var(a\T B(X) \mu(\beta; X) \given T > s) \\
    & \quad - 2 \Cov(a \T B(X) Y, a \T B(X) \mu(\beta; X) \given T > s), \\
    a\T \Phi_{\textit{pse}}(s) a &= a\T \Phi_{\textit{out}}(s) a + f_1(s)^2 \Var(a \T B(X) \given T > s) \\
    & \quad - 2 \Cov(a \T B(X) Y, a \T B(X)) f_1(s).
\end{align*}
For the particular choice of vector $a = (0, 1)\T$, 
\begin{align*}
    \Var(a\T B(X) \mu(\beta; X) \given T > s) &= 4 (p+q)^2 f_4(s), \\
    \Cov(a \T B(X) Y, a \T B(X) \mu(\beta; X) \given T > s) &= 8 (p+q) f_3(s) - 4 (p+q) f_2(s), \\
    \Var(a \T B(X) \given T > s) &= 16 f_4(s) \\
    \Cov(a \T B(X) Y, a \T B(X) \given T > s) &= 16 f_3(s) - 8 f_2(s)
\end{align*}
and thereby
\begin{align*}
    &a\T (\Phi_{\textit{pse}}(s) - \Phi_{\textit{out}}(s) ) a = 16 f_1(s)(f_1(s) f_4(s) - 2 f_3(s) + f_2(s)) \\
    &= - 16 \frac{(p+q)(1-s)}{2 - ps - qs} \big( \frac{q(1-s)}{2 - ps - qs} \big(\frac{1-ps}{2 - ps - qs}\big)^2 + \frac{p(1-s)}{2 - ps - qs} \big(\frac{1-qs}{2 - ps - qs}\big)^2\big)
\end{align*}
which is negative for all $s \in (0,1)$, indicating an asymptotic advantage of the pseudo-observation approach over the outcome weighting approach in this setting.
Also,
\begin{align*}
    a\T (\Phi_{\textit{ind}}(s) - \Phi_{\textit{out}}(s) ) a &= 4 (p+q)^2 f_4(s) - 16 (p+q) f_3(s) + 8 (p+q)f_2(s)) \\
    &= 4 \frac{p+q}{(2 - ps - qs)^2} \big( q(s(1-q)-(1-s))(1-ps) \\
    & \quad + p(s(1-p)-(1-s))(1-qs) \big).
\end{align*}
Certainly for $s < \min(\frac{1}{2-p},\frac{1}{2-q})$ the difference is negative, in favor of the individual weighting approach over the outcome weighting approach, while for $s > \max(\frac{1}{2-p},\frac{1}{2-q})$ the difference is positive, in favor of the outcome weighting approach.

The comparison of the pseudo-observation approach and the individual weighting approach may be obtained by subtracting the two expressions from each other. The difference is in favor of the pseudo-observation approach for $s=1$ and is 0 for $s=0$. Seemingly, the difference for any $s \in (0,1)$ will be an intermediate value and so in favor of the pseudo-observation approach over the individual weighting approach.

For the estimation of the intercept parameter, $\beta_0$, the vector $a= (1, 0) \T$ is instead considered. With this choice, 
\begin{align*}
    \Var(a\T B(X) \mu(\beta; X) \given T > s) &= 4 q^2 f_4(s), \\
    \Cov(a \T B(X) Y, a \T B(X) \mu(\beta; X) \given T > s) &= 4 q f_3(s) - 4 q f_2(s), \\
    \Var(a \T B(X) \given T > s) &= 4 f_4(s), \\
    \Cov(a \T B(X) Y, a \T B(X) \given T > s) &= 4 f_3(s) - 4 f_2(s),
\end{align*}
which leads to 
\begin{align*}
    &a\T (\Phi_{\textit{pse}}(s) - \Phi_{\textit{out}}(s) ) a = 4 f_1(s)(f_1(s) f_4(s) - 2 f_3(s) + 2 f_2(s)) \\
    &= 4 \frac{(p+q)(1-s)^2(1-ps)}{(2 - ps - qs)^3} \big( (p+q)\frac{1-qs}{2-ps-qs} - 2 q\big).
\end{align*}
This is in favor of the pseudo-observation approach over the outcome weighting approach when and only when $(p+q)(1-qs) < 2q (2-ps-qs)$ which happens when $s < \frac{3q - p}{q(p+q)}$. That can happen potentially never and potentially always in the interval $(0,1)$ depending on the values of $p$ and $q$.
Next, 
\begin{align*}
    a\T (\Phi_{\textit{ind}}(s) - \Phi_{\textit{out}}(s) ) a &= 4 q ( q f_4(s) - 2 f_3(s) + 2 f_2(s)) \\
    &= 4 q^2 \frac{1-ps}{2-ps-qs} \frac{s(2-q) - 1}{2-ps-qs}
\end{align*}
which is in favor of the individual weighting approach when $s(2-q)-1 < 0$, that is, when $s < \frac{1}{2-q}$.
With this result, examples can be found where the asymptotic variance of the $\beta_0$ estimate is smaller for the individual weighting approach than for the pseudo-observations approach. Take for instance $p=1/2$, $q=1/6$ and $s \in (0, 1/(2-\frac{1}{6}) )$.

This simple theoretical example illustrates how examples can be found where either of the three approaches will have the smallest asymptotic variance.

\section{Simulations}
\label{sec:simulation}

To gain insights into the behavior of the three approaches in finite samples, a simulation study has been conducted as described in the following.
Three separate scenarios are considered in this simulation study, each with different configurations. Each configuration is simulated 10 000 times.  

\paragraph{Scenario I -- cumulative incidence.}
This scenario considers the comparison of the cumulative incidence, or risk, in two groups by a risk difference.
The purpose is a comparison of three approaches in a simple setting at different censoring distributions and various sample sizes.
The simple setting corresponds to the theoretical example given in section~\ref{sec:theoretical_example}.
First, the group $X \in \{0,1\}$ is drawn with equal probability $\P(X=x) = 0.5$ in the two groups. The time to event, $T$, is drawn according to $\P(T \leq s \given X=x) = p_x s$ for $s \in (0, 1/p_x)$ where the choices $p_0 = 1/6$ and $p_1 = 1/2$ are used.
The time point $t=1$ is the time point of interest in the following, and the outcome $Y = \indic{T \leq 1}$ is the outcome of interest. 
A simple linear model, $\mu(\beta; X) = \beta_0 + \beta_1 X$ is considered such that $\beta_1$ is the risk difference and the unknown parameter to be estimated. 
The specification above makes $\beta_1 = p_1 - p_0 = -1/3$ the true value of the risk difference at the time point of interest. 
The censoring distribution will be independent of $T$ and $X$. Three censoring distributions are considered: one where about \SI{50}{\percent} are censored at the early time point 0.2, $P(C = 0.2) = 0.5$, and the rest remain uncensored; one where about \SI{50}{\percent} are censored at the late time point 0.8, $P(C = 0.8) = 0.5$, and the rest remain uncensored; and an exponential censoring distribution, $\P(C > s) = \exp(-s)$.
Samples of sizes $n \in \{50, 100, 200, 400, 800\}$ are considered. 
The three approaches are used unstratified, that is, the overall Kaplan--Meier estimator of the censoring distribution is used for the weights, and with $A(\beta; X) = (1, X)\T$. 
Note that the choice $\P(C = s) = 0.5$ and the rest remaining uncensored, or $\P(C \geq 1) = 0.5$, will have $\Delta \Lambda(s)/G(s) = 0.5/0.5 = 1$ and is able to pick out the remaining integrand in the last part of \eqref{eq:Sigma_type} at the chosen time point $s$, for instance $s=0.2$ or $s=0.8$. In other words, $\Sigma_\textup{type} = \Sigma + \Phi_\textup{type}(s) S(s)$ in this case.

\begin{figure}
    \centering
    \includegraphics[width=0.98\linewidth]{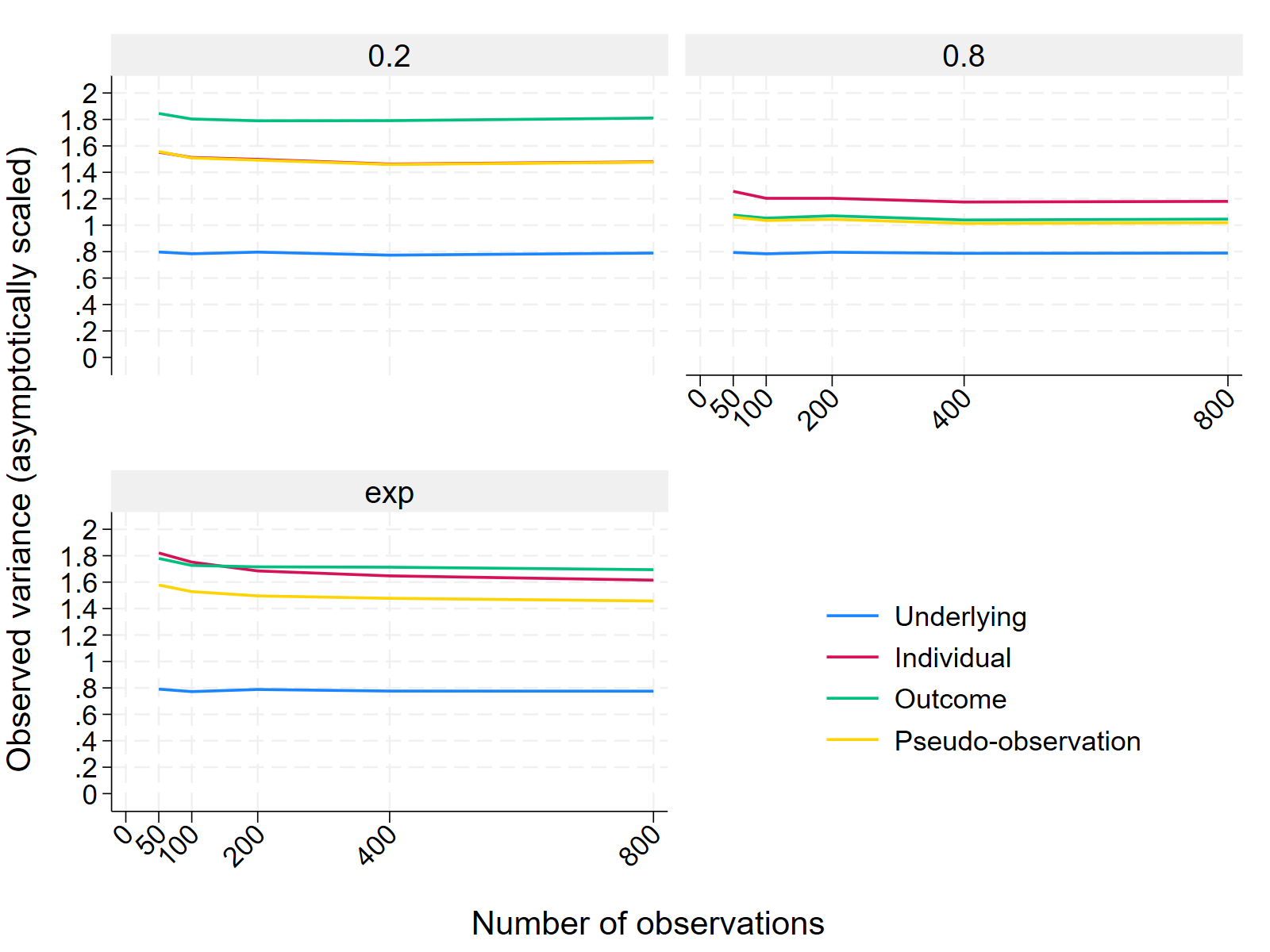}
    \caption{Observed variances of $\beta_1$ estimates scaled by $n$ for the three approaches, as well as for the underlying approach based on uncensored data, in each of the three censoring distribution settings in scenario~I.}
    \label{fig:sc1_var}
\end{figure}

\begin{table}[htbp]
\footnotesize
\centering
\begin{tabular}{rrrrrrrrrrr}
  \toprule
Cens & $n$ & $\Var_{ind}$ & $\Var_{out}$ & $\Var_{pse}$ & $\widehat{\Var}_{ind}$ & $\widehat{\Var}_{out}$ & $\widehat{\Var}_{pse}$ & $\%_{ind}$ & $\%_{out}$ & $\%_{pse}$ \\ 
  \midrule
  0.2 &  50 & 1.55 & 1.85 & 1.56 & 1.47 & 1.89 & 1.62 & 92.6 & 94.5 & 94.7 \\ 
  0.2 & 100 & 1.51 & 1.80 & 1.51 & 1.47 & 1.87 & 1.58 & 93.8 & 94.9 & 95.1 \\ 
  0.2 & 200 & 1.50 & 1.79 & 1.49 & 1.47 & 1.86 & 1.56 & 94.1 & 95.1 & 95.2 \\ 
  0.2 & 400 & 1.46 & 1.79 & 1.46 & 1.46 & 1.85 & 1.55 & 94.7 & 95.2 & 95.5 \\ 
  0.2 & 800 & 1.48 & 1.81 & 1.48 & 1.46 & 1.85 & 1.54 & 94.9 & 95.4 & 95.7 \\ 
  0.8 &  50 & 1.26 & 1.08 & 1.06 & 1.21 & 1.06 & 1.05 & 93.7 & 94.1 & 94.1 \\ 
  0.8 & 100 & 1.20 & 1.05 & 1.04 & 1.19 & 1.05 & 1.03 & 94.7 & 94.5 & 94.6 \\ 
  0.8 & 200 & 1.20 & 1.07 & 1.04 & 1.18 & 1.05 & 1.03 & 94.7 & 94.8 & 94.7 \\ 
  0.8 & 400 & 1.18 & 1.04 & 1.01 & 1.18 & 1.05 & 1.02 & 95.0 & 95.1 & 95.1 \\ 
  0.8 & 800 & 1.18 & 1.05 & 1.02 & 1.17 & 1.04 & 1.02 & 94.8 & 95.0 & 94.8 \\ 
  exp &  50 & 1.82 & 1.78 & 1.58 & 1.71 & 1.78 & 1.61 & 92.7 & 94.2 & 94.4 \\ 
  exp & 100 & 1.75 & 1.73 & 1.53 & 1.67 & 1.76 & 1.56 & 93.9 & 94.9 & 94.8 \\ 
  exp & 200 & 1.69 & 1.72 & 1.50 & 1.65 & 1.75 & 1.54 & 94.3 & 95.0 & 95.1 \\ 
  exp & 400 & 1.65 & 1.71 & 1.48 & 1.65 & 1.74 & 1.53 & 94.9 & 95.2 & 95.4 \\ 
  exp & 800 & 1.62 & 1.69 & 1.46 & 1.64 & 1.74 & 1.52 & 95.0 & 95.4 & 95.3 \\
   \bottomrule
\end{tabular}
\caption{Simulation results of Scenario I: Observed variance in $\beta_1$ estimates ($\Var$), mean of corresponding standard sandwich variance estimates ($\widehat \Var$), and coverage probability of Wald-type \SI{95}{\percent} confidence intervals based on standard sandwich variance estimates ($\%$) for each type of approach (\textit{ind}, \textit{out}, and \textit{pse}) and in each configuration of censoring distribution (Cens) and number of observations ($n$). Observed and estimated variances are scaled by $n$ (asymptotically scaled) for easy comparison.}
\label{tab:sc1}
\end{table}

Key results of the simulations in this scenario are summarized in Figure~\ref{fig:sc1_var} and Table~\ref{tab:sc1}, where the focus is on the variance of $\beta_1$ estimates.
Observed biases for $\beta_1$ in the approaches seem negligible and are not presented here. 
As is illustrated in Figure~\ref{fig:sc1_var}, the observed variance of the parameter estimates is in line with what is suggested by the theory and the theoretical example of Section~\ref{sec:theoretical_example}, especially for the larger sample sizes: the pseudo-observation approach produces a comparably low variance in this setting in all three censoring distribution configurations; the individual weighting approach produces a comparably low variance when censoring occurs at an early time point; the outcome weighting approach produces a reasonably low variance when censoring occurs at a late time point. 
It can also be seen how the individual weighting approach eventually produces a lower variance than the outcome weighting approach in the exponential censoring distribution configuration in this setting, which is not immediately clear from the theory already presented. 
The fact that losing information to censoring results in larger variances of parameter estimates, as seen in Corollary~\ref{cor:as_variance_express}, is illustrated by the gap from the observed variances of the three approaches to the observed variance of the underlying approach based on uncensored data. It seems reasonable that the gap is smaller when censoring occurs at a late time point since less information is lost.

Table~\ref{tab:sc1} gives further insights into the variance estimation using the standard, Huber--White-type sandwich variance estimator: In most cases the average variance estimate is larger than the observed variance in parameter estimates, as suggested by Theorem~\ref{thm:varest}. Somewhat surprisingly, the opposite does happen even for larger samples sizes, at least for the outcome weighting approach. A calculation reveals that the asymptotic difference is quite small in this case and it can apparently not be expected to show up in simulations at this sample size and number of iterations. Overall, the differences between observed variance and average variance estimate are fairly small, and the corresponding coverage probabilities of Wald-type \SI{95}{\percent} confidence intervals using these variance estimates are quite close to \SI{95}{\percent}, at least for the larger sample sizes.

To sum up this scenario, the pseudo-observation approach generally wins out in terms of producing low variance of the estimates of the important regression parameter in this case, and the standard sandwich variance estimator even produces reasonable variance estimates for the pseudo-observation approach, leading to reasonable coverage probabilities.

\paragraph{Scenario II - restricted mean, a misspecified model, and covariate-dependent censoring.}
In this scenario, the restricted mean for different values of continuous covariates are compared using differences. 
The purpose is a comparison of the three approaches with a new outcome type in a more complicated situation where there are more covariates, continuous covariates, misspecification of the regression model, and covariate-dependent censoring. 
Three independent continuous covariates are considered, $X=(X_1, X_2, X_3)$ where $X_1 \sim N(0,1), X_2 \sim U(0,1), X_3 \sim \Gamma(\text{shape } 3, \text{scale } 0.5)$.
Given $X$, the event time $T$ will follow a Weibull distribution with shape parameter 1.5 and rate parameter $\exp(-2 + X_1 + X_2/6 + X_3/2 + X_2 \cdot X_3/4)$.
The time point of interest is $t=1$ and the outcome of interest is then $Y = T \wedge 1$. 
The model considered is $\mu(\beta; X) = \beta_0 + \beta_1 X_1 + \beta_2 X_2 + \beta_3 X_3$. No attempt will be made to find $\E(Y \given X)$, but it is apparent that $\mu(\beta; X)$ is misspecified. 
Given $X$, the censoring distribution will be independent of $T$ and follow a Weibull distribution with shape parameter 1.5 and rate parameter $\exp(-0.5 + X_2)$. In particular, the censoring distribution depends on the uniformly distributed $X_2$.
The three approaches are used with $A(\beta; X) = (1, X_1, X_2, X_3)\T$ and the sample size $n=1000$ is considered throughout. 
It is expected that handling the covariate-dependent censoring by stratification will be useful. The stratification variable $Z$ is constructed from $X_2$ by first choosing a number of strata $k$, then letting $Z = j$ when $j/k < X_2 \leq (j+1)/k$ for $j = 0, \dots, k-1$. The three approaches are considered with $k \in \{1, 2, 4, 8\}$.
Note that although the model is misspecified, the fits from the three approaches should on average match the best fit from uncensored problem if the censoring mechanism is handled appropriately according to the theory. 

\begin{figure}
    \centering
    \includegraphics[width=0.98\linewidth]{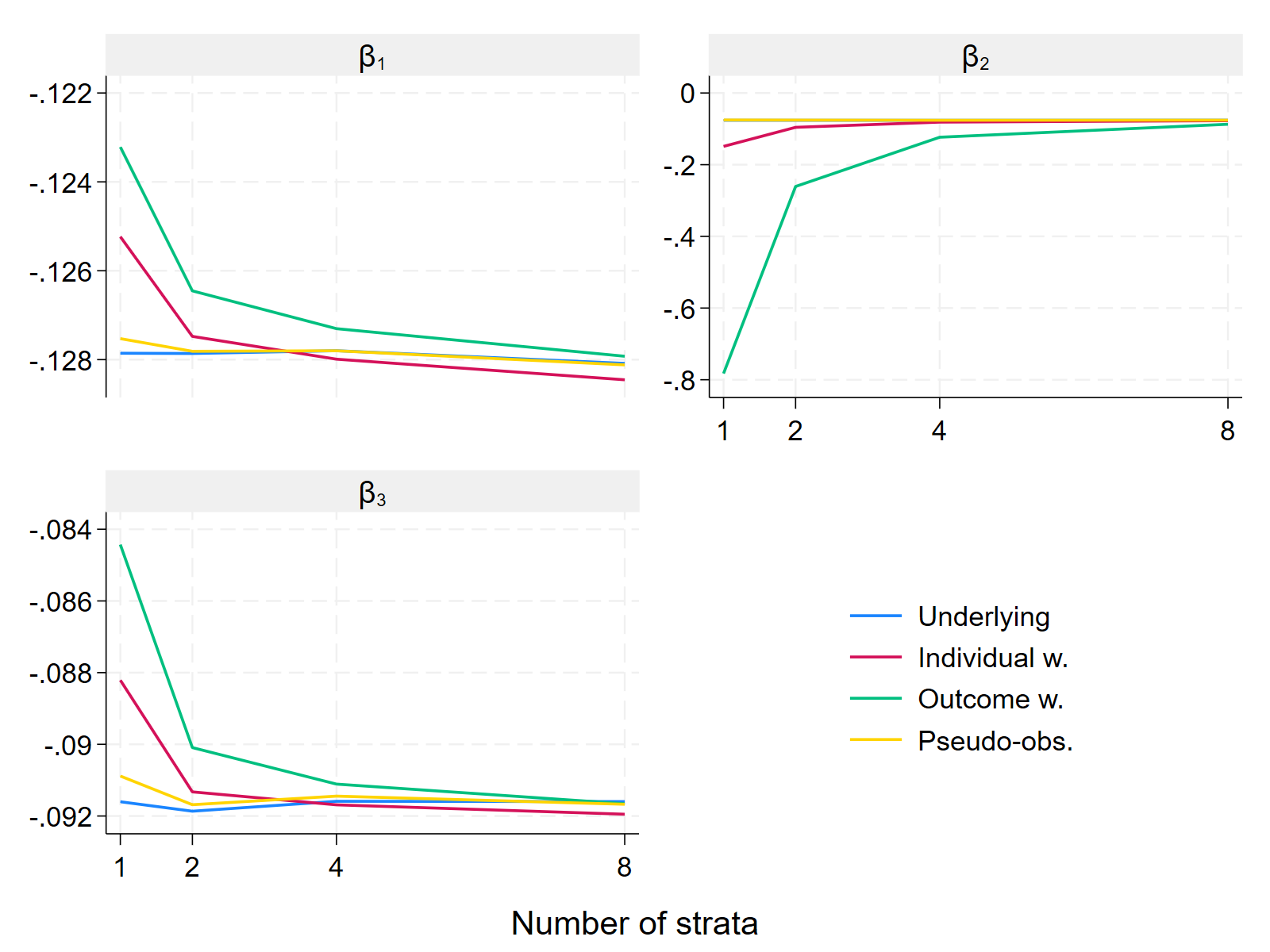}
    \caption{Average parameter estimates for each of three parameters, $\beta_1$, $\beta_2$, and $\beta_3$, for each of the three approaches as well as for the corresponding approach on the underlying uncensored data according to the number of strata used in the estimation of the censoring distribution in scenario~II.}
    \label{fig:sc2_b}
\end{figure}

\begin{figure}
    \centering
    \includegraphics[width=0.98\linewidth]{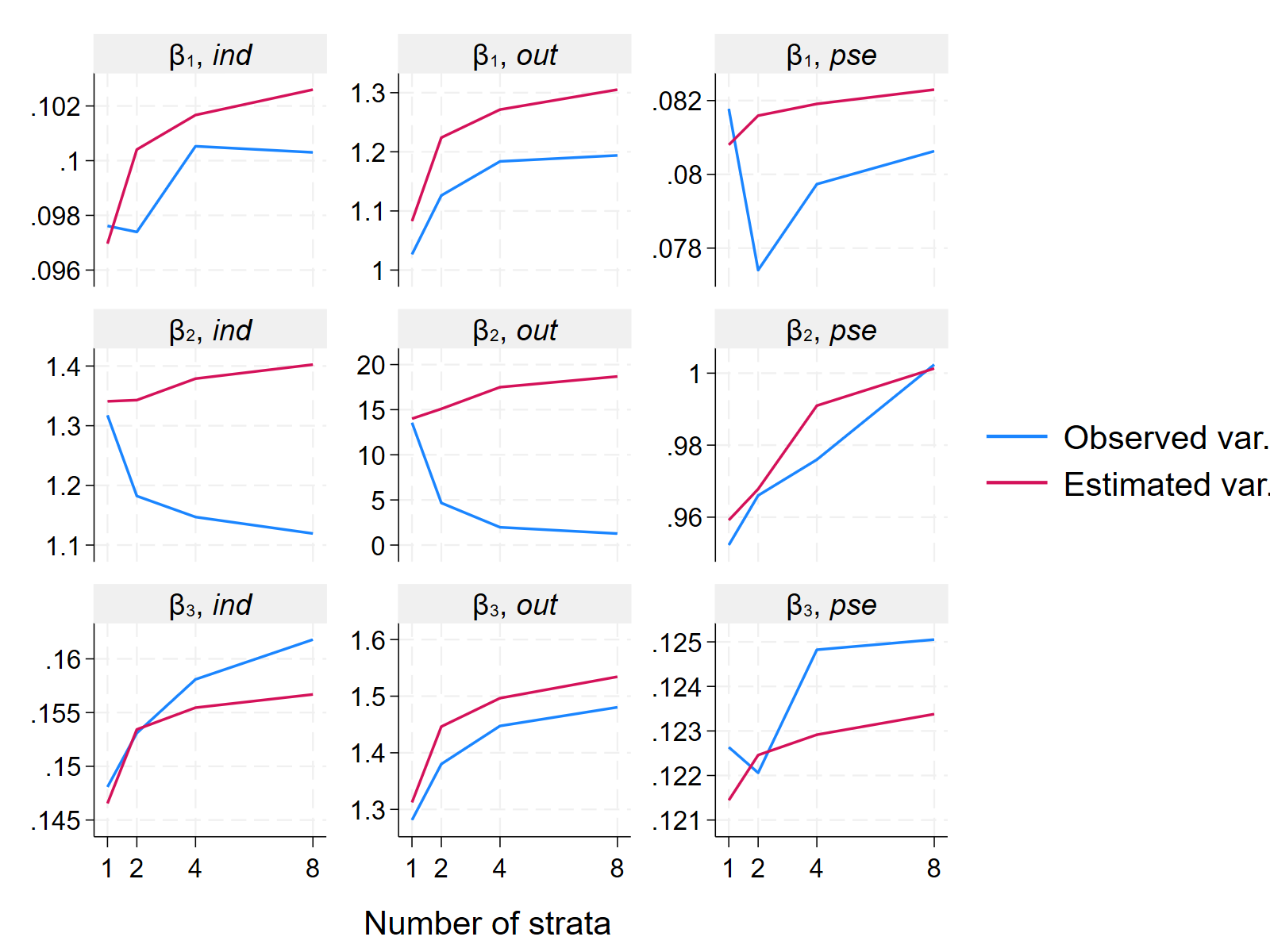}
    \caption{Observed variances and variances estimated by the standard sandwich variance estimator, both scaled by $n=1000$, for estimates of each of the three parameters $\beta_1$, $\beta_2$, and $\beta_3$ and for each of the three approaches according to the number of strata used in the estimation of the censoring distribution in scenario~II.}
    \label{fig:sc2_var}
\end{figure}

% "bias"
As can be seen from Figure~\ref{fig:sc2_b}, all three approaches produce parameter estimates that on average resemble the average parameter estimates from the uncensored data when a large degree of stratification is used. The level of resemblance increases with the number of strata. At a low number of strata, the outcome weighting approach produces the worst resemblance and the pseudo-observation approach the best resemblance in this setting. This is in line with an earlier observation that the pseudo-observation approach may have the best chance of having a limit close to the best fit in uncensored data. 

The results on variance and variance estimation are presented in Figure~\ref{fig:sc2_var}. Notably, the outcome weighting approach produces a very large variance in the $\beta_2$ estimates, at least at a low number of strata, in comparison with the other approaches.
As suggested by an earlier observation, the variance in these parameter estimates for the outcome approach does decrease with the number of strata used in the estimation of the censoring distribution. As also suggested by an earlier observation, this decrease is not seen for corresponding standard sandwich variance estimates.
A similar pattern is seen for the individual weighting approach for the $\beta_2$ parameter, but at a much lower level of variance.
Overall, the pseudo-observation approach produces the smallest variances of parameter estimates and these observed variances seem well-reflected by corresponding average variance estimates. Perhaps surprisingly, the observed variances almost seem to tend to increase with the number of strata, but with the scale in mind, it must also be fair to claim that the observed variances do not change much with the number of strata.
There are similarly many apparent deviations from earlier theoretical and asymptotic observations: Observed variances increasing with the number of strata for the individual and outcome weighting approaches; variance estimates that are on average smaller than the observed variances. It is worth noting that some of these theoretical observations were made under the assumption of a true regression model and independent censoring in strata, which does not hold in this example.

In conclusion, this scenario demonstrates a setting with a total failure of the outcome weighting approach in terms of variance and also the failure of the standard sandwich variance estimator, at least for the individual and outcome weighting approaches.

\paragraph{Scenario III - risk ratios and stratifications in a factorial design.}
A $2^5$ factorial design is considered in this scenario with a complete symmetry in the 5 factors. Focus will be on estimation of the risk ratio related to one factor while taking the other factors into account.
The purpose is a comparison of the three approaches in low sample sizes and with a potentially heavy degree of stratification.
There is a potential for $2^5 = 32$ strata. Let a stratum be given by $x = (x_1, x_2, \dots, x_5)$ where $x_1, \dots, x_5 \in \{0,1\}$ denote the 5 factors. In such a given stratum, $T$ is drawn from a uniform distribution on $(0, 1/(0.1 \cdot 1.25^{x_1 + x_2 + \cdots + x_5}))$, and the outcome $Y = \indic{T \leq 1}$ is considered. 
The expectation is $\E(Y \given X=x) = 0.1 \cdot 1.25^{x_1 + \cdots + x_5}$.
For the estimation procedure, consider the model specified by $\mu(\beta; x) = \exp(\beta_0 + \beta_1 \cdot x_1 + \cdots + \beta_5 \cdot x_5)$. This means the true values have $\exp(\beta_j) = 1.25$ for $j=1, \dots, 5$. The parameter $\beta_1$ is considered of primary interest.
The censoring time is drawn, independently from $T$, from a uniform distribution on $(0,5/3)$. 
Scenarios with 2, 6, and 12 observations per stratum are considered. In other words, $n \in \{64, 192, 384\}$.
The three approaches are used with $A(\beta; X) = \frac{\partial}{\partial \beta} \mu(\beta; X)$ and stratification on either no factors, the first factor, the first three factors, and on all five factors.
Note that the leave-one-out calculations of the pseudo-observation approach are carried out at their lower limit of sample size in the case of stratification on all factors and only 2 observations per stratum. 

\begin{table}[htbp]
\begin{adjustwidth}{-6em}{-6em}
\footnotesize
\centering
\begin{tabular}{rrrrrrrrrrrrrrrrr}
  \toprule
$k$ & $n/32$ & $pc_{ind}$ & $pc_{out}$ & $pc_{pse}$ & $\%_{ind}$ & $\%_{out}$ & $\%_{pse}$ & $\widehat{\Var}_{ind}$ & $\widehat{\Var}_{out}$ & $\widehat{\Var}_{pse}$ & $\Var_{ind}$ & $\Var_{out}$ & $\Var_{pse}$ \\ 
  \midrule
  0 & 2 & 73.9 & 89.5 & 91.4 & 81.0 & 84.9 & 89.9 & 71.5 & 97.0 & 98.8 & 166.2 & 135.8 & 117.2 \\ 
  0 & 6 & 99.7 & 100.0 & 99.9 & 94.0 & 97.7 & 97.7 & 70.9 & 69.7 & 63.0 & 73.6 & 55.6 & 51.3 \\ 
  0 & 12 & 100.0 & 100.0 & 100.0 & 96.7 & 97.2 & 97.2 & 49.5 & 46.8 & 43.5 & 44.7 & 38.6 & 35.7 \\ 
  1 & 2 & 73.6 & 89.7 & 91.4 & 80.5 & 83.7 & 90.3 & 71.5 & 100.0 & 103.7 & 171.4 & 145.3 & 127.2 \\ 
  1 & 6 & 99.8 & 100.0 & 100.0 & 94.9 & 98.1 & 97.8 & 71.0 & 70.7 & 64.1 & 71.0 & 54.7 & 51.6 \\ 
  1 & 12 & 100.0 & 100.0 & 100.0 & 97.1 & 97.3 & 97.0 & 50.1 & 47.0 & 43.7 & 45.3 & 39.4 & 38.5 \\ 
  3 & 2 & 73.1 & 89.0 & 90.6 & 80.9 & 82.6 & 86.6 & 72.6 & 102.7 & 104.2 & 169.4 & 174.4 & 173.7 \\ 
  3 & 6 & 99.6 & 99.9 & 99.9 & 94.9 & 97.9 & 97.7 & 72.2 & 75.6 & 70.2 & 70.4 & 58.8 & 57.8 \\ 
  3 & 12 & 100.0 & 100.0 & 100.0 & 97.2 & 97.3 & 96.9 & 49.6 & 47.9 & 44.9 & 42.8 & 39.3 & 39.8 \\ 
  5 & 2 & 77.3 & 88.6 & 81.8 & 83.8 & 87.3 & 73.0 & 55.3 & 105.0 & 66.1 & 92.7 & 123.8 & 291.7 \\ 
  5 & 6 & 99.6 & 99.9 & 100.0 & 93.0 & 98.0 & 89.3 & 77.5 & 92.5 & 106.8 & 95.7 & 78.0 & 129.9 \\ 
  5 & 12 & 100.0 & 100.0 & 100.0 & 96.8 & 97.5 & 96.7 & 52.7 & 51.6 & 54.4 & 45.1 & 42.9 & 43.7 \\ 
   \bottomrule
\end{tabular}
\end{adjustwidth}
\caption{Convergence percentage ($pc$), coverage probability of Wald-type \SI{95}{\percent} confidence intervals based on the standard sandwich variance estimate (\%), median scaled standard sandwich variance estimate ($\widehat{\Var}$), and a scaled variance expression based on the  median absolute deviation of parameter estimates ($\Var$) for each of the three approaches for each of the configurations of used strata in censoring distribution estimation ($k$) and number of observations per strata ($n/32$).} 
\label{tab:sc3}
\end{table}

The results of Scenario~III are summarized in Table~\ref{tab:sc3}. Due to possible non-convergence for this estimation problem at small sample sizes, the convergence percentage is reported. The remaining statistics concern the select replications where convergence is achieved for all three approaches. Convergence is required to happen in 20 iterations. Due to possible outliers of parameter estimates and variance estimates, the median variance estimate scaled by $n$ and the scaled variance expression $\frac{n}{\Phi^{-1}(3/4)^2} \mad^2$ based on the median absolute deviation of parameter estimates, $\mad$, and standard normal cumulative distribution function, $\Phi$, are presented as robust alternatives to observed means and variances.

The individual weighting approach is seen to have the most convergence problems in small sample sizes in this setting. 
The coverage probabilities are generally too low for the three approaches at the small sample sizes, while they are slightly too large at the largest sample size. 
The pseudo-observation approach seems to have the largest problems in terms of coverage with a large degree of stratification at lower sample sizes. 
As suggested by the coverage probabilities, the summary statistics for variance and variance estimation can be quite off in settings with small sample sizes. 
At the largest sample size, the variance estimate summary is larger than the variance expression for the observed parameter estimates, which is in line with the theoretical observations made earlier.
At the largest sample size and the lower degree of stratification, the pseudo-observation approach is producing the lowest numbers in this setting, whereas the outcome weighting approach produces the lowest numbers at a larger degree of stratification.
Overall, nothing seems to be gained in terms of variance for either of the three approaches by applying a larger degree of stratification in this setting.

As a conclusion, this scenario has certainly crossed the line into territory where neither of these approaches work well. The scenario gives an example where the individual weighting approach seems more likely not to achieve convergence, and an example where the pseudo-observation approach is more vulnerable to too much stratification in the estimation of the censoring distribution. It also gives an example of a setting where not much is gained by applying stratification.

\section{Discussion}
\label{sec:discussion}

This paper has extended some of the results presented in Section~2 of \cite{blanche2022logistic}: In addition to the individual and outcome weighting approaches, called IPCW-GLM and OIPCW by \cite{blanche2022logistic} respectively, the pseudo-observation approach is now also considered; the type of outcome and type of model is now more general and not restricted to the logistic model; more results on the consequences of stratification in the estimation of the censoring distribution are given; and the presented results do not depend on continuity of the involved distributions, which may be useful when emulating expressions empirically.
A main point is also an extension of a result from \cite{blanche2022logistic}: Neither of the three approaches will generally have the lowest asymptotic variance. Which of the approaches will produce the lowest variance of parameter estimates will depend on the setting. The expressions given in this paper should help judging which one will win out.
Theoretical observations and simulations in this paper suggest the pseudo-observation approach may tend to have the most advantages such as smaller variance of parameter estimates and smaller bias from a misspecified censoring distribution. 
If it is of interest to use the standard sandwich variance estimator, this variance estimator also seems to be most appropriate for the pseudo-observation approach.
On the other hand, the pseudo-observation approach can be expected to be the most time-consuming approach, and it has also been seen to work less well when only few observations are available per stratum. 
Additionally, the simulation settings used in this paper may favor the pseudo-observation approach by design by having a larger censoring rate, or rather $ \Lambda(\d s)/G(s)$, at later time points rather than earlier time points. This may be realistic, but is of course not guaranteed in applications. 
One simulation scenario saw the outcome weighting approach producing a tremendously large variance in parameter estimates in a setting concerning restricted event time, $T \wedge t$. The theory would suggest this happens because $\Var(B(X) (T \wedge t) \given T > s, Z)$ is large for $s$ close to $t$. Arguably, this is a problem with the outcome definition rather than the outcome weighting approach. A similar model could be studied by focusing on, and weighting, the outcome $t - T \wedge t$, the time lost before $t$. The model could, in principle, be changed along the same lines to $\tilde \mu(\beta; X) = t - \mu(\beta; X)$ without changing the uncensored problem. With this outcome, $\Var(B(X) (t - T \wedge t) \given T > s, Z)$ would be small for $s$ close to $t$, and is expected to resolve the issue although this was not pursued here. 
This issue did not arise in \cite{Scheike2008} where focus was on risk, but it may be worthwhile having in mind when using the outcome weighting approach with other types of outcomes.

A theme of this paper has been the bias of the standard sandwich variance estimator as seen in Theorem~\ref{thm:varest}. The bias is upwards under the assumptions and leads to conservative inference if confidence intervals and statistical tests are based on this variance estimator. This applies to all three approaches. 
In simulation scenario~II, the variance estimate was seen to be much too large for the outcome weighting approach in some settings. In the paper by \cite{blanche2022logistic}, a simulation setting reveals considerable bias of the variance estimator in the individual weighting approach. 
The bias of the variance estimator in the pseudo-observation approach was studied by \cite{Overgaard2018} where the bias was found to be large in extreme cases, but tolerable in many less extreme cases. This seems to be in line with the simulation results of this paper.
Generally, it should be possible to construct a more asymptotically appropriate variance estimate by emulating the asymptotic variance expression empirically. Suggestions along these lines are given in Appendix~A.2 of \cite{blanche2022logistic} for the individual and outcome weighting approaches and in the end of section~3 of \cite{Overgaard2017} for the pseudo-observation approach. These variance estimators have not been studied in this paper.

The three approaches studied in this paper are rather simple and should not be expected to be efficient with the exception of some simple cases. 
A possible refinement is to consider an actual model of the censoring distribution and using fits from such a model in the weight. This is in fact considered by for instance \cite{Robins1992} for the individual weighting approach, by \cite{Scheike2008} for the outcome weighting approach, and by \cite{Overgaard2019} for the pseudo-observation approach. By better fitting the censoring distribution, a reduction in asymptotic variance of the regression parameter estimates can be expected.
Such approaches may also be augmented by including additional terms in the estimating equation that are designed to reduce the asymptotic variance. Including such terms tend to require a working model for the event time and type as well. Such an approach is also considered by \cite{blanche2022logistic}, citing \cite{Robins1992} and \cite{Bang2000}.
In \cite{Martinussen2023} an approach is considered where the the observed data efficient
influence function is emulated directly in a setting concerning risk regression.
Recently, an approach based on so-called censoring unbiased transformations was suggested by \cite{sandqvist2024doubly}, and this approach can in this context best be seen as a refinement of the pseudo-observation approach using working models of the censoring and the outcome to calculate an improved pseudo-observation while obtaining double robustness and oracle efficiency properties.
It has been beyond the scope of this paper to study these approaches more closely.
In contrast, the focus has here been on simple methods that can be applied with access only to standard tools for statistical analysis, such as tools for the Kaplan--Meier estimator, estimation of parameters in a generalized linear model, and the standard sandwich variance estimator.
The assumptions considered do not seem overly restrictive. The most restrictive assumptions are likely the ones imposed on the censoring mechanism. These assumptions can be made less restrictive by appealing to stratification in the censoring distribution estimation. 
It would be of interest to study whether a considerable amount of efficiency can in fact be gained by applying the more advanced approaches mentioned above, or if it is possible in many settings to achieve a reasonable amount of efficiency by applying an appropriate amount of stratification.

\section*{Acknowledgements}
The author would like to thank Jan Pedersen for helpful discussions on the subject and insightful comments on drafts of this paper.

\appendix

\section{Technical results in a broader setting}
\label{sec:appendix}

In this appendix, some technical results that will help to establish the primary results of this paper are considered. 
Sufficient regularity conditions, including positivity, are imposed, but it is of interest not to invoke the independent censoring assumption at this stage. 
This means that the estimators considered may not be consistent for their intended estimand. This is specifically the case for the estimates of $\Lambda$ and $G$, related to the conditional censoring distribution. As is made clear below, the modified Nelson--Aalen estimate of $\Lambda(\cdot \given z)$ uses empirical estimates of $\check S(s \given z) = \P(T > s, C \geq s \given Z = z)$ and $\tilde F_0(s \given z) = \P(\tilde T \leq s, \tilde D = 0 \given Z = z)$, and the limits of the estimates of $\Lambda$ and $G$ are therefore instead 
\begin{align}
\Lambda^*(s \given z) &= \int_0^s \frac{1}{\check S(u \given z)} \tilde F_0(\d u \given z), \\
G^*(s \given z) & = \Prodi_0^s(1-\Lambda^*(\d u \given z)),
\end{align}
at least under a condition of positivity, $\check S(s \given z) > 0$ for (almost) all $z$. 
In the following, the notation $\mathcal{X} = (T, D, Z, X)$ is used for the underlying information, $\mathcal{\tilde X} = (\tilde T, \tilde D, Z)$ is used for the information used in estimation of the weights, and $\mathcal{\bar X} = (\tilde T, \tilde D, Z, X)$ is used for the observed information used in the estimating equation. Also, let
\begin{equation}
M(\mathcal{\tilde X}; s \given z) = N(\mathcal{\tilde X}; s; z) - \int_0^s R(\mathcal{\tilde X}; u; z) \Lambda^*(\d u \given z),
\end{equation}
where
\begin{align}
N(\mathcal{\tilde X}; s; z) &= \indic{\tilde T \leq s, \tilde D = 0, Z = z}, \\
R(\mathcal{\tilde X}; s; z) &= \indic{\tilde T > s, Z = z} + \indic{\tilde T = s, \tilde D=0, Z=z}.
\end{align}
The notation $p_Z(z) = \P(Z=z)$ is used for the distribution of $Z$ and $\tilde S(s \given z) = \P(\tilde T > s \given Z=z)$ is used for the conditional distribution of $\tilde T$. It may be noted that $\tilde S(s \given z) = \check S(s \given z) (1 - \Delta \Lambda^*(s \given z))$ without invoking further assumptions.
To be more explicit on the estimation of the weights, consider data on $n$ replications and let $\bar N(s; z) = \frac{1}{n} \sum_{i=1}^n N(\mathcal{\tilde X}_i; s; z)$ and $\bar R(s; z) = \frac{1}{n} \sum_{i=1}^n R(\mathcal{\tilde X}_i; s; z)$. Then the estimates of the cumulative censoring hazard and the censoring survival function are given by
\begin{align}
\hat \Lambda(s \given z) = \int_0^s \frac{1}{\bar R(u; z)} \bar N(\d u; z), \\
\hat G(s \given z) = \Prodi_0^s (1 - \hat \Lambda(\d u \given z) ), \label{eq:G_est}
\end{align}
A useful result for approximating the applied weights is the following.

\begin{lemma} \label{lemma:G_approximation}
For a given $z$ with $p_Z(z) > 0$, it holds, for any $s$ such that $\tilde S(s \given z) > 0$, that 
\begin{equation}
\sup_{u \in [0,s]} \big|\frac{1}{\hat G(u \given z)} - \frac{1}{G^*(u \given z)} + \frac{1}{n} \sum_{i=1}^n \frac{1}{G^*(u \given z)} \frac{\dot G(\mathcal{\tilde X}_i; u \given z)}{G^*(u \given z)}\big| = o_{\P}(n^{-1/2})
\end{equation}
where
\begin{equation}
\dot G(\mathcal{\tilde X}; u \given z) = - G^*(u \given z) \int_0^u \frac{1}{\tilde S(v \given z) p_Z(z)} M(\mathcal{\tilde X}; \d v \given z)
\end{equation}
\end{lemma}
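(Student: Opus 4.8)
The plan is to recognise the assertion as the classical i.i.d.\ linearisation of the Nelson--Aalen estimator composed with the Duhamel equation for product integrals, carried out so that the remainder is uniformly $o_{\P}(n^{-1/2})$ and without invoking any continuity of $\Lambda^*$. Throughout, $z$ is fixed with $p_Z(z) > 0$ and $s$ is fixed with $\tilde S(s \given z) > 0$.

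First I would establish an exact identity for the Nelson--Aalen part. Writing $\bar M(s;z) = \frac1n\sum_{i=1}^n M(\mathcal{\tilde X}_i; s \given z)$, the definitions of $\hat\Lambda$, $\Lambda^*$, $\bar N$ and $\bar R$ give $\hat\Lambda(s \given z) - \Lambda^*(s \given z) = \int_0^s \bar R(u;z)^{-1}\, \bar M(\d u;z)$ exactly. Then I would record the relevant stochastic orders on $[0,s]$. Because $\E M(\mathcal{\tilde X}; \cdot \given z) \equiv 0$ --- which is precisely how $\Lambda^*$ is constructed, using $\E R(\mathcal{\tilde X}; u; z) = \check S(u \given z) p_Z(z) =: r^*(u;z)$ --- the process $\sqrt n\, \bar M(\cdot;z)$ is a sum of i.i.d.\ mean-zero processes of bounded variation, hence asymptotically tight, with $\sup_{[0,s]}|\bar M| = O_{\P}(n^{-1/2})$ and total variation $O_{\P}(1)$. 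Similarly $\bar R(\cdot;z) \to r^*(\cdot;z)$ uniformly with $\sup_{[0,s]}|\bar R - r^*| = O_{\P}(n^{-1/2})$, and $\inf_{[0,s]} r^* > 0$ since $\check S(\cdot \given z) \geq \tilde S(s \given z) > 0$; thus $\bar R$ is bounded away from $0$ on $[0,s]$ with probability tending to one, and $G^*(\cdot \given z)$ is likewise bounded away from $0$ there.

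The main work is the linearisation of the Nelson--Aalen part: write $\hat\Lambda - \Lambda^* = \int_0^\cdot r^*(u;z)^{-1}\, \bar M(\d u;z) + \rho_n$ with $\rho_n(s) = \int_0^s \big(\bar R(u;z)^{-1} - r^*(u;z)^{-1}\big)\, \bar M(\d u;z)$, and prove $\sup_{[0,s]}|\rho_n| = o_{\P}(n^{-1/2})$. Splitting $\bar R^{-1} - (r^*)^{-1} = (r^* - \bar R)(r^*)^{-2} + (r^* - \bar R)\big(\bar R^{-1}(r^*)^{-1} - (r^*)^{-2}\big)$, the second piece integrated against $\bar M$ is uniformly $O_{\P}(n^{-1/2})\cdot O_{\P}(n^{-1/2})\cdot O_{\P}(1) = o_{\P}(n^{-1/2})$ by the crude bounds above. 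The first piece, $\int_0^s (r^* - \bar R)(r^*)^{-2}\, \bar M(\d u;z)$, is a product of two empirical processes built from the \emph{same} sample, for which the naive total-variation bound only gives $O_{\P}(n^{-1/2})$; here I would expand $r^* - \bar R = -\frac1n\sum_j R_j^{\circ}$ and $\bar M = \frac1n\sum_k M_k$ with $R_j^{\circ} = R(\mathcal{\tilde X}_j; \cdot;z) - \E R(\mathcal{\tilde X};\cdot;z)$ and $M_k = M(\mathcal{\tilde X}_k; \cdot \given z)$, and separate the diagonal $j=k$ from the off-diagonal $j\neq k$ contribution. The diagonal term is $-n^{-2}\sum_j \int_0^\cdot R_j^{\circ}(r^*)^{-2}\, \mathrm{d}M_j = O_{\P}(n^{-1})$ uniformly, each summand being bounded of bounded total variation. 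The off-diagonal term is a degenerate $U$-process: $\E R^{\circ} = 0$, $\E M = 0$, and $R_j^{\circ}$ is independent of $M_k$ when $j \neq k$, so each summand has mean zero; since the integrands are indicators and the integrators are of bounded variation, the relevant function classes are of VC type, so a maximal inequality for degenerate $U$-processes yields $o_{\P}(n^{-1/2})$ uniformly in $s$. \textbf{This uniform control of the degenerate term is the step I expect to be the main obstacle}; an alternative is to invoke Hadamard differentiability of the Nelson--Aalen map and the functional delta method with a suitable remainder, if one prefers to import that machinery.

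Finally I would pass from $\hat\Lambda$ to $\hat G$ and then to $1/\hat G$. The Duhamel equation for product integrals, valid without continuity, gives $\hat G(u \given z) - G^*(u \given z) = -\int_{(0,u]} \hat G(v- \given z)\, G^*(u \given z)\, G^*(v \given z)^{-1}\, \mathrm{d}(\hat\Lambda - \Lambda^*)(v \given z)$. Substituting the expansion of $\hat\Lambda - \Lambda^*$, replacing $\hat G(v- \given z)$ by $G^*(v- \given z)$ --- the error is $o_{\P}(n^{-1/2})$ uniformly since $\sup_{[0,s]}|\hat G - G^*| = O_{\P}(n^{-1/2})$, $G^*$ is bounded away from $0$, and all integrators have $O_{\P}(1)$ total variation --- and using $G^*(v- \given z)/G^*(v \given z) = (1 - \Delta\Lambda^*(v \given z))^{-1}$ together with $\check S(v \given z)(1-\Delta\Lambda^*(v \given z)) = \tilde S(v \given z)$, one gets $\hat G(u \given z) - G^*(u \given z) = -G^*(u \given z)\int_0^u \big(\tilde S(v \given z) p_Z(z)\big)^{-1}\, \bar M(\d v;z) + o_{\P}(n^{-1/2}) = \frac1n\sum_{i=1}^n \dot G(\mathcal{\tilde X}_i; u \given z) + o_{\P}(n^{-1/2})$ uniformly on $[0,s]$. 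Then from $\hat G(u \given z)^{-1} - G^*(u \given z)^{-1} = -\big(\hat G(u \given z) G^*(u \given z)\big)^{-1}\big(\hat G(u \given z) - G^*(u \given z)\big)$, replacing the leading $\hat G(u \given z)^{-1}$ by $G^*(u \given z)^{-1}$ at a uniform cost of $o_{\P}(n^{-1/2})$ yields $\hat G(u \given z)^{-1} - G^*(u \given z)^{-1} = -\frac1n\sum_{i=1}^n G^*(u \given z)^{-1}\, \dot G(\mathcal{\tilde X}_i; u \given z)\, G^*(u \given z)^{-1} + o_{\P}(n^{-1/2})$ uniformly in $u \in [0,s]$, which is the claim.
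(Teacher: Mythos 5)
Your route is genuinely different from the paper's. You linearise the (modified) Nelson--Aalen estimator by hand, push the expansion through the Duhamel equation, and control the quadratic cross term in $\rho_n$ by a diagonal/off-diagonal split and a maximal inequality for degenerate $U$-processes. The paper instead views $F \mapsto 1/G(F;\cdot \given z)$ as a functional on spaces of functions of bounded $p$-variation and obtains the whole expansion from one differentiability statement: since $\|F_n-F_0\| = O_{\P}(n^{(1-p)/p})$ and the functional is more than once continuously differentiable, the first-order remainder is $O_{\P}(n^{2(1-p)/p}) = o_{\P}(n^{-1/2})$ for $p>4/3$, uniformly, with no case-by-case analysis; that framework is also what delivers the second-order expansion needed later for the pseudo-observations. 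Your approach is more elementary and self-contained, but its price is that \emph{every} place where two $n^{-1/2}$-size empirical quantities are integrated against each other must be argued separately --- and this is where your write-up has a concrete gap.

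In the Duhamel step you replace $\hat G(v- \given z)$ by $G^*(v- \given z)$ and claim the error is $o_{\P}(n^{-1/2})$ because $\sup_{[0,s]}|\hat G - G^*| = O_{\P}(n^{-1/2})$ and the integrators have $O_{\P}(1)$ total variation. That only gives $O_{\P}(n^{-1/2})$: the leading integrator $\int_0^\cdot r^*(v;z)^{-1}\bar M(\d v;z)$ has total variation that does not shrink with $n$ (for continuous censoring it is close to $\hat\Lambda(s\given z)+\Lambda^*(s\given z)$), so sup-norm times total variation is of exactly the same order as the term you keep. The offending term,
\begin{equation*}
\int_0^u \bigl(\hat G(v- \given z)-G^*(v- \given z)\bigr)\,\frac{G^*(u \given z)}{G^*(v \given z)\, r^*(v;z)}\,\bar M(\d v;z),
\end{equation*}
is of precisely the same nature as the first piece of your $\rho_n$ and needs the same second-order treatment: expand $\hat G - G^*$ into its i.i.d.\ average plus a uniformly $o_{\P}(n^{-1/2})$ remainder and rerun the diagonal/off-diagonal degenerate $U$-process argument, or use a Young-integral/$p$-variation bound, or fall back on the delta-method route you mention as an alternative (which is essentially the paper's proof). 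A smaller omission of the same kind: the contribution $\int_0^u \hat G(v-\given z) G^*(u\given z)G^*(v\given z)^{-1}\,\d\rho_n(v)$ is integrated against a remainder for which you only have a sup-norm bound, so you should say explicitly that you integrate by parts against the bounded-variation factor. With these repairs --- which your own tools supply --- the argument goes through; the remaining caveat is the one you flag yourself, namely the uniform-in-$s$ maximal inequality for the degenerate $U$-process.
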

\begin{proof}
One way of proving this result is by taking a functional approach similar to the approach of \cite{Overgaard2019}. Viewing the estimator as an application of a functional between spaces of functions of bounded $p$-variation to the empirical distribution of the data on $\mathcal{\tilde X}$, the result is obtained by finding the derivative of the functional. 

For any distribution function, $F$, of $\mathcal{\tilde X}$, at risk and censoring indicating functionals defined by
\begin{align*}
R(F; s; z) &= \int \indic{T > s, C \geq s, Z = z} \d F, \\
N(F; s; z) &= \int \indic{\tilde T \leq s, \tilde D = 0, Z = z} \d F,
\end{align*}
that take the expectation with respect to that distribution, can be considered. Next, a corresponding cumulative censoring hazard can be defined by $\Lambda(F; s \given z) = \int_0^s \frac{1}{R(F; u; z)} N(F; \d u ; z)$, and the corresponding censoring distribution is given by $G(F; s \given z) = \prodi_0^s(1-\Lambda(F; \d u \given z))$. Finally, a functional defined by 
\begin{equation}
    \phi(F; s \given z) = \frac{1}{G(F; s \given z)}
\end{equation}
will result in $1/\hat G( \cdot \given z)$ when applied to the empirical distribution, and it may be noted that $1/G^*( \cdot \given z)$ is obtained when $\phi$ is applied to the true distribution. 
A derivative on the form
\begin{equation}
    \phi_F'(f; s \given z) = - \frac{1}{G(F; s \given z)} \frac{G_F'(f; s \given z)}{G(F; s \given z)}
\end{equation}
is expected, where, according to differentiability results on the product integral,
\begin{equation}
    G_F'(f; s \given z) = - G(F; s \given z) \int_0^s \frac{1}{1- \Delta \Lambda(F; u \given z)} \Lambda_F'(f; \d u \given z)
\end{equation}
and, owing to bilinearity of the integral and linearity of $R$ and  $N$,
\begin{equation}
    \Lambda_F'(f; s \given z) = \int_0^s \frac{1}{R(F; u; z)} N(f; \d u; z) - \int_0^s \frac{R(f; u; z)}{R(F; u; z)^2} N(F; \d u \given z).
\end{equation}
With these expressions in hand, it can be seen that the result follows if it can be argued that $\phi(F_n; \cdot \given z)$ is sufficiently close to $\phi(F_0; \cdot \given z) + \phi_{F_0}'(F_n - F_0; \cdot \given z)$ for large $n$ where $F_n$ is the empirical distribution and $F_0$ the true distribution.
Note for instance how $\Lambda_{F_0}'(F_n - F_0; s \given z) = \frac{1}{n} \sum_i \int_0^s \frac{1}{R(F_0;u;z)} M(\mathcal{\tilde X}_i; \d u \given z)$ and $(1- \Delta \Lambda(F_0; u \given z))R(F_0;u;z) = (1- \Delta \Lambda^*(u \given z)) \check S(u \given z) p_Z(z) = \tilde S(u \given z) p_Z(z)$.
Under the stated assumptions, the convergence result in $p$-variation over $[0,s]$
\begin{equation}
    \|\phi(F_n; \cdot \given z) - \phi(F_0; \cdot \given z) - \phi_{F_0}'(F_n - F_0; \cdot \given z)\|_{[p]} = O_{\P}(n^{2\frac{1-p}{p}})
\end{equation}
holds for $p \in (1, 2)$ and so the same convergence order holds in supremum norm over $[0,s]$.
The result is based on the convergence order $\|F_n - F_0\| = O_{\P}(n^{(1-p)/p})$ in a norm based on $p$-variation and that $\phi$ is more than once continuously differentiable, yielding a first order remainder of order $O(\|F_n - F_0\|^2)$ as $F_n$ approaches $F_0$.
The required convergence order is obtained for $p > 4/3$, but faster convergence orders, almost $O_{\P}(n^{-1})$, can also be obtained by this argument by considering $p$ close to 2.
See \cite{Overgaard2019} for further details on the choice of norm and the differentiability results.
\end{proof}

\begin{lemma} \label{lemma:influence_theta}
Assuming $\tilde S(t \given Z) > 0$ almost surely, the estimator 
\begin{equation}
\hat \theta = \frac{1}{n} \sum_{i=1}^n \hat W_i Y_i
\end{equation}
has first order influence function 
\begin{equation}
\dot \theta(\mathcal{\tilde X}) = W^*Y - \E(W^* Y) + \int_0^{t-} e^*(u \given Z) M(\mathcal{\tilde X}; \d u \given Z) 
\end{equation}
and second order influence function
\begin{equation}
\begin{aligned}
\ddot \theta(\mathcal{\tilde X}_1, \mathcal{\tilde X}_2) &= \int_0^{t-} \big(W_1^* Y_1 \frac{\indic{\tilde T_1 > s} \indic{Z_1=Z_2}}{\tilde S(s \given Z_2) p_Z(Z_2)} \\
& \hspace{-2em} - e^*(s \given Z_2) \big(\frac{R(\mathcal{\tilde X}_1; s; Z_2)}{\check S(s \given Z_2) p_Z(Z_2)} - \dot \Lambda(\mathcal{\tilde X}_1; s \given Z_2) \big) \big) M(\mathcal{\tilde X}_2; \d s \given Z_2) \\
& \phantom{{}=} + \int_0^{t-} \big(W_2^* Y_2 \frac{\indic{\tilde T_2 > s} \indic{Z_2=Z_1}}{\tilde S(s \given Z_1) p_Z(Z_1)} \\
& \hspace{-2em} - e^*(s \given Z_1) \big(\frac{R(\mathcal{\tilde X}_2; s; Z_1)}{\check S(s \given Z_1) p_Z(Z_1)} - \dot \Lambda(\mathcal{\tilde X}_2; s \given Z_1) \big) \big) M(\mathcal{\tilde X}_1; \d s \given Z_1) \\
\end{aligned}
\end{equation}
where $W^* = \indic{C \geq T \wedge t}/G^*(T \wedge t- \given Z)$ and $e^*(s \given z) = \E(W^*Y \given \tilde T > s, Z = z)$ and
\begin{equation} \label{eq:dotLambda}
\dot \Lambda(\mathcal{\tilde X}; s \given z) = \int_0^s \frac{1}{\tilde S(u \given z) p_Z(z)}M(\mathcal{\tilde X}; \d s \given z).    
\end{equation}

\end{lemma}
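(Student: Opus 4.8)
The plan is to treat $\hat\theta = \frac1n\sum_{i=1}^n \hat W_i Y_i$ as the value at the empirical distribution $F_n$ of $\mathcal{\tilde X}$ of the functional
\[
\Psi(F) = \int \indic{c \geq \tau\wedge t}\, y\big(\tau\wedge t,\, \delta\,\indic{\tau\leq t}\big)\, \frac{1}{G(F;\, \tau\wedge t-\given z)}\,\d F(\tau,\delta,z),
\]
where $G(F;\cdot\given z)$ is the product-integral functional built in the proof of Lemma~\ref{lemma:G_approximation}, so that $\Psi(F_n) = \hat\theta$ and $\Psi(F_0) = \E(W^* Y)$, and then to read off the two influence functions from the second-order von Mises expansion
\[
\hat\theta = \Psi(F_0) + \Psi_{F_0}'(F_n - F_0) + \tfrac{1}{2}\,\Psi_{F_0}''(F_n - F_0,\, F_n - F_0) + o_{\P}(n^{-1}),
\]
by identifying $\Psi_{F_0}'(F_n - F_0) = \frac1n\sum_i \dot\theta(\mathcal{\tilde X}_i)$ and $\tfrac{1}{2}\Psi_{F_0}''(F_n - F_0, F_n - F_0) = \frac1{n^2}\sum_{i,j}\ddot\theta(\mathcal{\tilde X}_i,\mathcal{\tilde X}_j)$. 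The remainder is controlled exactly as in Lemma~\ref{lemma:G_approximation}: working in a norm based on $p$-variation with $p\in(3/2,2)$ one has $\|F_n - F_0\| = O_{\P}(n^{(1-p)/p})$, and $\Psi$ — a composition of the linear maps $R(\cdot;\cdot;z)$ and $N(\cdot;\cdot;z)$, bilinear integration, the product integral, and the reciprocal — is more than twice continuously differentiable, so the post-quadratic remainder is $O_{\P}(\|F_n - F_0\|^3) = o_{\P}(n^{-1})$; see \cite{Overgaard2019}. Throughout one uses that $Z$ takes finitely many values and $Y$ is bounded, so that the relevant expectations are finite, the suprema in Lemma~\ref{lemma:G_approximation} can be taken up to $t$ in each stratum (using $\tilde S(t\given Z) > 0$), and the diagonal terms of the $V$-statistics below are $O_{\P}(n^{-1})$ and absorbed into the $n^{-2}$ double sums.

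For the first derivative I would write $\Psi(F) = \int g(x;F)\,\d F(x)$ and split $\Psi_{F_0}'(F_n - F_0)$ into the contribution of perturbing the integrating measure, $\int g(x;F_0)\,\d(F_n - F_0)(x) = \frac1n\sum_i W_i^* Y_i - \E(W^* Y)$, and the contribution of perturbing $1/G(F;\cdot)$ inside. Using the derivative identified in Lemma~\ref{lemma:G_approximation} together with $\dot G(\mathcal{\tilde X};u\given z)/G^*(u\given z) = -\dot\Lambda(\mathcal{\tilde X};u\given z)$ of \eqref{eq:dotLambda}, the latter contribution is $\frac1n\sum_j \int W^* Y\,\dot\Lambda(\mathcal{\tilde X}_j;\tilde T\wedge t-\given Z)\,\d F_0$; rewriting $\int_0^{\tilde T\wedge t-} = \int_0^{t-}\indic{\tilde T > v}$ and applying the collapsing identity $\E\big(W^* Y\,\indic{\tilde T > v,\, Z = z}\big) = e^*(v\given z)\,\tilde S(v\given z)\,p_Z(z)$, which cancels the $1/(\tilde S\, p_Z)$ in $\dot\Lambda$, it reduces to $\frac1n\sum_j \int_0^{t-} e^*(u\given Z_j)\, M(\mathcal{\tilde X}_j;\d u\given Z_j)$. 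Hence $\dot\theta$ comes out as stated. The same conclusion follows even more directly by applying Lemma~\ref{lemma:G_approximation} to $\hat\theta$ to get $\hat\theta = \frac1n\sum_i W_i^* Y_i + \frac1{n^2}\sum_{i,j} W_i^* Y_i \int_0^{\tilde T_i\wedge t-}M(\mathcal{\tilde X}_j;\d v\given Z_i)/(\tilde S(v\given Z_i)\, p_Z(Z_i)) + o_{\P}(n^{-1/2})$ and taking the Hájek projection of the $V$-statistic, whose kernel has conditional mean zero in its second argument and conditional mean $\int_0^{t-} e^*(u\given Z_j)M(\mathcal{\tilde X}_j;\d u\given Z_j)$ given its second argument, again by the collapsing identity.

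For the second derivative I would compute $\tfrac{1}{2}\Psi_{F_0}''(F_n - F_0, F_n - F_0)$, which decomposes into a bilinear piece $\int (\partial_F g)(x;F_0;F_n - F_0)\,\d(F_n - F_0)(x)$ and a pure piece $\tfrac{1}{2}\int (\partial_F^2 g)(x;F_0;F_n - F_0,F_n - F_0)\,\d F_0(x)$, where $\partial_F$ denotes differentiation in the distribution argument. The bilinear piece, after inserting the first-order derivative of $1/G$ and using $\dot G/G^* = -\dot\Lambda$, is a $V$-statistic built from $W_1^* Y_1\,\indic{\tilde T_1 > s}\,\indic{Z_1 = Z_2}/(\tilde S(s\given Z_2)p_Z(Z_2))$ against $M(\mathcal{\tilde X}_2;\d s\given Z_2)$, minus the double-sum reconstruction of its own Hájek projection. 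The pure piece requires $\partial_F^2[1/G(F;\cdot\given z)]$, obtained by differentiating the composition $\Lambda(F;\cdot\given z)\mapsto G(F;\cdot\given z)\mapsto 1/G(F;\cdot\given z)$ a second time and substituting $\Lambda_{F_0}'(F_n - F_0;\cdot\given z) = \frac1n\sum_i\int_0^{\cdot} M(\mathcal{\tilde X}_i;\d u\given z)/(\check S(u\given z)p_Z(z))$, the ratio $R(F_n - F_0;\cdot;z)/R(F_0;\cdot;z) = \frac1n\sum_i R(\mathcal{\tilde X}_i;\cdot;z)/(\check S(\cdot\given z)p_Z(z)) - 1$, and the second-order correction term of the product integral. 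Simplifying both pieces via the collapsing identity, collecting, and symmetrizing over ordered pairs, the surviving kernel is $W_1^* Y_1\,\indic{\tilde T_1 > s}\,\indic{Z_1 = Z_2}/(\tilde S(s\given Z_2)p_Z(Z_2))$ attached to $M(\mathcal{\tilde X}_2;\d s\given Z_2)$, minus $e^*(s\given Z_2)\big(R(\mathcal{\tilde X}_1;s;Z_2)/(\check S(s\given Z_2)p_Z(Z_2)) - \dot\Lambda(\mathcal{\tilde X}_1;s\given Z_2)\big)$ attached to the same $M(\mathcal{\tilde X}_2;\d s\given Z_2)$, plus the $1\leftrightarrow 2$ swap — i.e.\ the stated $\ddot\theta$. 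A reassuring final check is that this $\ddot\theta$ is degenerate, $\E(\ddot\theta(\mathcal{\tilde X}_1,\mathcal{\tilde X}_2)\given\mathcal{\tilde X}_1) = \E(\ddot\theta(\mathcal{\tilde X}_1,\mathcal{\tilde X}_2)\given\mathcal{\tilde X}_2) = 0$, which follows from $\E(M(\mathcal{\tilde X};\cdot\given z)) = 0$, $\E(R(\mathcal{\tilde X};s;z)) = \check S(s\given z)p_Z(z)$, $\E(\dot\Lambda(\mathcal{\tilde X};s\given z)) = 0$, and the collapsing identity.

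The step I expect to be the main obstacle is the pure second-order piece: carrying out $\partial_F^2[1/G(F;\cdot\given z)]$ by chaining the second derivatives of $\Lambda(F;\cdot\given z)$, of the product integral, and of the reciprocal, and then simplifying the resulting double expectation over $F_0$ down to the compact martingale form above — together with the bookkeeping of how the bilinear and pure pieces combine and symmetrize (and how the first-order $e^*$-martingale content is correctly assigned to $\dot\theta$ rather than $\ddot\theta$) so that the final kernel comes out exactly as stated and degenerate. By contrast, the remainder estimates and the $V$-statistic/Hoeffding bookkeeping are routine given the differentiability framework of \cite{Overgaard2019}.
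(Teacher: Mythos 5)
Your proposal is correct and follows essentially the same route as the paper: $\hat\theta$ is viewed as a functional of the empirical distribution of $\mathcal{\tilde X}$, differentiated twice through the chain $(N,R)\mapsto\Lambda\mapsto G\mapsto 1/G$ using the product-integral calculus and $p$-variation remainder bounds of \cite{Overgaard2019}, evaluated at directions $\delta_{\mathcal{\tilde X}}-F_0$, and simplified with the identity $\E\big(W^*Y\,\indic{\tilde T>s,\,Z=z}\big)=e^*(s\given z)\,\tilde S(s\given z)\,p_Z(z)$, exactly as in the paper's proof. One bookkeeping caveat: with the convention used here (and needed for the later expansion \eqref{eq:pseudo_approx}), $\ddot\theta(\mathcal{\tilde X}_1,\mathcal{\tilde X}_2)$ is the full second derivative $\theta''_{F_0}(\delta_{\mathcal{\tilde X}_1}-F_0,\delta_{\mathcal{\tilde X}_2}-F_0)$, so the quadratic term of the von Mises expansion is $\tfrac{1}{2n^2}\sum_{i,j}\ddot\theta(\mathcal{\tilde X}_i,\mathcal{\tilde X}_j)$ rather than $\tfrac{1}{n^2}\sum_{i,j}\ddot\theta(\mathcal{\tilde X}_i,\mathcal{\tilde X}_j)$ as you wrote; with that factor-of-two convention fixed, your kernels and degeneracy check agree with the stated result.
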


\begin{proof}
In similarity to the proof of Lemma~\ref{lemma:G_approximation}, it is possible to use a functional approach, and the functionals defined in that proof are reused here. Specifically, it is possible to see the estimator as an evaluation of a functional 
\begin{equation}
\theta: F \mapsto \int W(F) Y \d F = \int \frac{\indic{C \geq T \wedge t}}{G(F; \tilde T \wedge t- \given Z)} Y \d F
\end{equation}
at the empirical distribution of $\mathcal{\tilde X}$. 
The first order derivative of the functional $\theta$ is given by
\begin{equation}
\theta_F'(f) = \int W(F) Y \d f + \int W_F'(f) Y \d F
\end{equation}
while the second order derivative is given by
\begin{equation}
\theta_F''(f, g) = \int W_F'(g) Y \d f + \int W_F'(f) Y \d g + \int W_F''(f, g) Y \d F.
\end{equation}
Here,
\begin{equation}
W_F'(f) = -\frac{\indic{C \geq T \wedge t}}{G(F; \tilde T \wedge t- \given Z)} \frac{G_F'(f; \tilde T \wedge t- \given Z)}{G(F; \tilde T \wedge t- \given Z)}
\end{equation}
and
\begin{equation}
\begin{aligned}
W_F''(f,g) &= 2 \frac{\indic{C \geq T \wedge t}}{G(F; \tilde T \wedge t- \given Z)} \frac{G_F'(f; \tilde T \wedge t- \given Z)}{G(F; \tilde T \wedge t- \given Z)} \frac{G_F'(g; \tilde T \wedge t- \given Z)}{G(F; \tilde T \wedge t- \given Z)} \\
& \phantom{{}=} - \frac{\indic{C \geq T \wedge t}}{G(F; \tilde T \wedge t- \given Z)} \frac{G_F''(f, g; \tilde T \wedge t- \given Z)}{G(F; \tilde T \wedge t- \given Z)}.
\end{aligned}
\end{equation}
At this stage it may be worthwhile to introduce a functional by
\begin{equation}
\Gamma_F(f; s \given z) = -\frac{G_F'(f; s \given z)}{G(F; s \given z)} = \int_0^s \frac{1}{1- \Delta \Lambda(F; u \given z)} \Lambda_F'(f; \d u \given z).
\end{equation}
Note that 
\begin{equation}
\Lambda_F''(f,g; s \given z) = - \int_0^s \frac{R(g; u; z)}{R(F; u; z)} \Lambda_F'(f; \d u \given z) - \int_0^s \frac{R(f; u; z)}{R(F; u; z)} \Lambda_F'(g; \d u \given z)
\end{equation}
such that, in a short notation,
\begin{equation}
\begin{aligned}
G_F''(f,g; s \given z) &= \int_0^s \int_0^{u-} \prodi_0^{v-} (1- \d \Lambda) \Lambda_F'(g ; \d v \given z) \prodi_v^{u-} (1 - \d \Lambda) \Lambda_F'(f; \d u \given z) \prodi_u^s(1- \d \Lambda) \\
& \phantom{{}=} + \int_0^s \prodi_0^{u-} (1- \d \Lambda) \Lambda_F'(f ; \d u \given z) \int_u^s \prodi_u^{v-} (1 - \d \Lambda) \Lambda_F'(g; \d v \given z) \prodi_v^s(1- \d \Lambda) \\
& \phantom{{}=} - \int_0^s \prodi_0^{u-} (1- \d \Lambda) \Lambda_F''(f,g; \d u \given z) \prodi_u^s (1 - \d \Lambda) \\
&= G(F; s \given z) \int_0^s (\Gamma_F(g; u- \given z)+\frac{R(g; u; z)}{R(F; u; z)}) \Gamma_F(f; \d u \given z) \\
& \phantom{{}=} + G(F; s \given z)\int_0^s (\Gamma_F(f; u- \given z)+\frac{R(f; u; z)}{R(F; u; z)}) \Gamma_F(g; \d u \given z).
\end{aligned}
\end{equation}
Using that $\Gamma_F(f; \tilde T \wedge t- \given z) - \Gamma_F(f; u- \given z) = \int_{u-}^{\tilde T \wedge t-} \Gamma_F(f; \d v \given z)$ and changing the order of integration, the expression
\begin{equation}
\begin{aligned}
W_F''(f,g) &= \int_0^{t-} W(F) \indic{\tilde T > u} (\Gamma_F(g; u \given Z) - \frac{R(g; u; Z)}{R(F; u; Z)}) \Gamma_F(f; \d u \given Z) \\
& + \int_0^{t-} W(F) \indic{\tilde T > u} (\Gamma_F(f; u \given Z) - \frac{R(f; u; Z)}{R(F; u; Z)}) \Gamma_F(g; \d u \given Z)
\end{aligned}
\end{equation}
is obtained.
Also, in the same terms,
\begin{equation}
W_F'(f) = \int_0^{t-} W(F) \indic{\tilde T > u } \Gamma_F(f; \d u \given Z).
\end{equation}
This leads to the expressions
\begin{equation}
\theta_F'(f) =\int W(F) Y \d f + \int \int_0^{t-} W(F) Y \indic{\tilde T > u} \Gamma_F(f; \d u \given Z) \d F
\end{equation}
and
\begin{equation}
\begin{aligned}
\theta_F''(f,g) &= \int \int_0^{t-} W(F) Y \indic{\tilde T > u} \Gamma_F(g; \d u \given Z) \d f \\
& \phantom{{}=} + \int \int_0^{t-} W(F) Y \indic{\tilde T > u} \Gamma_F(f; \d u \given Z) \d g \\
& \phantom{{}=} + \int \int_0^{t-} W(F) Y \indic{\tilde T > u} (\Gamma_F(g; u \given Z) - \frac{R(g; u; Z)}{R(F; u; Z)}) \Gamma_F(f; \d u \given Z) \d F \\
& \phantom{{}=} + \int \int_0^{t-} W(F) Y \indic{\tilde T > u} (\Gamma_F(f; u \given Z) - \frac{R(f; u; Z)}{R(F; u; Z)}) \Gamma_F(g; \d u \given Z) \d F.
\end{aligned}
\end{equation}
To obtain the first order influence function, evaluate the first order derivative at the true $F=F_0$ and the direction $f = \delta_{\mathcal{\tilde X}} - F_0$ where $\delta_{\mathcal{\tilde X}}$ corresponds to the Dirac measure at a given $\mathcal{\tilde X}$. Note that (for $z$ such that $p_Z(z) > 0$) 
\begin{equation}
\Gamma_{F_0}(\delta_{\mathcal{\tilde X}} - F_0; s \given z) = \int_0^s \frac{\indic{Z = z}}{\tilde S(u \given z) p_Z(z)} M(\mathcal{\tilde X}; \d u \given Z).
\end{equation}
Since $Z$ and $z$ can be used interchangeably on $Z=z$, this allows for a useful change in the order of integration such that, since
\begin{equation}
e^*(s \given z) = \int W(F_0) Y \frac{\indic{\tilde T > s, Z = z}}{\tilde S(s \given z) p_Z(z)} \d F_0,
\end{equation}
the desired expression of the first order influence function is obtained. 
To obtain the second order influence function, evaluate the second order derivative at the true $F=F_0$ and directions $f=\delta_{\mathcal{\tilde X}_1} - F_0$ and $g=\delta_{\mathcal{\tilde X}_2} - F_0$. Using the same arguments as for the first order influence function and with a slight elimination of terms, the desired expression is obtained.

The functional approach used here can be formalized in a $p$-variation setting as is done in \cite{Overgaard2019}, but this does impose restrictive requirements on the outcome $Y$, namely bounded $p$-variation on the time argument of the function $y$ from section~\ref{sec:regression_analysis}. The implied influence functions likely apply more generally. 
\end{proof}

The notation in \eqref{eq:dotLambda} is slightly misleading since $\dot \Lambda$ is not generally the exact influence function of the $\Lambda$ estimate. The close connection of $\dot \Lambda$ to the influence function can however be realized from the proof. In the continuous case there is no difference.

Recall that the estimating equation for each type is of the form $U_{n,\textup{type}}(\beta) = 0$ for $U_{n,\textup{type}}(\beta) = \sum_i u_{n,i,\textup{type}}(\mathcal{\bar X}_i;\beta)$ where specifically
\begin{align}
    u_{n,i,\textit{ind}}(\mathcal{\bar X}_i;\beta) & = A(\beta; X_i) \hat W_i (Y_i - \mu(\beta; X_i)), \\
    u_{n,i,\textit{out}}(\mathcal{\bar X}_i;\beta) & = A(\beta; X_i) (\hat W_i Y_i - \mu(\beta; X_i)), \\
    u_{n,i,\textit{pse}}(\mathcal{\bar X}_i;\beta) & = A(\beta; X_i) (\hat \theta_i - \mu(\beta; X_i)).
\end{align}
The next lemma concerns an approximation of these terms.
In the remainder of this appendix, a set of regularity conditions are used to establish the desired properties.
A sufficient set of conditions involve: The positivity requirement $\tilde S(t \given Z) > 0$ (almost surely); a bounded outcome $Y$; two times continuous differentiability of $\beta \mapsto A(\beta; X)$ and $\beta \mapsto A(\beta; X) \mu(\beta; X)$ (almost surely); locally dominated integrability of the second order derivatives of these functions; integrability of the first order derivative; finite second moment of $A(\beta; X)$ and $A(\beta; X) \mu(\beta;X)$; non-singularity of relevant matrices in the following. Many of these conditions are to hold at or in an open neighborhood of the relevant $\beta$. 

\begin{lemma} \label{lemma:contribution_approximation}
Under the the mentioned regularity conditions each of the three types allows for a representation
\begin{equation}
u_{n, i,\textup{type}}(\beta) = u_{\textup{type}}(\mathcal{\bar X}_i; \beta) + \frac{1}{n} \sum_{j=1}^n \dot u_{\textup{type}}(\mathcal{\bar X}_i; \mathcal{\bar X}_j; \beta)+ R_{n, i, \textup{type}}(\beta)
\end{equation}
where 
\begin{enumerate}
\item $\E(\dot u_{\textup{type}}(\mathcal{\bar X}_i; \mathcal{\bar X}_j; \beta) \given \mathcal{\bar X}_i) = 0$ (almost surely) for $j \neq i$. 
\item $\| R_{n,i, \textup{type}}(\beta) \| \leq r_\textup{type}(\mathcal{\bar X}_i;\beta) Q_{n,\textup{type}}$ for a positive and locally dominated square integrable function $r_\textup{type}(\mathcal{\bar X}_i; \beta)$ and a sequence $(Q_{n,\textup{type}})$ such that $Q_{n,\textup{type}} = o_{\P}(n^{-1/2})$ as $n \to \infty$. 
\end{enumerate}
Specifically, 
\begin{align*}
u_\text{ind}(\mathcal{\bar X}; \beta) &= A(\beta; X) W^*(Y- \mu(\beta;X)), \\
u_\text{out}(\mathcal{\bar X}; \beta) &= A(\beta; X) (W^*Y- \mu(\beta;X)), \\
u_\text{pse}(\mathcal{\bar X}; \beta) &= A(\beta; X) (W^*Y - \mu(\beta;X) + \int_0^{t-} e^*(u \given Z)  M(\mathcal{\tilde X}; \d u \given Z)),
\end{align*}
and
\begin{align*}
\dot u_\text{ind}(\mathcal{\bar X}_1, \mathcal{\bar X}_2;\beta) &= A(\beta; X_1) \int_0^{t-} W_1^*(Y_1 - \mu(\beta; X_1)) \frac{\indic{\tilde T_1 > u} \indic{Z_1=Z_2}}{\tilde S(u \given Z_2) p_Z(Z_2)} M(\mathcal{\tilde X}_2; \d u \given Z_2), \\
\dot u_\text{out}(\mathcal{\bar X}_1, \mathcal{\bar X}_2;\beta) &= A(\beta; X_1) \int_0^{t-} W_1^* Y_1 \frac{\indic{\tilde T_1 > u} \indic{Z_1=Z_2}}{\tilde S(u \given Z_2) p_Z(Z_2)} M(\mathcal{\tilde X}_2; \d u \given Z_2), \\
\dot u_\text{pse}(\mathcal{\bar X}_1, \mathcal{\bar X}_2;\beta) &= A(\beta; X_1) \ddot \theta(\mathcal{\tilde X}_1, \mathcal{\tilde X}_2).
\end{align*}
\end{lemma}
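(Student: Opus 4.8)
Here is a plan for proving Lemma~\ref{lemma:contribution_approximation}.

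The plan is to obtain the representation for the individual and outcome weighting approaches directly from Lemma~\ref{lemma:G_approximation}, and for the pseudo-observation approach from Lemma~\ref{lemma:influence_theta} together with the standard second order von~Mises expansion of a jack-knifed functional. Since the categorisation $Z$ takes only finitely many values, the stratum-wise statements of those two lemmas immediately give statements that are uniform over strata, a fact I would use throughout.

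For the individual and outcome approaches, I would first expand $\hat W_i = \indic{C_i \geq T_i \wedge t}/\hat G(\tilde T_i \wedge t- \given Z_i)$ with Lemma~\ref{lemma:G_approximation} at $u = \tilde T_i \wedge t-$, $z = Z_i$ (legitimate since the regularity conditions force $\tilde S(s \given Z) > 0$ for all $s \leq t$ almost surely). Using $\dot G(\mathcal{\tilde X}; u \given z)/G^*(u \given z) = -\dot \Lambda(\mathcal{\tilde X}; u \given z)$ from \eqref{eq:dotLambda} and that $M(\mathcal{\tilde X}_j; \cdot \given z)$ already vanishes off $\{Z_j = z\}$, this yields
\begin{equation*}
\hat W_i = W_i^* + \frac{1}{n} \sum_{j=1}^n W_i^* \int_0^{t-} \frac{\indic{\tilde T_i > v}\,\indic{Z_j = Z_i}}{\tilde S(v \given Z_i) p_Z(Z_i)}\, M(\mathcal{\tilde X}_j; \d v \given Z_i) + \varepsilon_{n,i},
\end{equation*}
where $|\varepsilon_{n,i}| \leq \indic{C_i \geq T_i \wedge t}\, Q_n$ and $Q_n = o_\P(n^{-1/2})$ is the supremum remainder of Lemma~\ref{lemma:G_approximation} maximised over the finitely many strata. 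Multiplying through by $A(\beta; X_i)(Y_i - \mu(\beta; X_i))$ --- respectively by $A(\beta; X_i) Y_i$ together with the unweighted term $-A(\beta; X_i)\mu(\beta; X_i)$ --- then reads off the stated $u_\text{ind}$, $u_\text{out}$, $\dot u_\text{ind}$, $\dot u_\text{out}$ and a remainder $R_{n,i,\text{type}}(\beta)$ with $\|R_{n,i,\text{type}}(\beta)\| \leq r_\text{type}(\mathcal{\bar X}_i;\beta)\, Q_n$, where $r_\text{ind}(\mathcal{\bar X}_i;\beta) = \|A(\beta; X_i)\|\,|Y_i - \mu(\beta; X_i)|$ and $r_\text{out}(\mathcal{\bar X}_i;\beta) = \|A(\beta; X_i)\|\,|Y_i|$; these are locally dominated and square integrable by the moment assumptions on $A(\beta;X)$, $A(\beta;X)\mu(\beta;X)$ and the boundedness of $Y$. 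The conditional mean condition holds because, for $j \neq i$, $\mathcal{\tilde X}_j$ is independent of $\mathcal{\bar X}_i$ while $s \mapsto M(\mathcal{\tilde X}_j; s \given z)$ has mean zero --- the same property used in the proof of Corollary~\ref{cor:as_variance_express}, following from $\E(R(\mathcal{\tilde X}; u; z)) = \check S(u \given z) p_Z(z)$ and $\Lambda^*(s \given z) = \int_0^s \check S(u \given z)^{-1} \tilde F_0(\d u \given z)$ --- so that, after Fubini, $\E(\dot u_\text{type}(\mathcal{\bar X}_i; \mathcal{\bar X}_j;\beta) \given \mathcal{\bar X}_i) = 0$.

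For the pseudo-observation approach, I would observe that $\hat \theta$ of \eqref{eq:theta_estimate} is the functional $\theta$ of Lemma~\ref{lemma:influence_theta} applied to the empirical distribution of $\mathcal{\tilde X}$, with first and second order influence functions $\dot \theta$ and $\ddot \theta$ as given there. Inserting the second order von~Mises expansions of $\hat \theta$ and of the within-stratum leave-one-out estimator $\hat \theta^{(i)}$ into $\hat \theta_i = n \hat \theta - (n-1)\hat \theta^{(i)}$ and collecting terms, as in \cite{Overgaard2019}, gives $\hat \theta_i = \theta + \dot \theta(\mathcal{\tilde X}_i) + \frac{1}{n}\sum_{j=1}^n \ddot \theta(\mathcal{\tilde X}_i, \mathcal{\tilde X}_j) + R_{n,i}'$, where $R_{n,i}'$ consists of a diagonal term $-\frac{1}{2n}\ddot \theta(\mathcal{\tilde X}_i, \mathcal{\tilde X}_i)$, an $\mathcal{\bar X}_i$-free completely degenerate double sum $-\frac{1}{2n(n-1)}\sum_{j,k \neq i} \ddot \theta(\mathcal{\tilde X}_j, \mathcal{\tilde X}_k)$ (of order $O_\P(n^{-1})$ by complete degeneracy of $\ddot \theta$), and the genuine third order Taylor remainder ($o_\P(n^{-1/2})$ by the bounded-$p$-variation differentiability estimates of \cite{Overgaard2019}), so that $\|R_{n,i}'\| \leq \rho(\mathcal{\bar X}_i)\,o_\P(n^{-1/2})$ for a locally dominated square integrable $\rho$. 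Since $\theta = \E(W^*Y)$ and $\dot \theta(\mathcal{\tilde X}_i) = W_i^*Y_i - \E(W^*Y) + \int_0^{t-} e^*(u \given Z_i)\, M(\mathcal{\tilde X}_i; \d u \given Z_i)$, multiplying by $A(\beta; X_i)$ and subtracting $A(\beta; X_i)\mu(\beta; X_i)$ produces exactly $u_\text{pse}(\mathcal{\bar X}_i;\beta)$ and $\dot u_\text{pse}(\mathcal{\bar X}_i; \mathcal{\bar X}_j;\beta) = A(\beta; X_i)\ddot \theta(\mathcal{\tilde X}_i, \mathcal{\tilde X}_j)$. The conditional mean condition then reduces to $\E(\ddot \theta(\tilde x_1, \mathcal{\tilde X}_2)) = 0$ for every fixed $\tilde x_1$, which one checks by integrating the two halves of the formula for $\ddot \theta$ over $\mathcal{\tilde X}_2 \sim F_0$: the half with integrator $M(\mathcal{\tilde X}_2; \cdot \given Z_2)$ vanishes by the mean-zero property of that increment, and in the half with integrator $M(\tilde x_1; \cdot \given Z_1)$ the integrand collapses to $e^*(s \given Z_1) - e^*(s \given Z_1) = 0$ after using $\E(W_2^*Y_2\,\indic{\tilde T_2 > s, Z_2 = Z_1}) = e^*(s \given Z_1)\, \tilde S(s \given Z_1) p_Z(Z_1)$, $\E(R(\mathcal{\tilde X}_2; s; Z_1)) = \check S(s \given Z_1) p_Z(Z_1)$ and $\E(\dot \Lambda(\mathcal{\tilde X}_2; s \given Z_1)) = 0$.

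The step I expect to be the real obstacle is the pseudo-observation remainder: making $\|R_{n,i}'\| \leq \rho(\mathcal{\bar X}_i)\,o_\P(n^{-1/2})$ rigorous, with a square integrable $\rho$, calls for the functional-analytic apparatus of \cite{Overgaard2019} --- realising the estimator as a smooth functional on a space of functions of bounded $p$-variation, bounding the third order Taylor remainder in the corresponding norm, and controlling the leave-one-out censoring estimator $\hat G^{(i)}$ in small strata under the positivity assumptions --- which is also why the regularity conditions here impose boundedness of $Y$ and are stated on a neighbourhood of $\beta_0$. Once Lemma~\ref{lemma:G_approximation} is available, by contrast, the individual and outcome cases amount to bookkeeping together with an application of Fubini.
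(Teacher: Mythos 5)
Your proposal is correct and follows essentially the same route as the paper: the \textit{ind} and \textit{out} cases via Lemma~\ref{lemma:G_approximation}, the \textit{pse} case via Lemma~\ref{lemma:influence_theta} together with the uniform second-order expansion of the jack-knife pseudo-observations (which the paper obtains by citing Proposition~3.1 of \cite{Overgaard2017} rather than re-deriving the jack-knife algebra), and the conditional-mean-zero check from $\E(M(\mathcal{\tilde X}; s \given z)) = 0$, $\E(R(\mathcal{\tilde X}; s; z)) = \check S(s \given z) p_Z(z)$ and the definition of $e^*$. Your explicit bookkeeping of the pseudo-observation remainder terms and choice of $r_\textup{type}$ match the paper's (the paper's $r_\textit{ind}, r_\textit{out}$ merely retain the factor $\indic{C \geq T \wedge t}$), so there is no substantive difference.
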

\begin{proof}
For types \textit{ind} and \textit{out} this is an application of Lemma~\ref{lemma:G_approximation} where $r_\textit{ind}(\mathcal{\bar X}; \beta) = \|A(\beta;X)(Y-\mu(\beta;X)) \indic{C \geq T \wedge t}\|$ and $r_\textit{out}(\mathcal{\bar X}; \beta) = \|A(\beta;X) Y \indic{C \geq T \wedge t}\|$ can be used and where $Q_{n,\textup{type}}$ can be the remainder from Lemma~\ref{lemma:G_approximation} with $s=t$. 
Concerning the \textit{pse} type, the approximation of the pseudo-observations
\begin{equation} \label{eq:pseudo_approx}
\hat \theta_i = \theta + \dot \theta(\mathcal{\tilde X}_i) + \frac{1}{n} \sum_{j=1}^n \ddot \theta(\mathcal{\tilde X}_i, \mathcal{\tilde X}_j) + o_{\P}(n^{-1/2})
\end{equation}
is uniform in $i$ according to Proposition~3.1 of \cite{Overgaard2017} using the functional approach of Lemma~\ref{lemma:influence_theta}. Lemma~\ref{lemma:influence_theta} also gives the expression of $\dot \theta$ and $\ddot \theta$ used in the statement. Above, $\theta = \E(W^* Y)$ is the limit of the estimator. Concretely, take $r_\textit{pse}(\mathcal{\bar X}; \beta) = \|A(\beta;X) \|$ and let $Q_{n,\textit{pse}}$ be the remainder from \eqref{eq:pseudo_approx}.
The property that $\E(\dot u_{\textup{type}}(\mathcal{\bar X}_i; \mathcal{\bar X}_j; \beta) \given \mathcal{\bar X}_i) = 0$ for $j \neq i$ is essentially a result of properties of the influence functions, but can be checked using primarily that $\E(M(\mathcal{\tilde X}; s \given Z) \given Z) = 0$ as well as independence of observations. For the \textit{pse} case, it is worth noting that $\E(R(\mathcal{\tilde X};s; z)) = \check S(s \given z) p_Z(z)$ and that 
\begin{equation}
\E \big(W^* Y \frac{\indic{\tilde T > s} \indic{Z_1 = z}}{\tilde S(s \given z) p_Z(z)} \big) = e^*(s \given z)
\end{equation}
basically by definition.
\end{proof}

With the definitions of Lemma~\ref{lemma:contribution_approximation}, let for each of the three types
\begin{equation}
h_{\textup{type}}(\mathcal{\bar X}; \beta) = u_{\textup{type}}(\mathcal{\bar X}; \beta) + \E(\dot u_{\textup{type}}(\mathcal{\bar X}_1; \mathcal{\bar X}; \beta) \given \mathcal{\bar X}).
\end{equation}

\begin{lemma} \label{lemma:asymptotic_equivalence_estimating_equation}
Under the regularity conditions mentioned above there is, for each of the three types, an asymptotic equivalence of $U_{n,\textup{type}}(\beta) = \sum_i u_{n,i,\textup{type}}(\mathcal{\bar X}_i;\beta)$ and 
\begin{equation}
U^*_{n,\textup{type}}(\beta) = \sum_{i=1}^n h_{\textup{type}}(\mathcal{\bar X}_i; \beta)
\end{equation}
in the sense that $n^{-1/2}(U_{n,\textup{type}}(\beta)-U^*_{n,\textup{type}}(\beta)) \to 0$ in probability as $n \to \infty$.
\end{lemma}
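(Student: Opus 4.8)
The plan is to substitute the pointwise expansion of Lemma~\ref{lemma:contribution_approximation} into $U_{n,\textup{type}}(\beta)=\sum_i u_{n,i,\textup{type}}(\mathcal{\bar X}_i;\beta)$, multiply by $n^{-1/2}$, and show that every deviation from $U^*_{n,\textup{type}}(\beta)$ is $o_{\P}(1)$. Summing the representation over $i$ gives
\begin{equation*}
n^{-1/2}U_{n,\textup{type}}(\beta)=n^{-1/2}\sum_{i=1}^n u_{\textup{type}}(\mathcal{\bar X}_i;\beta)+n^{-3/2}\sum_{i=1}^n\sum_{j=1}^n\dot u_{\textup{type}}(\mathcal{\bar X}_i;\mathcal{\bar X}_j;\beta)+n^{-1/2}\sum_{i=1}^n R_{n,i,\textup{type}}(\beta).
\end{equation*}
The first term is already the leading part of $\sum_i h_{\textup{type}}(\mathcal{\bar X}_i;\beta)$, so it remains to show that the double sum reproduces $n^{-1/2}\sum_i\E(\dot u_{\textup{type}}(\mathcal{\bar X}_1;\mathcal{\bar X}_i;\beta)\given\mathcal{\bar X}_i)$ up to $o_{\P}(1)$ and that the remainder sum is $o_{\P}(1)$.

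First I would dispose of the remainder term. By part~2 of Lemma~\ref{lemma:contribution_approximation},
\begin{equation*}
\Big\|n^{-1/2}\sum_{i=1}^n R_{n,i,\textup{type}}(\beta)\Big\|\leq n^{1/2}Q_{n,\textup{type}}\cdot\frac{1}{n}\sum_{i=1}^n r_{\textup{type}}(\mathcal{\bar X}_i;\beta),
\end{equation*}
where $\frac1n\sum_i r_{\textup{type}}(\mathcal{\bar X}_i;\beta)\to\E(r_{\textup{type}}(\mathcal{\bar X};\beta))$ in probability by the law of large numbers (the bounding function is square integrable, hence integrable), while $n^{1/2}Q_{n,\textup{type}}=o_{\P}(1)$ because $Q_{n,\textup{type}}=o_{\P}(n^{-1/2})$; hence this term vanishes in probability.

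Next I would treat the double sum via a Hoeffding decomposition. The diagonal contribution $n^{-3/2}\sum_i\dot u_{\textup{type}}(\mathcal{\bar X}_i;\mathcal{\bar X}_i;\beta)$ is $n^{-1/2}O_{\P}(1)=o_{\P}(1)$ by the law of large numbers. For the off-diagonal part write $g(x,y)=\dot u_{\textup{type}}(x;y;\beta)$; by part~1 of Lemma~\ref{lemma:contribution_approximation} the projection $\E(g(\mathcal{\bar X}_i,\mathcal{\bar X}_j)\given\mathcal{\bar X}_i)$ and the overall mean both vanish, so $g(x,y)=g_2(y)+\tilde g(x,y)$ with $g_2(y)=\E(g(\mathcal{\bar X}_1,\mathcal{\bar X}_2)\given\mathcal{\bar X}_2=y)$ — exactly the conditional expectation inside $h_{\textup{type}}$ — and $\tilde g$ completely degenerate. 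The $g_2$ part of $n^{-3/2}\sum_{i\neq j}g(\mathcal{\bar X}_i,\mathcal{\bar X}_j)$ equals $\frac{n-1}{n}n^{-1/2}\sum_j g_2(\mathcal{\bar X}_j)$, which differs from $n^{-1/2}\sum_j g_2(\mathcal{\bar X}_j)$ by $-n^{-3/2}\sum_j g_2(\mathcal{\bar X}_j)=O_{\P}(n^{-1})$ since $g_2$ has mean zero and finite variance. The degenerate part $n^{-3/2}\sum_{i\neq j}\tilde g(\mathcal{\bar X}_i,\mathcal{\bar X}_j)$ has variance of order $n^2$ — the only surviving covariance contributions come from coincident pairs $(i,j)=(k,l)$ and transposed pairs $(i,j)=(l,k)$ — and is therefore $O_{\P}(n^{-1/2})=o_{\P}(1)$. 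Assembling the pieces yields $n^{-1/2}(U_{n,\textup{type}}(\beta)-U^*_{n,\textup{type}}(\beta))=o_{\P}(1)$, and since the three representations have the same shape the same argument applies verbatim to all three types.

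The main obstacle I anticipate is not the U-statistic bookkeeping above but verifying the moment conditions that support it: that $g(x,y)=\dot u_{\textup{type}}(x;y;\beta)$ is square integrable under the product measure (so $g_2,\tilde g\in L^2$ and the variance bound on the degenerate part is finite) and that $\dot u_{\textup{type}}(\mathcal{\bar X};\mathcal{\bar X};\beta)$ is integrable. These follow from the stated regularity conditions — the positivity requirement $\tilde S(t\given Z)>0$ controlling the $1/(\tilde S(u\given Z)p_Z(Z))$ factors appearing in $\dot u_{\textup{type}}$ and in $\ddot\theta$, boundedness of $Y$, the finite second moments of $A(\beta;X)$ and $A(\beta;X)\mu(\beta;X)$, and the martingale property $\E(M(\mathcal{\tilde X};s\given Z)\given Z)=0$ together with the standard variance identity for $M$ that keeps the relevant stochastic integrals bounded in $L^2$. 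Once these integrability facts are in place, the rest is the routine decomposition sketched above.
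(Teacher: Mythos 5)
Your proof is correct and follows essentially the same route as the paper: the paper also reduces $U_{n,\textup{type}}$ via Lemma~\ref{lemma:contribution_approximation} to a $V$-statistic of order 2 whose Hájek projection is $\sum_i h_{\textup{type}}(\mathcal{\bar X}_i;\beta)$, citing Theorem~12.3 and Problem~12.10 of van der Vaart, while you carry out that projection argument by hand (explicit Hoeffding decomposition, variance bound on the degenerate part, separate treatment of the diagonal and of the remainders $R_{n,i,\textup{type}}$). The moment conditions you flag are exactly what the paper invokes when asserting the symmetrized kernel has finite second moment under the stated regularity conditions.
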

\begin{proof}
This is a $U$-statistic or rather $V$-statistic argument, which applies owing to the approximations of Lemma~\ref{lemma:contribution_approximation}. The $U_{n,\textup{type}}(\beta)$ is well approximated by a $V$-statistic of order 2. Symmetrized, the $V$-statistic has the kernel function
\begin{equation}
\begin{aligned}
k(\mathcal{\bar X}_1, \mathcal{\bar X}_2; \beta) &= \frac{1}{2} \big(u_{\textup{type}}(\mathcal{\bar X}_1; \beta) + u_{\textup{type}}(\mathcal{\bar X}_2; \beta) \\
& \phantom{{}=} + \dot u_{\textup{type}}(\mathcal{\bar X}_1, \mathcal{\bar X}_2;\beta) + \dot u_{\textup{type}}(\mathcal{\bar X}_2, \mathcal{\bar X}_1;\beta)\big).
\end{aligned}
\end{equation}
Let $\gamma = \E(k(\mathcal{\bar X}, \mathcal{\bar X}_2; \beta))$. Using the properties from Lemma~\ref{lemma:contribution_approximation}, it can be seen that $\E(\dot u(\mathcal{\bar X}_1, \mathcal{\bar X}_2;\beta)) = 0$, such that $\gamma = \E(u_{\textup{type}}(\mathcal{\bar X}; \beta))$, and then
\begin{equation}
\begin{aligned}
2 ( \E(k(\mathcal{\bar X}, \mathcal{\bar X}_2; \beta) \given \mathcal{\bar X}) - \gamma) &= u_{\textup{type}}(\mathcal{\bar X}; \beta) + \E(\dot u_{\textup{type}}(\mathcal{\bar X}_1; \mathcal{\bar X}; \beta) \given \mathcal{\bar X}) - \gamma \\
&= h_{\textup{type}}(\mathcal{\bar X}; \beta) - \gamma.
\end{aligned}
\end{equation}
Since $k(\mathcal{\bar X}_1, \mathcal{\bar X}_2; \beta)$ will have finite second moment, as can be seen from the regularity conditions, the claim follows from results on $V$-statistics. See for instance Theorem~12.3 and Problem~12.10 of \cite{Vaart1998}.
\end{proof}

In the following lemma, the regularity conditions are used to ensure, with high probability, the existence of a solution to the estimating equation in each of the three cases. This can be done since the regularity conditions imply similar properties of $\beta \mapsto h_{\textup{type}}(\mathcal{\bar X}; \beta)$.

\begin{lemma} \label{lemma:estimator_exist}
For any one of the three types, suppose $\beta_{\textup{type}}^*$ exists such that $\E(h_{\textup{type}}(\mathcal{\bar X}; \beta_{\textup{type}}^*)) = 0$.
Under the regularity conditions mentioned above, a sequence $(\hat \beta_{n,\textup{type}})$ exists such that $U_{n,\textup{type}}(\hat \beta_{n,\textup{type}}) = 0$ with a probability tending to 1 and $\hat \beta_{n,\textup{type}} \to \beta_{\textup{type}}^*$ in probability, and further
\begin{equation}
\hat \beta_{n,\textup{type}} - \beta_{\textup{type}}^* = \frac{1}{n} \sum_{i=1}^n \dot \beta_{\textup{type}}(\mathcal{\bar X}_i) + o_{\P}(n^{-\frac{1}{2}})
\end{equation}
where 
\begin{equation}
\dot \beta_{\textup{type}}(\mathcal{\bar X}) = - J_{\textup{type}}(\beta_{\textup{type}}^*)^{-1} h_{\textup{type}}(\mathcal{\bar X}; \beta_{\textup{type}}^*)
\end{equation}
and
\begin{equation}
J_{\textup{type}}(\beta) = \E(\frac{\partial}{\partial \beta \T} h_{\textup{type}}(\mathcal{\bar X}; \beta)).
\end{equation}
\end{lemma}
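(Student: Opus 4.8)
The plan is to reduce the statement to classical $Z$-estimation for the i.i.d.\ average $\beta\mapsto\frac1n U^*_{n,\textup{type}}(\beta)=\frac1n\sum_{i=1}^n h_{\textup{type}}(\mathcal{\bar X}_i;\beta)$ and then transfer the conclusions back to the actual estimating equation $U_{n,\textup{type}}(\beta)=0$ using the asymptotic equivalence of Lemma~\ref{lemma:asymptotic_equivalence_estimating_equation}. The first thing to record is that $\beta\mapsto h_{\textup{type}}(\mathcal{\bar X};\beta)$ inherits the regularity of the original problem: in each of the three cases the $\beta$-dependence of $u_{\textup{type}}$ and of $\dot u_{\textup{type}}$ enters only through the maps $\beta\mapsto A(\beta;X)$, $\beta\mapsto A(\beta;X)\mu(\beta;X)$ and $\beta\mapsto\mu(\beta;X)$, since the remaining factors ($W^*$, $e^*(\cdot\mid Z)$, the censoring martingale $M$, $\tilde S$, $p_Z$, the pseudo-observation correction) do not depend on $\beta$. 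Hence two-times continuous differentiability, local domination of the second derivatives, integrability of the first derivative, and the finite-second-moment conditions all carry over to $h_{\textup{type}}(\cdot;\beta)$ near $\beta^*_{\textup{type}}$, and $J_{\textup{type}}(\beta)=\E(\partial_{\beta\T}h_{\textup{type}}(\mathcal{\bar X};\beta))$ is well defined and, by assumption, nonsingular at $\beta^*_{\textup{type}}$.

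Next I would run the standard argument for $U^*_{n,\textup{type}}$. By the law of large numbers $\frac1n U^*_{n,\textup{type}}(\beta)\to\E(h_{\textup{type}}(\mathcal{\bar X};\beta))$, which vanishes at $\beta^*_{\textup{type}}$; combined with the local smoothness and the nonsingular Jacobian this yields, with probability tending to one, a root $\tilde\beta_{n,\textup{type}}$ of $U^*_{n,\textup{type}}$ in a neighborhood of $\beta^*_{\textup{type}}$ with $\tilde\beta_{n,\textup{type}}\to\beta^*_{\textup{type}}$ in probability. A one-term Taylor expansion of $U^*_{n,\textup{type}}$ about $\beta^*_{\textup{type}}$, with the dominated second-derivative term controlling the remainder and the central limit theorem applied to $n^{-1/2}\sum_i h_{\textup{type}}(\mathcal{\bar X}_i;\beta^*_{\textup{type}})$, then gives
\[
\sqrt n(\tilde\beta_{n,\textup{type}}-\beta^*_{\textup{type}})=-J_{\textup{type}}(\beta^*_{\textup{type}})^{-1}\,\frac{1}{\sqrt n}\sum_{i=1}^n h_{\textup{type}}(\mathcal{\bar X}_i;\beta^*_{\textup{type}})+o_{\P}(1),
\]
which is the claimed expansion with $\dot\beta_{\textup{type}}$ as in the statement. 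This is exactly the $Z$-estimator theory of Chapter~5 of \cite{Vaart1998}.

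It then remains to transfer this to $U_{n,\textup{type}}$. Lemma~\ref{lemma:asymptotic_equivalence_estimating_equation} gives $n^{-1/2}(U_{n,\textup{type}}(\beta)-U^*_{n,\textup{type}}(\beta))\to0$ in probability at each fixed $\beta$; using the uniform-in-$i$ approximations and domination bounds of Lemmas~\ref{lemma:G_approximation} and~\ref{lemma:contribution_approximation} (in particular $\hat W_i\to W^*_i$ and, for the \textit{pse} type, the uniform expansion of the pseudo-observations), one upgrades this to hold uniformly over shrinking neighborhoods $\{\|\beta-\beta^*_{\textup{type}}\|\le\delta_n\}$, and one checks that the normalized Jacobian $n^{-1}\partial_{\beta\T}U_{n,\textup{type}}(\beta)$ converges in probability, uniformly near $\beta^*_{\textup{type}}$, to $J_{\textup{type}}(\beta)$; here one uses that $\E(\partial_{\beta\T}\E(\dot u_{\textup{type}}(\mathcal{\bar X}_1;\mathcal{\bar X};\beta)\mid\mathcal{\bar X}))=\partial_{\beta\T}\E(\dot u_{\textup{type}}(\mathcal{\bar X}_1;\mathcal{\bar X};\beta))=0$, so the empirical Jacobian of $U_{n,\textup{type}}$ has the same limit $J_{\textup{type}}$ as that of $\sum_i u_{\textup{type}}(\mathcal{\bar X}_i;\beta)$. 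With these two ingredients a routine argument—an inverse-function/Brouwer argument applied to $\beta\mapsto n^{-1/2}U_{n,\textup{type}}(\beta^*_{\textup{type}}+n^{-1/2}\cdot)$, or the reasoning of the $Z$-estimator master theorem—produces, with probability tending to one, a root $\hat\beta_{n,\textup{type}}$ of $U_{n,\textup{type}}$ with $\hat\beta_{n,\textup{type}}\to\beta^*_{\textup{type}}$ and the same linear expansion, proving the lemma.

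The main obstacle is this last step. Lemma~\ref{lemma:asymptotic_equivalence_estimating_equation} is only a pointwise-in-$\beta$ statement, and to localize and invert $U_{n,\textup{type}}$ one needs genuinely uniform (equicontinuous) control of $U_{n,\textup{type}}-U^*_{n,\textup{type}}$ and of the Jacobian of $U_{n,\textup{type}}$ over a neighborhood of $\beta^*_{\textup{type}}$; extracting this from the domination hypotheses and the uniform-in-$i$ approximations of Lemmas~\ref{lemma:contribution_approximation} and~\ref{lemma:influence_theta} is the delicate part, together with the small but necessary verification that the limiting Jacobian of the true estimating equation is $J_{\textup{type}}(\beta^*_{\textup{type}})=\E(\partial_{\beta\T}h_{\textup{type}})$ and not merely $\E(\partial_{\beta\T}u_{\textup{type}})$.
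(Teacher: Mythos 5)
Your proposal follows essentially the same route as the paper: the paper likewise reduces the problem to the i.i.d.\ estimating function $U^*_{n,\textup{type}}$ and appeals to Theorems~5.41 and~5.42 of \cite{Vaart1998}, transferring existence of a root and the linear expansion back to $U_{n,\textup{type}}$ via the $V$-statistic approximation underlying Lemma~\ref{lemma:asymptotic_equivalence_estimating_equation}. The transfer step you flag as delicate is handled in the paper by exactly the ingredients you name --- a uniform law of large numbers for the approximating $V$-statistic and the locally dominated remainder bounds $r_{\textup{type}}(\mathcal{\bar X}_i;\beta)\,Q_{n,\textup{type}}$ from Lemma~\ref{lemma:contribution_approximation}, which supply the uniformity in $\beta$ needed when inspecting the proof of Theorem~5.42 --- so your argument is correct and matches the paper's.
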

\begin{proof}
On the basis of Lemma~\ref{lemma:asymptotic_equivalence_estimating_equation}, it is possible to appeal to Theorem~5.41 and Theorem~5.42 of \cite{Vaart1998} for this result. Strictly speaking, Theorem~5.42 will ensure the existence of a solution to $U_{n,\textup{type}}^*(\beta) = 0$ with high probability for large $n$, but an inspection of its proof will reveal that the same applies to a solution to $U_{n,\textup{type}}(\beta) = 0$ owing to the close approximation to the $V$-statistic considered in the proof of Lemma~\ref{lemma:asymptotic_equivalence_estimating_equation}, for which a uniform law of large numbers apply, and how well behaved the $r_{\textup{type}}(\mathcal{\bar X}_i; \beta)$ of the remainders are.
\end{proof}

The notation 
\begin{equation}
u(\mathcal{X};\beta) = A(\beta; X)(Y-\mu(\beta;X))
\end{equation}
relating to the estimating equation of the uncensored problem is used in the following.

\begin{proposition} \label{prop:bias}
In the setting of the previous lemma, suppose $\beta^*$ is a solution to the uncensored problem, $\E(u(\mathcal{X}; \beta)) = 0$ and $\beta_{\textup{type}}^*$ exists such that $\E(h_{\textup{type}}(\mathcal{\bar X}; \beta_{\textup{type}}^*)) = 0$ for one of the three types. Assuming invertibility of $J_{\textup{type}}$ at $\beta^*$, a first order approximation of the bias is 
\begin{equation}
\begin{aligned}
\beta^*_{\textup{type}} - \beta^* &\approx -J_{\textup{type}}(\beta^*)^{-1} \E(h_{\textup{type}}(\mathcal{\bar X}; \beta^*)) \\
&= \E(\int_0^{t-} \psi_{\textup{type}}(\mathcal{ X}; s; \beta^*) \frac{1}{G^*(s \given Z)} M(\mathcal{\tilde X}; \d s \given Z) )
\end{aligned}
\end{equation}
where, with the notation $B_{\textup{type}}(\beta; X) = J_\textup{type}(\beta)^{-1} A(\beta; X)$ and $W^*(s) = \frac{\indic{C \geq T \wedge t}}{G^*(T \wedge t \given Z)/G^*(s \given Z)}$,
\begin{align}
    \psi_\textit{ind}(\mathcal{ X}; s; \beta) &= B_{\textit{ind}}(\beta; X)(Y - \mu(\beta; X)), \\
    \psi_\textit{out}(\mathcal{ X}; s; \beta) &= B_{\textit{out}}(\beta; X)Y, \\
    \psi_\textit{pse}(\mathcal{ X}; s; \beta) &= B_{\textit{pse}}(\beta; X)(Y - \E(W^*(s) Y \given \tilde T > s, Z)). 
\end{align}
Under an assumption of $(T, D) \independent C \given X$, this leads to 
\begin{equation} \label{eq:bias}
\begin{aligned}
\beta^*_{\textup{type}} - \beta^* &\approx \E(\int_0^{t-} \E(\psi_{\textup{type}}(\mathcal{ X}; s; \beta^*) \given T > s, X) S(s \given X) \frac{G(s- \given X)}{G^*(s \given Z)} (\Lambda(\d s \given X) - \Lambda^*(\d s \given Z))).
\end{aligned}
\end{equation}
\end{proposition}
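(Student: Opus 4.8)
The plan is to combine a first-order Taylor expansion of the population version of the estimating equation with a martingale representation of the weight $W^*$, and then to obtain \eqref{eq:bias} by a short conditioning argument. First, by Lemma~\ref{lemma:estimator_exist} the limit $\beta^*_{\textup{type}}$ is characterised by $\E(h_{\textup{type}}(\mathcal{\bar X};\beta^*_{\textup{type}})) = 0$, with $J_{\textup{type}}(\beta) = \E(\frac{\partial}{\partial\beta\T}h_{\textup{type}}(\mathcal{\bar X};\beta))$. Expanding $\beta\mapsto\E(h_{\textup{type}}(\mathcal{\bar X};\beta))$ to first order about $\beta^*$ and discarding the higher-order remainder in $\beta^*_{\textup{type}}-\beta^*$ yields the first approximation $\beta^*_{\textup{type}}-\beta^*\approx -J_{\textup{type}}(\beta^*)^{-1}\E(h_{\textup{type}}(\mathcal{\bar X};\beta^*))$. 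I would then note that the weight-estimation correction disappears under the full expectation: writing $\E(h_{\textup{type}}(\mathcal{\bar X};\beta)) = \E(u_{\textup{type}}(\mathcal{\bar X}_2;\beta)) + \E(\dot u_{\textup{type}}(\mathcal{\bar X}_1;\mathcal{\bar X}_2;\beta))$ for independent copies $\mathcal{\bar X}_1,\mathcal{\bar X}_2$, property~1 of Lemma~\ref{lemma:contribution_approximation}, which conditions on the first argument, forces $\E(\dot u_{\textup{type}}(\mathcal{\bar X}_1;\mathcal{\bar X}_2;\beta)) = 0$. Hence $\E(h_{\textup{type}}(\mathcal{\bar X};\beta^*)) = \E(u_{\textup{type}}(\mathcal{\bar X};\beta^*))$ and only this expectation has to be identified.

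Second, I would use the pathwise (Duhamel-type) identity
\begin{equation}
W^* - 1 = -\int_0^{t-}\indic{T>s}\,\frac{1}{G^*(s\given Z)}\,M(\mathcal{\tilde X};\d s\given Z),
\end{equation}
which holds algebraically from $G^*(\cdot\given z) = \Prodi_0^{\cdot}(1-\Lambda^*(\d u\given z))$ and the definition of $M(\mathcal{\tilde X};\cdot\given z)$, with no independent-censoring assumption, together with the fact that $\indic{T>s} = 1$ on the support of $M(\mathcal{\tilde X};\cdot\given Z)$, so that the indicator may be dropped against $M(\mathcal{\tilde X};\d s\given Z)$. Substituting $W^*Y = Y+(W^*-1)Y$, respectively $W^*(Y-\mu) = (Y-\mu)+(W^*-1)(Y-\mu)$, into $u_{\textup{type}}$ and cancelling $\E(A(\beta^*;X)(Y-\mu(\beta^*;X))) = \E(u(\mathcal{X};\beta^*)) = 0$ collapses $\E(u_{\textup{type}}(\mathcal{\bar X};\beta^*))$ to an expression of the shape $\E(\int_0^{t-}(\cdot)\,G^*(s\given Z)^{-1}\,M(\mathcal{\tilde X};\d s\given Z))$; for the \textit{pse} type the extra term $\int_0^{t-}e^*(u\given Z)M(\mathcal{\tilde X};\d u\given Z)$ already present in $u_{\textup{pse}}$ supplies the centring, and the relation $\E(W^*(s)Y\given\tilde T>s,Z) = G^*(s\given Z)\,e^*(s\given Z)$ rewrites $e^*$ in the form stated for $\psi_{\textup{pse}}$. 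Multiplying through by $-J_{\textup{type}}(\beta^*)^{-1}$ and absorbing it into $B_{\textup{type}}(\beta^*;X) = J_{\textup{type}}(\beta^*)^{-1}A(\beta^*;X)$ then gives the second equality. This martingale bookkeeping is the step I expect to be the real work: one must verify that each of the three $\E(u_{\textup{type}}(\mathcal{\bar X};\beta^*))$ collapses to the stated $\psi_{\textup{type}}$, which needs care with the non-continuous product integral (so that $G^*(s\given Z)$, rather than $G^*(s-\given Z)$, is the correct denominator against $\d M$), with the interaction of the indicators $\indic{C\geq T\wedge t}$ and $\indic{\tilde T>s}$ inside $W^*$, and, for the pseudo-observation type, with the relations linking $W^*(s)$, $W^*$ and $e^*$.

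Third, for \eqref{eq:bias} I would impose $(T,D)\independent C\given X$ and use that $Z$ is a deterministic function of $X$. Conditioning on $\mathcal{X}$ and using this assumption one computes, for $s<t$, $\E(N(\mathcal{\tilde X};\d s;Z)\given\mathcal{X}) = \indic{T>s}F_0(\d s\given X)$ and $\E(R(\mathcal{\tilde X};s;Z)\given\mathcal{X}) = \indic{T>s}G(s-\given X)$, hence $\E(M(\mathcal{\tilde X};\d s\given Z)\given\mathcal{X}) = \indic{T>s}\,G(s-\given X)\,(\Lambda(\d s\given X)-\Lambda^*(\d s\given Z))$ via $F_0(\d s\given X) = G(s-\given X)\Lambda(\d s\given X)$. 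Substituting this into the previous display and taking iterated expectations—first over $M(\mathcal{\tilde X};\cdot\given Z)$ given $\mathcal{X}$, then over the remaining randomness given $\{T>s,X\}$, so that $\E(\psi_{\textup{type}}(\mathcal{X};s;\beta^*)\indic{T>s}\given X) = \E(\psi_{\textup{type}}(\mathcal{X};s;\beta^*)\given T>s,X)\,S(s\given X)$—yields \eqref{eq:bias}. Beyond the bookkeeping mentioned above, the one conceptual caveat is that the leading term is genuinely only an approximation: the discarded remainder is second order in the discrepancy $\Lambda(\cdot\given X)-\Lambda^*(\cdot\given Z)$, i.e.\ in the degree of violation of independent censoring, so a fully rigorous statement would need to quantify how $\beta^*_{\textup{type}}$ depends on that discrepancy.
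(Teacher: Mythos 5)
Your proposal is correct and takes essentially the same route as the paper's proof: a first-order Taylor expansion of $\beta \mapsto \E(h_{\textup{type}}(\mathcal{\bar X};\beta))$ around $\beta^*$, the observation that the weight-correction term in $h_{\textup{type}}$ has mean zero so only $\E(u_{\textup{type}}(\mathcal{\bar X};\beta^*)) = \E(u_{\textup{type}}(\mathcal{\bar X};\beta^*) - u(\mathcal{X};\beta^*))$ remains, the algebraic identity $W^*-1 = -\int_0^{t-} G^*(s\given Z)^{-1} M(\mathcal{\tilde X};\d s \given Z)$ to produce the martingale-integral form of the leading term, and then iterated conditioning (first on $(T,D,X)$, then on $X$) under $(T,D)\independent C \given X$ to arrive at \eqref{eq:bias}. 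Your write-up simply fills in more of the bookkeeping (e.g.\ $\E(R(\mathcal{\tilde X};s;Z)\given \mathcal{X}) = \indic{T>s}\,G(s-\given X)$ and the relation $\E(W^*(s)Y \given \tilde T>s, Z) = G^*(s\given Z)e^*(s\given Z)$) than the paper's terse argument does.
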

\begin{proof}
This result is based on a Taylor approximation of $\beta \mapsto \E(h_{\textup{type}}(\mathcal{\bar X}; \beta))$ around $\beta^*$ which would reveal 
\begin{equation}
0 = \E(h_{\textup{type}}(\mathcal{\bar X}; \beta^*_{\textup{type}})) = \E(h_{\textup{type}}(\mathcal{\bar X}; \beta^*)) + J_{\textup{type}}(\beta^*) (\beta_{\textup{type}}^* - \beta^*) + \textup{Rem}.
\end{equation}
and thereby the stated first order approximation. Here, $\E(h_{\textup{type}}(\mathcal{\bar X}; \beta^*)) = \E(u_{\textup{type}}(\mathcal{\bar X}; \beta^*) - u(\mathcal{X}; \beta^*))$ since the higher order term has mean 0 and using that $\beta^*$ is a solution to $\E(u(\mathcal{X}; \beta)) = 0$. The equality $W^*-1 = - \int_0^{T \wedge t-} \frac{1}{G^*(s \given Z)} M(\mathcal{\tilde X}; \d s \given Z)$ now helps to establish the second expression of the approximation.
The conditional independence assumption makes it possible to first take the expectation given $(T, D, X)$, which helps to establish the structure of the integrator in the display, and then conditional expectation given $X$ only. 
\end{proof}

\begin{lemma} \label{lemma:rep_independent}
Consider the setting of Lemma~\ref{lemma:estimator_exist}. Under the independent censoring assumption, Assumption~\ref{ass:indep_cens}, each of the three types allows for the representation 
\begin{equation} \label{eq:h_type_representation}
h_\textup{type}(\mathcal{\bar X};\beta) = u(\mathcal{X};\beta) - \int_0^{t-} \frac{v_\textup{type}(\mathcal{X}; s; \beta) - e_\textup{type}(\beta; s \given Z)}{G(s \given Z)} M(\mathcal{\tilde X}; \d s \given Z)
\end{equation}
where 
\begin{align}
v_\text{ind}(\mathcal{X};s;\beta) &= A(\beta; X)(Y - \mu(\beta; X)), \\
v_\text{out}(\mathcal{X};s;\beta) &= A(\beta; X)Y, \\
v_\text{pse}(\mathcal{X};s;\beta) &= A(\beta; X)(Y - \E(Y \given T > s, Z)),
\end{align}
and where $e_\textup{type}(\beta; s\given z) = \E(v_\textup{type}(\mathcal{X}; s; \beta) \given T > s, Z=z)$.
Additionally, under Assumption~\ref{ass:indep_cens},
\begin{equation} \label{eq:J_general}
J_{\textup{type}}(\beta) = \E\big( \frac{\partial}{\partial \beta\T} A(\beta; X) (Y- \mu(\beta; X)) - A(\beta; X) \frac{\partial}{\partial \beta \T} \mu(\beta; X) \big)
\end{equation}
for each of the three types, and so, $J_{\textup{type}}(\beta)$ does not depend on the type.
\end{lemma}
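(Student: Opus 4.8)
The plan is to read off $u_\textup{type}$ and $\dot u_\textup{type}$ from Lemma~\ref{lemma:contribution_approximation}, insert them into the defining relation $h_\textup{type}(\mathcal{\bar X};\beta) = u_\textup{type}(\mathcal{\bar X};\beta) + \E(\dot u_\textup{type}(\mathcal{\bar X}_1;\mathcal{\bar X};\beta)\given \mathcal{\bar X})$, and simplify both summands under Assumption~\ref{ass:indep_cens}. I will repeatedly use the following consequences of independent censoring: $G^* = G$, $\Lambda^* = \Lambda$, $W^* = W$, $\tilde S(s\given z) = S(s\given z)G(s\given z)$ and $\check S(s\given z) = S(s\given z)G(s-\given z)$; the weighting identity $\E(W^*\indic{\tilde T > s}\given T,D,X,Z) = \indic{T > s}$ for $s < t$ (hence $\E(W^*\given T,D,X,Z) = 1$ and $e^*(s\given z) = \E(Y\given T>s,Z=z)/G(s\given z)$, with $Y$ read as the full-data value $y(T\wedge t, D\indic{T\le t})$, which agrees with $W^*Y$ whenever $W^*\neq 0$); the fact that $M(\mathcal{\tilde X};\cdot\given Z)$ is a martingale conditionally on $(T,D,X,Z)$ and also conditionally on $(X,Z)$, so that it has conditional mean zero given either; the structural fact that the paths of $M(\mathcal{\tilde X};\cdot\given Z)$ restricted to $[0,t-)$ are supported on $\{s : T > s\}$, so that $\indic{T>s}$ may be inserted or deleted at will inside $\int_0^{t-}(\cdot)\,M(\mathcal{\tilde X};\d s\given Z)$; and the identity $W^*-1 = -\int_0^{T\wedge t-}\frac{1}{G^*(s\given Z)}M(\mathcal{\tilde X};\d s\given Z)$ already noted in the proof of Proposition~\ref{prop:bias}.

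For the \textit{ind} and \textit{out} types the calculation is short. Subtracting $u(\mathcal{X};\beta)$ from $u_\textup{ind}(\mathcal{\bar X};\beta)$, resp.\ $u_\textup{out}(\mathcal{\bar X};\beta)$, leaves $v_\textup{type}(\mathcal{X};s;\beta)(W^*-1)$, which the $W^*-1$ identity together with the $\indic{T>s}$ remark rewrites as $-\int_0^{t-}\frac{v_\textup{type}(\mathcal{X};s;\beta)}{G(s\given Z)}M(\mathcal{\tilde X};\d s\given Z)$; and setting $\mathcal{\bar X}_2 = \mathcal{\bar X}$ in $\dot u_\textup{type}$ and taking the expectation over $\mathcal{\bar X}_1$, the weighting identity collapses the inner integral to $\E(v_\textup{type}(\mathcal{X};s;\beta)\indic{T>s}\indic{Z=Z})/(\tilde S(s\given Z)p_Z(Z)) = e_\textup{type}(\beta;s\given Z)/G(s\given Z)$, so that $\E(\dot u_\textup{type}(\mathcal{\bar X}_1;\mathcal{\bar X};\beta)\given\mathcal{\bar X}) = \int_0^{t-}\frac{e_\textup{type}(\beta;s\given Z)}{G(s\given Z)}M(\mathcal{\tilde X};\d s\given Z)$; adding the two summands gives \eqref{eq:h_type_representation}. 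For the \textit{pse} type, the same subtraction gives $u_\textup{pse}(\mathcal{\bar X};\beta) - u(\mathcal{X};\beta) = A(\beta;X)Y(W^*-1) + A(\beta;X)\int_0^{t-}e^*(u\given Z)M(\mathcal{\tilde X};\d u\given Z)$, and substituting $e^*(u\given Z) = \E(Y\given T>u,Z)/G(u\given Z)$ yields $-\int_0^{t-}\frac{v_\textup{pse}(\mathcal{X};s;\beta)}{G(s\given Z)}M(\mathcal{\tilde X};\d s\given Z)$. It then remains to evaluate $\E(A(\beta;X_1)\ddot\theta(\mathcal{\tilde X}_1,\mathcal{\tilde X}_2)\given\mathcal{\bar X})$ with $\mathcal{\bar X}_2 = \mathcal{\bar X}$ using the two lines of $\ddot\theta$ from Lemma~\ref{lemma:influence_theta}: on the first line (integrated against $M(\mathcal{\tilde X}_2;\cdot\given Z_2)$) the expectation over $\mathcal{\bar X}_1$, via the weighting identity on the $W_1^*Y_1$-term, via $\E(A(\beta;X_1)R(\mathcal{\tilde X}_1;s;Z))/(\check S(s\given Z)p_Z(Z)) = \E(A(\beta;X)\given T>s,Z)$ on the at-risk term, and via $\E(A(\beta;X_1)M(\mathcal{\tilde X}_1;\d v\given Z)) = 0$ to kill the $\dot\Lambda$ term, produces the integrand $\E(A(\beta;X)Y\given T>s,Z)/G(s\given Z) - e^*(s\given Z)\E(A(\beta;X)\given T>s,Z) = e_\textup{pse}(\beta;s\given Z)/G(s\given Z)$; on the second line (integrated against $M(\mathcal{\tilde X}_1;\cdot\given Z_1)$) the integrand depends on $\mathcal{\bar X}_1$ only through $Z_1$, so conditioning on $(X_1,Z_1)$ and using $\E(M(\mathcal{\tilde X}_1;\cdot\given Z_1)\given X_1,Z_1) = 0$ shows it contributes zero. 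Summing again gives \eqref{eq:h_type_representation}.

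For the claim on $J_\textup{type}$, I would differentiate \eqref{eq:h_type_representation} in $\beta$ (the regularity conditions justify differentiating under the integral) to get $\frac{\partial}{\partial\beta\T}h_\textup{type}(\mathcal{\bar X};\beta) = \frac{\partial}{\partial\beta\T}u(\mathcal{X};\beta) - \int_0^{t-}\frac{\frac{\partial}{\partial\beta\T}(v_\textup{type}(\mathcal{X};s;\beta) - e_\textup{type}(\beta;s\given Z))}{G(s\given Z)}M(\mathcal{\tilde X};\d s\given Z)$; since for fixed $\mathcal{X}$ the integrand of the last term is a deterministic function of $s$, conditioning on $\mathcal{X}$ and using $\E(M(\mathcal{\tilde X};\cdot\given Z)\given\mathcal{X}) = 0$ makes its expectation vanish, and the product rule gives $\E(\frac{\partial}{\partial\beta\T}u(\mathcal{X};\beta)) = \E((\frac{\partial}{\partial\beta\T}A(\beta;X))(Y-\mu(\beta;X)) - A(\beta;X)\frac{\partial}{\partial\beta\T}\mu(\beta;X))$, which is \eqref{eq:J_general}, visibly independent of the type.

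I expect the main obstacle to be the \textit{pse} part of the first statement, i.e.\ the reduction of $\E(A(\beta;X_1)\ddot\theta(\mathcal{\tilde X}_1,\mathcal{\tilde X}_2)\given\mathcal{\bar X})$: besides keeping the inverse-probability-weighting martingale identities straight under possible ties (where $\check S$ carries the left limit $G(s-\given Z)$ and $R$ picks up an atom at $s$), one must spot the exact cancellation of the $\dot\Lambda$ term in the first line and recognize that the second line of $\ddot\theta$ integrates to zero even though $Z_1$ is random there. Everything else is routine manipulation with the identities collected in the first paragraph.
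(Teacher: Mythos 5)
Your proposal is correct and follows essentially the same route as the paper's proof: both start from $h_\textup{type}(\mathcal{\bar X};\beta) = u_\textup{type}(\mathcal{\bar X};\beta) + \E(\dot u_\textup{type}(\mathcal{\bar X}_1;\mathcal{\bar X};\beta)\given \mathcal{\bar X})$, use the identity $W^*-1 = -\int_0^{t-} G(s \given Z)^{-1} M(\mathcal{\tilde X}; \d s \given Z)$ together with $e^*(s\given z)=\E(Y\given T>s,Z=z)/G(s\given z)$, collapse the weighted terms via the same IPCW conditional-expectation computation (including the $R/\check S$ term with $\P(C\geq s\given \mathcal{X})=G(s-\given Z)$), and dispose of the $\dot\Lambda$ term and the second line of $\ddot\theta$ through $\E(M(\mathcal{\tilde X};\cdot\given Z)\given \mathcal{X})=0$, with the $J_\textup{type}$ claim obtained exactly as in the paper by noting the martingale part contributes nothing to the mean of the $\beta$-derivative. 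No gaps; the details you flag (ties, left limits, the vanishing of the symmetric term) are handled consistently with the paper's argument.
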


\begin{proof}
It can be seen that
\begin{equation}
W^* - 1 = - \int_0^{t-} \frac{1}{G(s \given Z)} M(\mathcal{\tilde X}; \d s \given Z)
\end{equation}
and 
\begin{equation}
e^*(s \given Z) = \frac{\E(Y \given T > s, Z)}{G(s \given Z)}
\end{equation}
under the assumption. This reveals how the structure concerning $u(\mathcal{X};\beta)$ and $v_\textup{type}(\mathcal{X};s;\beta)$ comes from $u_{\textup{type}}(\mathcal{\bar X}; \beta)$ under the assumption. The remaining part concerning $e_\textup{type}(\beta; s \given Z)$, on the other hand, comes from $\E(\dot u_{\textup{type}}(\mathcal{\bar X}_1; \mathcal{\bar X}; \beta) \given \mathcal{\bar X})$.
For types \textit{ind} and \textit{out}, the structure follows since, for any suitable function $f$ of the underlying information $\mathcal{X}$,
\begin{equation}
\begin{aligned}
& \phantom{{}=} \E\big( f(\mathcal{X}) W \frac{\indic{\tilde T > s} \indic{Z = z}}{\tilde S(s \given z) p_Z(z)} \big) \\
& = \E \big(f(\mathcal{X}) \frac{\indic{C \geq T \wedge t}}{G(T \wedge t - \given Z)} \given \tilde T > s, Z=z \big) \\
& = \E \big(f(\mathcal{X}) \frac{\P(C \geq T \wedge t \given C > s, Z=z, T)}{G(T \wedge t - \given Z)} \given \tilde T > s, Z=z \big) \\
&= \frac{\E(f(\mathcal{X}) \given \tilde T > s, Z=z)}{G(s \given z)} \\
&= \frac{\E(f(\mathcal{X}) \given T > s, Z=z)}{G(s \given z)},
\end{aligned}
\end{equation}
where the independent censoring assumption is used in the last two equalities.
To handle the \textit{pse} case, also note that $\E(M(\mathcal{\tilde X};s \given z) \given \mathcal{X}) = 0$ under the assumption, and thus $\E(\dot \Lambda(\mathcal{\tilde X};s \given z) \given \mathcal{X}) = 0$ as well.
This takes care of most of the terms from $\ddot \theta$ in $\dot u_\textit{pse}$. One remaining term, involving $A(\beta_0; X) W Y$, follows the structure from above. For the last remaining term, note that
\begin{equation}
\begin{aligned}
&\phantom{{}=} \frac{\E(A(\beta_0;X) e^*(s \given z) R(\mathcal{\tilde X};s;z))}{\check S(s \given z) p_Z(z)} \\
& = \frac{\E(A(\beta_0;X) e^*(s \given z) \indic{T > s} \indic{Z=z} \P(C \geq s \given \mathcal{X}))}{S(s \given z) G(s- \given z) p_Z(z)} \\
&= \E(A(\beta_0; X) e^*(s \given z) \given T > s, Z=z) \\
& = \frac{\E(A(\beta_0; X) \E(Y \given T > s, Z=z) \given T > s, Z=z)}{G(s \given z)}
\end{aligned}
\end{equation}
since $\P(C \geq s \given \mathcal{X}) = G(s- \given z)$ on $Z=z$ under the independent censoring assumption.
The structure of \eqref{eq:h_type_representation} and the fact that $\E(M(\mathcal{\tilde X};s \given z) \given \mathcal{X}) = 0$ reveals that 
\begin{equation}
J_{\textup{type}}(\beta) = \E(\frac{\partial}{\partial \beta} h_{\textup{type}}(\mathcal{\bar X}; \beta)) = \E(\frac{\partial}{\partial \beta} u(\mathcal{X}; \beta))
\end{equation}
which will have the desired expression and does not depend on the type of approach.
\end{proof}
It may also be noted that, in Lemma~\ref{lemma:rep_independent}, if the model $\E(Y \given X) = \mu(\beta_0;X)$ holds, then, at the true $\beta = \beta_0$, 
\begin{equation}
J_{\textup{type}}(\beta) = - \E\big(A(\beta; X) \frac{\partial}{\partial \beta} \mu(\beta; X) \big).
\end{equation}

The proofs of Theorem~\ref{thm:influence} and Theorem~\ref{thm:varest} are now presented.
\begin{proof}[Proof of Theorem~\ref{thm:influence}]
The true $\beta_0$ becomes the limiting solution and the result can be established by appealing to Lemma~\ref{lemma:estimator_exist} and Lemma~\ref{lemma:rep_independent}. Note how $M(\mathcal{\tilde X}; s \given z)$ is $\int_0^s \indic{T > u} M(\d u \given z)$ under the assumptions in the notation of the theorem. Under Assumption~\ref{ass:indep_cens}, and using Lemma~\ref{lemma:rep_independent}, this implies that $\beta_0$ is in fact solving $\E(h_{\textup{type}}(\bar{\mathcal{X}}; \beta_0)) = 0$ for each type since $\E(u(\mathcal{X};\beta_0)) = 0$ and $\E(M(\tilde{\mathcal{X}}; s \given Z) \given \mathcal{X}) = 0$ for all $s$.
Also, the matrix $J_{\textup{type}}(\beta_0)$ reduces to the relevant $J(\beta_0)$ under the model assumption $\E(Y \given X) = \mu(\beta_0; X)$ as just noted. Using Lemma~\ref{lemma:estimator_exist} now gives the desired result and also the expression of $\dot \beta_{\textup{type}}$ via \eqref{eq:h_type_representation}.
\end{proof}

\begin{proof}[Proof of Theorem~\ref{thm:varest}]
The $\frac{1}{n}\frac{\partial}{\partial \beta} U_{n,\textup{type}}(\beta)$, if evaluated at the true $\beta_0$, converge to $J(\beta_0)$ under the assumptions. 
An application of Lemma~\ref{lemma:contribution_approximation} will give a close approximation of $\frac{1}{n}\sum_{i=1}^n u_{i,n,\textup{type}}(\beta) u_{i,n,\textup{type}}(\beta)\T$ to a $U$-statistic of order 3 which will, according to the law of large numbers for $U$-statistics converge to its mean. That mean is $\E(u_{\textup{type}}(\mathcal{\bar X}; \beta)^{\otimes 2})$.
Following the approach of the proofs of Lemma~\ref{lemma:rep_independent} and Corollary~\ref{cor:as_variance_express} will lead to the desired expression at the true $\beta_0$.
Under the mentioned regularity conditions, which allow for the mentioned convergences to be uniform in an open neighborhood of the true $\beta_0$, evaluating at estimates that converge to the true $\beta_0$ will yield the same limit as at the true $\beta_0$.
The difference
\begin{equation}
    \Sigma_{\textup{type}}' - \Sigma_{\textup{type}} = \E(\int_0^{t-} \E(\phi_{\textup{type}}(s; T, D, X) \given T > s, Z=z)^{\otimes 2} \frac{S(s \given Z)}{G(s \given Z)} \Lambda(\d s \given Z))
\end{equation}
is non-negative definite necessarily and positive definite unless $\E(\phi_{\textup{type}}(s; T, D, X) \given T > s, Z=z) = 0$ for $\Lambda(\cdot \given z)$-almost all $s$ for almost all $z$ as claimed since $S(t \given Z) > 0$ almost surely by assumption.
\end{proof}

\end{document}